\newcommand{\I}{$\mathcal{I}\!$}
\begin{document}
\title{Byzantine Eventual Consistency and the Fundamental Limits of Peer-to-Peer Databases}

\author{Martin Kleppmann}
\orcid{0000-0001-7252-6958}
\affiliation{%
  \institution{University of Cambridge}
  \city{Cambridge}
  \country{UK}
}
\email{mk428@cst.cam.ac.uk}

\author{Heidi Howard}
\orcid{0000-0001-5256-7664}
\affiliation{%
  \institution{University of Cambridge}
  \city{Cambridge}
  \country{UK}
}
\email{hh360@cst.cam.ac.uk}

\begin{abstract}
Sybil attacks, in which a large number of adversary-controlled nodes join a network, are a concern for many peer-to-peer database systems, necessitating expensive countermeasures such as proof-of-work.
However, there is a category of database applications that are, by design, immune to Sybil attacks because they can tolerate arbitrary numbers of Byzantine-faulty nodes.
In this paper, we characterize this category of applications using a consistency model we call \emph{Byzantine Eventual Consistency} (BEC).
We introduce an algorithm that guarantees BEC based on Byzantine causal broadcast, prove its correctness, and demonstrate near-optimal performance in a prototype implementation.
%We also show that \I-confluence (invariant confluence) is a necessary and sufficient condition for the existence of such an algorithm.
\end{abstract}
\maketitle
\pagestyle{plain}

\section{Introduction}

Peer-to-peer systems are of interest to many communities for a number of reasons: their lack of central control by a single party can make them more resilient, and less susceptible to censorship and denial-of-service attacks than centralized services.
Examples of widely deployed peer-to-peer applications include file sharing~\cite{Pouwelse:2005}, scientific dataset sharing~\cite{Robinson:2018}, decentralized social networking~\cite{Tarr:2019}, cryptocurrencies~\cite{Nakamoto:2008}, and blockchains~\cite{Bano:2019}.

Many peer-to-peer systems are essentially replicated database systems, albeit often with an application-specific data model.
For example, in a cryptocurrency, the replicated state comprises the balance of each user's account; in BitTorrent~\cite{Pouwelse:2005}, it is the files being shared.
Some blockchains support more general data storage and smart contracts (essentially, deterministic stored procedures) that are executed as serializable transactions by a consensus algorithm.

The central challenge faced by peer-to-peer databases is that peers cannot be trusted because anybody in the world can add peers to the network.
Thus, we must assume that some subset of peers are malicious; such peers are also called \emph{Byzantine-faulty}, which means that they may deviate from the specified protocol in arbitrary ways.
Moreover, a malicious party may perform a \emph{Sybil attack}~\cite{Douceur:2002}: launching a large number of peers, potentially causing the Byzantine-faulty peers to outnumber the honest ones.

Several countermeasures against Sybil attacks are used.
Bitcoin popularized the concept of \emph{proof-of-work}~\cite{Nakamoto:2008}, in which a peer's voting power depends on the computational effort it expends.
Unfortunately, proof-of-work is extraordinarily expensive: it has been estimated that as of 2020, Bitcoin alone represents almost half of worldwide datacenter electricity use~\cite{deVries:2020}.
\emph{Permissioned} blockchains avoid this huge carbon footprint, but they have the downside of requiring central control over the peers that may join the system, undermining the principle of decentralization.
Other mechanisms, such as \emph{proof-of-stake}~\cite{Bano:2019}, are at present unproven.

The reason why permissioned blockchains must control membership is that they rely on Byzantine agreement, which assumes that at most $f$ nodes are Byzantine-faulty.
To tolerate $f$ faults, Byzantine agreement algorithms typically require at least $3f+1$ nodes~\cite{Castro:1999}.
% It is well established that without synchrony, Byzantine agreement is impossible if $n<3f+1$~\cite{Dwork:1988,Lamport:1982}.
If more than $f$ nodes are faulty, these algorithms can guarantee neither safety (agreement) nor liveness (progress).
Thus, a Sybil attack that causes the bound of $f$ faulty nodes to be exceeded can result in the system's guarantees being violated; for example, in a cryptocurrency, they could allow the same coin to be spent multiple times (a \emph{double-spending} attack).

This state of affairs raises the question: if Byzantine agreement cannot be achieved in the face of arbitrary numbers of Byzantine-faulty nodes, what properties \emph{can} be guaranteed in this case?

A system that tolerates arbitrary numbers of Byzantine-faulty nodes is immune to Sybil attacks: even if the malicious peers outnumber the honest ones, it is still able to function correctly.
This makes such systems of large practical importance: being immune to Sybil attacks means neither proof-of-work nor the central control of permissioned blockchains is required.

In this paper, we provide a precise characterization of the types of problems that can and cannot be solved in the face of arbitrary numbers of Byzantine-faulty nodes.
We do this by viewing peer-to-peer networks through the lens of distributed database systems and their consistency models.
Our analysis is based on using \emph{invariants}~-- predicates over database states~-- to express an application's correctness properties, such as integrity constraints.

Our key result is a theorem stating that it is possible for a peer-to-peer database to be immune to Sybil attacks if and only if all of the possible transactions are \emph{\I-confluent} (invariant confluent) with respect to all of the application's invariants on the database.
\I-confluence, defined in Section~\ref{sec:confluence}, was originally introduced for non-Byzantine systems~\cite{Bailis:2014}, and our result shows that it is also applicable in a Byzantine context.
Our result does not solve the problem of Bitcoin electricity consumption, because (as we show later) a cryptocurrency is not \I-confluent.
However, there is a wide range of applications that \emph{are} \I-confluent, and which can therefore be implemented in a permissionless peer-to-peer system without resorting to proof-of-work.
Our work shows how to do this.

Our contributions in this paper are as follows:
\begin{enumerate}
    \item We define a consistency model for replicated databases, called \emph{Byzantine eventual consistency} (BEC), which can be achieved in systems with arbitrary numbers of Byzantine-faulty nodes.
    \item We introduce replication algorithms that ensure BEC, and prove their correctness without bounding the number of Byzantine faults.
    Our approach first defines \emph{Byzantine causal broadcast}, a mechanism for reliably multicasting messages to a group of nodes, and then uses it for BEC replication.
    \item We evaluate the performance of a prototype implementation of our algorithms, and demonstrate that our optimized algorithm incurs only a small network communication overhead, making it viable for use in practical systems.
    \item We prove that \I-confluence is a necessary and sufficient condition for the existence of a BEC replication algorithm, and we use this result to determine which applications can be immune to Sybil attacks.
\end{enumerate}

\section{Background and Definitions}

We first introduce background required for the rest of the paper.

\subsection{Strong Eventual Consistency and CRDTs}\label{sec:crdts}

Eventual consistency is usually defined as: ``If no further updates are made, then eventually all replicas will be in the same state~\cite{Vogels:2009ca}.''
This is a very weak model: it does not specify when the consistent state will be reached, and the premise ``if no further updates are made'' may never be true in a system in which updates happen continuously.
To strengthen this model, Shapiro et al.~\cite{Shapiro:2011} introduce \emph{strong eventual consistency} (SEC), which requires that:

\begin{description}
\item[Eventual update:] If an update is applied by a correct replica, then all correct replicas will eventually apply that update.
\item[Convergence:] Any two correct replicas that have applied the same set of updates are in the same state (even if the updates were applied in a different order).
\end{description}

Read operations can be performed on any replica at any time, and they return that replica's current state at that point in time.

In the context of replicated databases, one way of achieving SEC is by executing a transaction at one replica, disseminating the updates from the transaction to the other replicas using a reliable broadcast protocol (e.g.\ a gossip protocol~\cite{Leitao:2009fi}), and applying the updates to each replica's state using a commutative function.
Let $S' = \mathrm{apply}(S, u)$ be the function that applies the set of updates $u$ to the replica state $S$, resulting in an updated replica state $S'$.
Then two sets of updates $u_1$ and $u_2$ commute if
\[ \forall S.\; \mathrm{apply}(\mathrm{apply}(S, u_1), u_2) = \mathrm{apply}(\mathrm{apply}(S, u_2), u_1). \]
Two replicas can apply the same commutative sets of updates in a different order, and still converge to the same state.

One technique for implementing such commutativity is to use \emph{Conflict-free Replicated Data Types} (\emph{CRDTs})~\cite{Shapiro:2011}.
These abstract datatypes are designed such that concurrent updates to their state commute, with built-in resolution policies for conflicting updates.
CRDTs have been used to implement a range of applications, such as key-value stores~\cite{Akkoorath2016Cure,Zawirski2015SwiftCloud}, multi-user collaborative text editors~\cite{Weiss:2009ht}, note-taking tools~\cite{vanHardenberg2020PushPin}, games~\cite{vanderLinde:2017fu}, CAD applications~\cite{Lv:2018ie}, distributed filesystems~\cite{Najafzadeh:2018bw,Tao:2015gd}, project management tools~\cite{Kleppmann2019localfirst}, and many others.
Several papers present techniques for achieving commutativity in different datatypes~\cite{Kleppmann:2017,Preguica:2018gi,Shapiro:2011wy,Weiss:2009ht}.

\subsection{Invariant confluence}\label{sec:confluence}

An \emph{invariant} is a predicate over replica states, i.e.\ a function $I(S)$ that takes a replica state $S$ and returns either $\mathsf{true}$ or $\mathsf{false}$.
Invariants can represent many types of consistency properties and constraints commonly found in databases, such as referential integrity, uniqueness, or restrictions on the value of a data item (e.g.\ requiring it to be non-negative).

Informally, a set of transactions is \I-confluent with regard to an invariant $I$ if different replicas can independently execute subsets of the transactions, each ensuring that $I$ is preserved, and we can be sure that the result of merging the updates from those transactions will still satisfy $I$.
More formally, let $T = \{T_1, \dots, T_n\}$ be the set of transactions executed by a system, and let $u_i$ be the updates resulting from the execution of $T_i$.
Assume that for all $i,j \in [1,n]$, if $T_i$ and $T_j$ were executed concurrently by different replicas (written $T_i \parallel T_j$), then updates $u_i$ and $u_j$ commute.
Then we say that $T$ is \I-confluent with regard to invariant $I$ if:
\begin{align*}
    \forall i,j \in [1,n].\; \forall S.\; & (T_i \parallel T_j) \wedge I(S) \wedge I(\mathrm{apply}(S, u_i)) \wedge I(\mathrm{apply}(S, u_j)) \\
    & \Longrightarrow I(\mathrm{apply}(\mathrm{apply}(S, u_i), u_j)).
\end{align*}
As $u_i$ and $u_j$ commute, this also implies $I(\mathrm{apply}(\mathrm{apply}(S, u_j), u_i))$.

As an example, consider a uniqueness constraint, i.e.\ $I(S)=\mathsf{true}$ if there is no more than one data item in $S$ for which a particular attribute has a given value.
If $T_1$ and $T_2$ are both transactions that create data items with the same value in that attribute, then $\{T_1,T_2\}$ is not \I-confluent with regard to $I$: each of $T_1$ and $T_2$ individually preserves the constraint, but the combination of the two does not.

As a second example, say that $I(S)=\mathsf{true}$ if every user in $S$ has a non-negative account balance.
If $T_1$ and $T_2$ are both transactions that increase the same user's account balance, then $\{T_1,T_2\}$ is \I-confluent with regard to $I$, because the sum of the two positive balance updates cannot cause the balance to become negative (assuming no overflow).
However, if $T_1$ and $T_2$ decrease the same user's account balance, then they are not \I-confluent with regard to $I$: any one of the transactions may be fine, but the sum of the two could cause the balance to become negative.

\I-confluence was introduced by Bailis et al.~\cite{Bailis:2014} in the context of non-Byzantine systems, along with a proof that a set of transactions can be executed in a \emph{coordination-free} manner if and only if those transactions are \I-confluent with regard to all of the application's invariants.
``Coordination-free'' means, loosely speaking, that one replica does not have to wait for a response from any other replica before it can commit a transaction.

\subsection{System model}\label{sec:system-model}

Our system consists of a finite set of replicas, which may vary over time.
Any replica may execute transactions.
Each replica is either \emph{correct} or \emph{faulty}, but a correct replica does not know whether another replica is faulty.
A correct replica follows the specified protocol, whereas a faulty replica may deviate from the protocol in arbitrary ways (i.e.\ it is Byzantine-faulty~\cite{Lamport:1982}).
Faulty replicas may collude and attempt to deceive correct replicas; we model such worst-case behavior by assuming a malicious adversary who controls the behavior of all faulty replicas.
We allow any subset of replicas to be faulty.
We consider all replicas to be equal peers, making no distinction e.g.\ between clients and servers.
Replicas may crash and recover; as long as a crashed replica eventually recovers, and otherwise follows the protocol, we still call it ``correct''.

We assume that each replica has a distinct private key that can be used for digital signatures, and that the corresponding public key is known to all replicas.
We assume that no replica knows the private key of another replica, and thus signatures cannot be forged.
Unlike in a permissioned blockchain, there is no need for central control over the set of public keys in the system: for example, one replica may add another replica to the system by informing the existing replicas about the new replica's public key.

Replicas communicate by sending messages over pairwise (bidirectional, unicast) network links.
We assume that all messages sent over these links are signed with the sender's private key, and the recipient ignores messages with invalid signatures.
Thus, even if the adversary can tamper with network traffic, it can only cause message loss but not impersonate a correct replica.
For simplicity, our algorithms assume that network links are reliable, i.e.\ that a sent message is eventually delivered, provided that neither sender nor recipient crashes.
This can easily be achieved by detecting and retransmitting any lost messages.

We make no timing assumptions: messages may experience unbounded delay in the network (for example, due to retransmissions during temporary network partitions), replicas may execute at different speeds, and we do not assume any clock synchronization (i.e.\ we assume an \emph{asynchronous} system model).
We do assume that a replica has a timer that allows it to perform some task approximately periodically, such as retransmitting unacknowledged messages, without requiring exact time measurement.

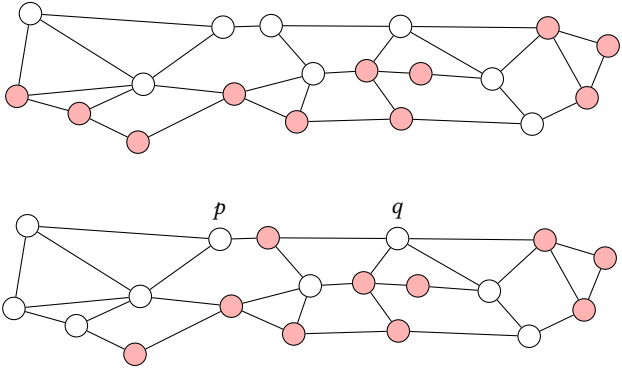
\begin{figure}
    \centering
    \begin{tikzpicture}
\tikzstyle{correct}=[draw,circle,inner sep=3pt]
\tikzstyle{faulty}=[draw,circle,inner sep=3pt,fill=red!30]
\node [faulty] (a) at (0.04,0.70) {};
\node [correct] (b) at (0.22,1.80) {};
\node [faulty] (c) at (0.87,0.47) {};
\node [correct] (d) at (1.72,0.86) {};
\node [faulty] (e) at (1.65,0.09) {};
\node [correct] (f) at (2.78,1.62) {};
\node [faulty] (g) at (2.93,0.73) {};
\node [correct] (h) at (3.42,1.64) {};
\node [faulty] (i) at (3.76,0.36) {};
\node [correct] (j) at (3.98,1.00) {};
\node [faulty] (k) at (4.69,1.04) {};
\node [correct] (l) at (5.14,1.63) {};
\node [faulty] (m) at (5.15,0.40) {};
\node [faulty] (n) at (5.41,1.00) {};
\node [correct] (o) at (6.36,0.93) {};
\node [correct] (p) at (6.89,0.33) {};
\node [faulty] (q) at (7.10,1.61) {};
\node [faulty] (r) at (7.62,0.68) {};
\node [faulty] (s) at (7.90,1.37) {};
\draw (a) -- (b);
\draw (a) -- (c);
\draw (a) -- (d);
\draw (b) -- (d);
\draw (b) -- (f);
\draw (c) -- (d);
\draw (c) -- (e);
\draw (d) -- (f);
\draw (d) -- (g);
\draw (e) -- (g);
\draw (f) -- (h);
\draw (g) -- (i);
\draw (g) -- (j);
\draw (h) -- (j);
\draw (h) -- (l);
\draw (i) -- (j);
\draw (i) -- (m);
\draw (j) -- (k);
\draw (k) -- (l);
\draw (k) -- (m);
\draw (k) -- (n);
\draw (l) -- (o);
\draw (l) -- (q);
\draw (m) -- (p);
\draw (n) -- (o);
\draw (o) -- (p);
\draw (o) -- (q);
\draw (p) -- (r);
\draw (q) -- (s);
\draw (q) -- (r);
\draw (r) -- (s);
\end{tikzpicture}

\vspace{0.5cm}
\begin{tikzpicture}
\tikzstyle{correct}=[draw,circle,inner sep=3pt]
\tikzstyle{faulty}=[draw,circle,inner sep=3pt,fill=red!30]
\node [correct] (a) at (0.04,0.70) {};
\node [correct] (b) at (0.22,1.80) {};
\node [correct] (c) at (0.87,0.47) {};
\node [correct] (d) at (1.72,0.86) {};
\node [faulty] (e) at (1.65,0.09) {};
\node [correct] (f) at (2.78,1.62) {};
\node [faulty] (g) at (2.93,0.73) {};
\node [faulty] (h) at (3.42,1.64) {};
\node [faulty] (i) at (3.76,0.36) {};
\node [correct] (j) at (3.98,1.00) {};
\node [faulty] (k) at (4.69,1.04) {};
\node [correct] (l) at (5.14,1.63) {};
\node [faulty] (m) at (5.15,0.40) {};
\node [faulty] (n) at (5.41,1.00) {};
\node [correct] (o) at (6.36,0.93) {};
\node [correct] (p) at (6.89,0.33) {};
\node [faulty] (q) at (7.10,1.61) {};
\node [faulty] (r) at (7.62,0.68) {};
\node [faulty] (s) at (7.90,1.37) {};
\draw (a) -- (b);
\draw (a) -- (c);
\draw (a) -- (d);
\draw (b) -- (d);
\draw (b) -- (f);
\draw (c) -- (d);
\draw (c) -- (e);
\draw (d) -- (f);
\draw (d) -- (g);
\draw (e) -- (g);
\draw (f) -- (h);
\draw (g) -- (i);
\draw (g) -- (j);
\draw (h) -- (j);
\draw (h) -- (l);
\draw (i) -- (j);
\draw (i) -- (m);
\draw (j) -- (k);
\draw (k) -- (l);
\draw (k) -- (m);
\draw (k) -- (n);
\draw (l) -- (o);
\draw (l) -- (q);
\draw (m) -- (p);
\draw (n) -- (o);
\draw (o) -- (p);
\draw (o) -- (q);
\draw (p) -- (r);
\draw (q) -- (s);
\draw (q) -- (r);
\draw (r) -- (s);
\path (f.north) node [above] {$p$};
\path (l.north) node [above] {$q$};
\end{tikzpicture}
    \caption{Above: correct replicas (white) form a connected component. Below: faulty replicas (red) are able to prevent communication between correct replicas $p$ and $q$.}
    \label{fig:connected}
\end{figure}

Not all pairs of replicas are necessarily connected with a network link.
However, we must assume that in the graph of replicas and network links, the correct replicas form a single connected component, as illustrated in Figure~\ref{fig:connected}.
This assumption is necessary because if two correct replicas can only communicate via faulty replicas, then no algorithm can guarantee data exchange between those replicas, as the adversary can always block communication (this is known as an \emph{eclipse attack}~\cite{Singh:2004}).
The easiest way of satisfying this assumption is to connect each replica to every other.

\section{Byzantine Eventual Consistency}\label{sec:byzantine-crdts}

We now define Byzantine Eventual Consistency (BEC), and prove that \I-confluence is both necessary and sufficient to implement it.

\subsection{Definition of BEC}\label{sec:bec-definition}

We say that a replica \emph{generates} a set of updates if those updates are the result of that replica executing a committed transaction.
We say a replicated database provides \emph{Byzantine Eventual Consistency} if it satisfies the following properties in the system model of \S~\ref{sec:system-model}:

\begin{description}
\item[Self-update:] If a correct replica generates an update, it applies that update to its own state.
\item[Eventual update:] For any update applied by a correct replica, all correct replicas will eventually apply that update.
\item[Convergence:] Any two correct replicas that have applied the same set of updates are in the same state.
\item[Atomicity:] When a correct replica applies an update, it atomically applies all of the updates resulting from the same transaction.
\item[Authenticity:] If a correct replica applies an update that is labeled as originating from replica $s$, then that update was generated by replica $s$.
%\item[Non-duplication:] A correct replica does not apply the same update more than once.
\item[Causal consistency:] If a correct replica generates or applies update $u_1$ before generating update $u_2$, then all correct replicas apply $u_1$ before $u_2$.
\item[Invariant preservation:] The state of a correct replica always satisfies all of the application's declared invariants.
\end{description}

Read operations can be performed at any time, and their result reflects the replica state that results from applying only updates made by committed transactions.
In other words, we require \emph{read committed} transaction isolation~\cite{Adya:2000}, but we do not assume serializable isolation.
BEC is a strengthening of SEC (\S~\ref{sec:crdts}); the main differences are that SEC assumes a non-Byzantine system, and SEC does not require atomicity, causal consistency, or invariant preservation.

BEC ensures that all correct replicas converge towards the same shared state, even if they also communicate with any number of Byzantine-faulty replicas.
Essentially, BEC ensures that faulty replicas cannot permanently corrupt the state of correct replicas.
As is standard in Byzantine systems, the properties above only constrain the behavior of correct replicas, since we can make no assumptions or guarantees about the behavior or state of faulty replicas.

\subsection{Existence of a BEC algorithm}\label{sec:theorem}

For the following theorem we define a replication algorithm to be \emph{fault-tolerant} if it is able to commit transactions while at most one replica is crashed or unreachable.
In other words, in a system with $r$ replicas, a replica is able to commit a transaction after receiving responses from up to $r-2$ replicas (all but itself and one unavailable replica).
We adopt this very weak definition of fault tolerance since it makes the following theorem stronger; the theorem also holds for algorithms that tolerate more than one fault.

We are now ready to prove our main theorem:
\begin{theorem}\label{theorem}
Assume an asynchronous\footnote{This theorem also holds for partially synchronous~\cite{Dwork:1988} systems, in which network latency is only temporarily unbounded, but eventually becomes bounded. However, for simplicity, we assume an asynchronous system in this proof.} system with a finite set of replicas, of which any subset may be Byzantine-faulty.
Assume there is a known set of invariants that the data on each replica should satisfy.
Then there exists a fault-tolerant algorithm that ensures BEC if and only if the set of all transactions executed by correct replicas is \I-confluent with respect to each of the invariants.
\end{theorem}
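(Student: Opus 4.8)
The plan is to prove the two directions separately. Since the statement is an equivalence, I would establish sufficiency by exhibiting a concrete fault-tolerant algorithm that achieves BEC whenever the transactions are \I-confluent, and necessity by a contradiction argument showing that any fault-tolerant BEC algorithm forces \I-confluence. A useful preliminary observation, which I would make first, is that the Convergence property of BEC already entails that any two concurrently applied update sets commute: if replica $p$ applies $u_i$ then $u_j$ and replica $q$ applies $u_j$ then $u_i$, both have applied the same set $\{u_i, u_j\}$ and so must reach the same state, giving $\mathrm{apply}(\mathrm{apply}(S, u_i), u_j) = \mathrm{apply}(\mathrm{apply}(S, u_j), u_i)$. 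Hence the commutativity precondition in the definition of \I-confluence comes for free in any BEC system, and the only way to violate \I-confluence is through the invariant implication.

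For sufficiency, I would exhibit an explicit algorithm: disseminate each committed transaction's updates via Byzantine causal broadcast, and to commit a transaction $T$ at a replica, execute $T$ against the local state $S$ to obtain updates $u$, check that $I(\mathrm{apply}(S, u))$ holds for every invariant $I$, and broadcast $u$ only if all checks pass (otherwise abort). Self-update, Authenticity, Atomicity (broadcast each transaction's update set as one indivisible unit), Causal consistency, and Eventual update would follow directly from the guarantees of Byzantine causal broadcast, and Convergence from commutative (CRDT) merge together with causal delivery. The one substantial property is Invariant preservation, which I would prove by induction over the causal history, viewed as a DAG of update sets: the initial state satisfies every $I$, and when a replica incorporates a new transaction's updates, I would invoke the pairwise \I-confluence property—together with a sub-induction extending it from pairs of concurrent transactions to the whole set of concurrent transactions already merged—to show the merged state still satisfies $I$. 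The common-ancestor state against which each transaction was validated plays the role of $S$ in the \I-confluence definition.

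For necessity, I would argue by contradiction. Suppose a fault-tolerant BEC algorithm exists but the transaction set is not \I-confluent with respect to some invariant $I$. By the preliminary observation, this yields a ``bad triple'': a reachable state $S$ and concurrent transactions $T_i \parallel T_j$ with commuting updates such that $I(S)$, $I(\mathrm{apply}(S, u_i))$, and $I(\mathrm{apply}(S, u_j))$ all hold, yet $I(\mathrm{apply}(\mathrm{apply}(S, u_i), u_j))$ is false. I would then build an adversarial execution with two correct replicas $p$ and $q$, directly linked so that they form a connected component, and all remaining $r-2$ replicas Byzantine-faulty. Both $p$ and $q$ start in state $S$. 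The adversary equivocates: to $p$ the faulty replicas report a view consistent with $T_j$ never having occurred, and to $q$ a view consistent with $T_i$ never having occurred. Because the algorithm is fault-tolerant, $p$ can commit $T_i$ after hearing from the $r-2$ faulty replicas while treating $q$ as the single unavailable replica, and symmetrically $q$ commits $T_j$; each local invariant check passes because $I(\mathrm{apply}(S, u_i))$ and $I(\mathrm{apply}(S, u_j))$ hold. Finally I let $p$ and $q$ reconcile over their link: by Eventual update, Atomicity, and Causal consistency each eventually applies both $u_i$ and $u_j$, and by Convergence both reach $\mathrm{apply}(\mathrm{apply}(S, u_i), u_j)$, which Invariant preservation requires to satisfy $I$—contradicting the bad triple.

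I expect the main obstacle to be the indistinguishability step inside the necessity construction: arguing rigorously that the algorithm has no choice but to commit both $T_i$ and $T_j$ concurrently. The force of the argument rests on two points I would make precise. First, the weak fault-tolerance assumption means no correct replica may block waiting for one specific other replica, so $p$ cannot insist on hearing from $q$ before committing. Second, Byzantine equivocation makes $p$'s local execution indistinguishable from a genuine execution in which $q$ crashed before ever running $T_j$ (and symmetrically for $q$); since the algorithm commits $T_i$ in that genuine execution, it must also commit it here. Making all replicas other than $p$ and $q$ faulty is precisely what lets the adversary sustain these two inconsistent views simultaneously, which a purely crash-faulty model could not guarantee. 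A secondary technical point to handle carefully is the pair-to-set extension of \I-confluence in the sufficiency induction.
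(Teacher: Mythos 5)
Your overall decomposition coincides with the paper's. In particular, your necessity argument is essentially the paper's own proof of the forward direction: two correct replicas $p$ and $q$ starting in the bad-triple state $S$, all remaining replicas Byzantine, with fault tolerance (commit after responses from up to $r-2$ replicas) forcing $p$ and $q$ to commit $T_i$ and $T_j$ without hearing from each other, after which the \emph{eventual update} property drives both into $\mathrm{apply}(\mathrm{apply}(S,u_i),u_j)$, contradicting \emph{invariant preservation}. The paper obtains indistinguishability slightly more simply---by severely delaying the $p$--$q$ messages under asynchrony while the faulty replicas merely withhold information about the conflicting transaction---whereas you invoke explicit equivocation and a crash-indistinguishability argument; both are sound, and your preliminary observation that Convergence already yields commutativity of concurrent update sets is correct (the paper builds that commutativity into the definition of \I-confluence rather than deriving it).

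The genuine gap is in your sufficiency construction: you check $I(\mathrm{apply}(S,u))$ only at the \emph{committing} replica, before broadcast. In this system model faulty replicas are senders too, and nothing prevents a Byzantine replica from broadcasting a correctly signed message whose updates violate an invariant---\emph{authenticity} only ties an update to its signer, it does not certify safety. Correct replicas would then deliver and apply such updates, destroying invariant preservation, and your induction over the causal DAG has no footing for these messages, since the premise $I(\mathrm{apply}(S,u))$ was never established for them. This is exactly why the paper's Algorithm~\ref{fig:algorithm3} performs the safety check at \emph{delivery} time on every correct replica (line~\ref{line:safety-check}) and, crucially, uses a \emph{state-independent} notion of safety (Table~\ref{tab:safety}) rather than your state-dependent check against the sender's local state. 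State-independence matters for a second, subtler reason that also affects correct senders: in the inductive step, the \I-confluence premises $I(\mathrm{apply}(S,u_i))$ and $I(\mathrm{apply}(S,u_j))$ must hold at the \emph{common causal-prefix} state $S$, whereas a sender-side check validates each update only against the sender's own state, which generally differs from that prefix (it is the sender's full causal history at commit time); a state-independent safety predicate yields the premise at every state and, moreover, evaluates identically at all correct replicas (the paper's Lemma~\ref{lemma:safety-check}), which your convergence argument implicitly requires so that all correct replicas agree on \emph{which} updates to apply. Your necessity direction stands as is; the sufficiency direction needs the delivery-time, state-independent check to go through.
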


\begin{proof}
For the backward direction, we assume that the set of transactions executed by correct replicas is \I-confluent with respect to all of the invariants.
Then the algorithm defined in \S~\ref{sec:algorithm} ensures BEC, as proved in Appendices~\ref{sec:proof} and~\ref{sec:bec-proof}, demonstrating the existence of an algorithm that ensures BEC.

For the forward direction, we assume that the set of transactions $T$ executed by correct replicas is not \I-confluent with respect to at least one invariant $I$. We must then show that under this assumption, there is no fault-tolerant algorithm that ensures BEC and preserves all invariants in the presence of an arbitrary number of Byzantine-faulty replicas.
We do this by assuming that such an algorithm exists and deriving a contradiction.

If $T$ is not \I-confluent with respect to $I$, then there must exist concurrently executed transactions $T_i, T_j \in T$ that violate \I-confluence.
That is, $u_i$ and $u_j$ are the sets of updates generated by $T_i$ and $T_j$ respectively, and there exists a replica state $S$ such that
\[ I(S) \wedge I(\mathrm{apply}(S, u_i)) \wedge I(\mathrm{apply}(S, u_j)) \wedge \neg I(\mathrm{apply}(\mathrm{apply}(S, u_i), u_j)). \]

Let $R$ be the set of replicas.
Let $p$ be the correct replica that executes $T_i$, let $q$ be the correct replica that executes $T_j$, and assume that all of the remaining replicas $R \setminus \{p,q\}$ are Byzantine-faulty.
Assume $p$ and $q$ are both in the state $S$ before executing $T_i$ and $T_j$.

Now we let $p$ and $q$ execute $T_i$ and $T_j$ concurrently.
The transaction execution and replication algorithm may perform arbitrary computation and communication among replicas.
However, since the system is asynchronous, messages may be subject to unbounded network latency.
Assume that in this execution, messages between $p$ and $q$ are severely delayed, while messages between any other pairs of replicas are received quickly.

Since the replication algorithm is fault-tolerant, replica $p$ must eventually commit $T_i$ without receiving any message from $q$, and similarly $q$ must eventually commit $T_j$ without receiving any message from $p$.
Both transactions may communicate with any subset of $R \setminus \{p,q\}$, but since all of these replicas are Byzantine-faulty, they may fail to inform $p$ about $q$'s conflicting transaction $T_j$, and fail to inform $q$ about $p$'s conflicting transaction $T_i$.
Thus, $T_i$ and $T_j$ are both eventually committed.

After both $T_i$ and $T_j$ have been committed, communication between $p$ and $q$ becomes fast again.
Due to the \emph{eventual update} property of BEC, $u_i$ must eventually be applied at $q$, and $u_j$ must eventually be applied at $p$, resulting in the state $\mathrm{apply}(\mathrm{apply}(S, u_i), u_j)$ on both replicas, in which $I$ is violated.
This contradicts our earlier assumption that the algorithm always preserves invariants.

Since we did not make any assumptions about the internal structure of the algorithm, this argument shows that no fault-tolerant algorithm exists that guarantees BEC in this setting.
\end{proof}

\subsection{Discussion}\label{sec:bec-discussion}

Theorem~\ref{theorem} shows us that an application can be implemented in a system with arbitrarily many Byzantine-faulty replicas if and only if its transactions are \I-confluent with respect to its invariants.
It is both a negative (impossibility) and a positive (existence) result.

As an example of impossibility, consider a cryptocurrency, which must reduce a user's account balance when a user makes a payment, and which must ensure that a user does not spend more money than they have.
As we saw in \S~\ref{sec:confluence}, payment transactions from the same payer are not \I-confluent with regard to the account balance invariant, and thus a cryptocurrency cannot be immune to Sybil attacks.
For this reason, it needs Sybil countermeasures such as proof-of-work or centrally managed permissions.

On the other hand, many of the CRDT applications listed in \S~\ref{sec:crdts} only require \I-confluent transactions and invariants.
Theorem~\ref{theorem} shows that it is possible to implement such applications without any Sybil countermeasures, because it is possible to ensure BEC and preserve all invariants regardless of how many Byzantine-faulty replicas are in the system.

Even in applications that are not fully \I-confluent, our result shows that the \I-confluent portions of the application can be implemented without incurring the costs of Sybil countermeasures, and a Byzantine consensus algorithm need only be used for those transactions that are not \I-confluent with respect to the application's invariants.
For example, an auction could aggregate bids in an \I-confluent manner, and only require consensus to decide the winning bid.
As another example, most of the transactions and invariants in the TPC-C benchmark are \I-confluent~\cite{Bailis:2014}.
% TODO: citation for the auction example? I saw it in a talk somewhere, can't remember which...

\section{Background on broadcast}\label{sec:broadcast}

Before we introduce our algorithms for ensuring BEC in \S~\ref{sec:algorithm}, we first give some additional background and highlight some of the difficulties of working in a Byzantine system model.

\subsection{Reliable, causal, and total order broadcast}\label{sec:broadcast-properties}

We implement BEC replication by first defining a \emph{broadcast} protocol, and then layering replication on top of it.
Several different forms of broadcast have been defined in the literature~\cite{Cachin:2011wt}, and we now introduce them briefly.
Broadcast protocols are defined in terms of two primitives, \emph{broadcast} and \emph{deliver}.
Any replica (or node) in the system may broadcast a message, and we want all replicas to deliver messages that were broadcast.

\emph{Reliable broadcast} must satisfy the following properties:

\begin{description}
\item[Self-delivery:] If a correct replica $p$ broadcasts a message $m$, then $p$ eventually delivers $m$.
\item[Eventual delivery:] If a correct replica delivers a message $m$, then all correct replicas will eventually deliver $m$.
\item[Authenticity:] If a correct replica delivers a message $m$ with sender $s$, then $m$ was broadcast by $s$.
\item[Non-duplication:] A correct replica does not deliver the same message more than once.
\end{description}

Reliable broadcast does not constrain the order in which messages may be delivered.
In many applications the delivery order is important, so we can strengthen the model.
For example, \emph{total order broadcast} must satisfy the four properties of reliable broadcast, and additionally the following property:

\begin{description}
\item[Total order:] If a correct replica delivers message $m_1$ before delivering message $m_2$, then all correct replicas must deliver $m_1$ before delivering $m_2$.
\end{description}

Total order broadcast ensures that all replicas deliver the same messages in the same order~\cite{Defago:2004ji}.
It is a very powerful model, since it can for example implement serializable transactions (by encoding each transaction as a stored procedure in a message, and executing them in the order they are delivered at each replica) and \emph{state machine replication}~\cite{Schneider:1990} (providing linearizable replicated storage).

In a Byzantine system, total order broadcast is implemented by Byzantine agreement algorithms.
An example is a blockchain, in which the totally ordered chain of blocks corresponds to the sequence of delivered messages~\cite{Bano:2019}.
However, Byzantine agreement algorithms must assume a maximum number of faulty replicas (see \S~\ref{sec:relwork}), and hence require Sybil countermeasures.
To ensure eventual delivery they must also assume partial synchrony~\cite{Dwork:1988}.

%State machine replication treats every replica as a deterministic state machine, where the inputs are commands.
%If all replicas observe the same commands in the same order, they all go through the same sequence of state transitions, resulting in the same final state.

\emph{Causal broadcast}~\cite{Birman:1991el,Cachin:2011wt} must satisfy the four properties of reliable broadcast, and additionally the following ordering property:

\begin{description}
\item[Causal order:] If a correct replica broadcasts or delivers $m_1$ before broadcasting message $m_2$, then all correct replicas must deliver $m_1$ before delivering $m_2$.
\end{description}

Causal order is based on the observation that when a replica broadcasts a message, that message may depend on prior messages seen by that replica (these are \emph{causal dependencies}).
It then imposes a partial order on messages: $m_1$ must be delivered before $m_2$ if $m_2$ has a causal dependency on $m_1$.
Concurrently sent messages, which do not depend on each other, can be delivered in any order.

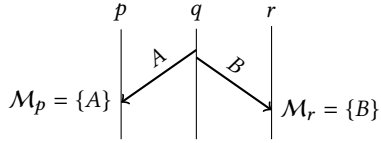
\begin{figure}
    \centering
    \begin{tikzpicture}
% Space between timelines
\def\width{1}
% Message delay
\def\delay{0.7}

% Timelimes
\node (p-start) at (0, 0.5) {$p$};
\node (p-end)   at (0, -1.3) {};
\node (q-start) at (\width, 0.5) {$q$};
\node (q-end)   at (\width, -1.3) {};
\node (r-start) at (\width*2, 0.5) {$r$};
\node (r-end)   at (\width*2, -1.3) {};
\draw (p-start) -- (p-end);
\draw (q-start) -- (q-end);
\draw (r-start) -- (r-end);

% Messages
\draw[thick,->] (\width, 0) to node [above,pos=0.4,sloped] {$A$} (0, -\delay) node [left] {$\mathcal{M}_p = \{A\}$};

\draw[thick,->] (\width, -0.1) to node [above,pos=0.4,sloped] {$B$} (\width*2, -0.1-\delay) node [right] {$\mathcal{M}_r = \{B\}$};

\end{tikzpicture}

% \begin{tikzpicture}
% % Timelimes
% \node (p-start) at (0, 0.5) {$p$};
% \node (p-end)   at (0, -1.8) {};
% \node (q-start) at (2, 0.5) {$q$};
% \node (q-end)   at (2, -1.8) {};
% \node (r-start) at (4, 0.5) {$r$};
% \node (r-end)   at (4, -1.8) {};
% \draw (p-start) -- (p-end);
% \draw (q-start) -- (q-end);
% \draw (r-start) -- (r-end);

% % Messages
% \draw[thick,->] (2, 0) to node [above] {$A$} (0, -1.2) node [left] {$\mathcal{M}_p = \{A\}$};

% \draw[thick,->] (2, -0.1) to node [above] {$B$} (4, -1.3) node [right] {$\mathcal{M}_r = \{B\}$};

% \end{tikzpicture}
    \captionsetup{width=.95\linewidth}
    \caption{Byzantine-faulty replica $q$ sends conflicting messages to correct replicas $p$ and $r$.
    The sets $\mathcal{M}_p$ and ${M}_r$ do not converge.}
    \label{fig:trivial1}
\end{figure}

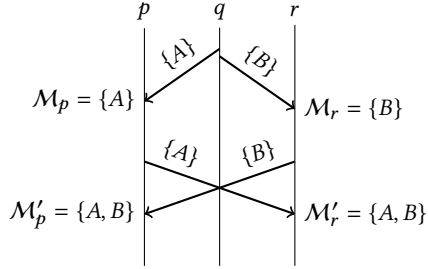
\begin{figure}
    \centering
    \begin{tikzpicture}

% Space between timelines
\def\width{1}
% Message delay
\def\delay{0.7}

% Timelimes
\node (p-start) at (0, 0.5) {$p$};
\node (p-end)   at (0, -3) {};
\node (q-start) at (\width, 0.5) {$q$};
\node (q-end)   at (\width, -3) {};
\node (r-start) at (\width*2, 0.5) {$r$};
\node (r-end)   at (\width*2, -3) {};
\draw (p-start) -- (p-end);
\draw (q-start) -- (q-end);
\draw (r-start) -- (r-end);

% Messages
\draw[thick,->] (\width, 0) to node [above,pos=0.4,sloped] {$\{A\}$} (0, -\delay) node [left] {$\mathcal{M}_p = \{A\}$};

\draw[thick,->] (\width, -0.1) to node [above,pos=0.4,sloped] {$\{B\}$} (\width*2, -\delay-0.1) node [right] {$\mathcal{M}_r = \{B\}$};

\draw[thick,->] (0, -1.5) to node [above,pos=0.2,sloped] {$\{A\}$} (\width*2, -1.5-\delay) node [right] {$\mathcal{M}_r' = \{A,B\}$};

\draw[thick,->] (\width*2, -1.5) to node [above,pos=0.2,sloped] {$\{B\}$} (0, -1.5-\delay) node [left] {$\mathcal{M}_p' = \{A,B\}$};

\end{tikzpicture}

% \begin{tikzpicture}
% % Timelimes
% \node (p-start) at (0, 0.5) {$p$};
% \node (p-end)   at (0, -3.4) {};
% \node (q-start) at (2, 0.5) {$q$};
% \node (q-end)   at (2, -3.4) {};
% \node (r-start) at (4, 0.5) {$r$};
% \node (r-end)   at (4, -3.4) {};
% \draw (p-start) -- (p-end);
% \draw (q-start) -- (q-end);
% \draw (r-start) -- (r-end);

% % Messages
% \draw[thick,->] (2, 0) to node [above] {$\{A\}$} (0, -1.2) node [left] {$\mathcal{M}_p = \{A\}$};

% \draw[thick,->] (2, -0.1) to node [above] {$\{B\}$} (4, -1.3) node [right] {$\mathcal{M}_r = \{B\}$};

% \draw[thick,->] (0, -1.7) to node [above,pos=0.25] {$\{A\}$} (4, -2.9) node [right] {$\mathcal{M}_r' = \{A,B\}$};

% \draw[thick,->] (4, -1.7) to node [above,pos=0.25] {$\{B\}$} (0, -2.9) node [left] {$\mathcal{M}_p' = \{A,B\}$};

% \end{tikzpicture}
    \captionsetup{width=.95\linewidth}
    \caption{As correct replicas $p$ and $r$ reconcile their sets of messages, they converge to the same set $\mathcal{M}_p' = \mathcal{M}_r' = \{A,B\}$.}
    \label{fig:trivial2}
\end{figure}

\subsection{Na\"{\i}ve broadcast algorithms}\label{sec:naive-broadcast-algorithms}

The simplest broadcast algorithm is as follows: every time a replica wants to broadcast a message, it delivers that message to itself, and also sends that message to each other replica via a pairwise network link, re-transmitting until it is acknowledged.
However, this algorithm does not provide the \emph{eventual delivery} property in the face of Byzantine-faulty replicas, as shown in Figure~\ref{fig:trivial1}: a faulty replica $q$ may send two different messages $A$ and $B$ to correct replicas $p$ and $r$, respectively; then $p$ never delivers $B$ and $r$ never delivers $A$.

To address this issue, replicas $p$ and $r$ must communicate with each other (either directly, or indirectly via other correct replicas).
Let $\mathcal{M}_p$ and $\mathcal{M}_r$ be the set of messages delivered by replicas $p$ and $r$, respectively.
Then, as shown in Figure~\ref{fig:trivial2}, $p$ can send its entire set $\mathcal{M}_p$ to $r$, and $r$ can send $\mathcal{M}_r$ to $p$, so that both replicas can compute $\mathcal{M}_p \cup \mathcal{M}_r$, and deliver any new messages.
Pairs of replicas can thus periodically \emph{reconcile} their sets of delivered messages.

Adding this reconciliation process to the protocol ensures reliable broadcast.
However, this algorithm is very inefficient: when replicas periodically reconcile their state, we can expect that at the start of each round of reconciliation their sets of messages already have many elements in common.
Sending the entire set of messages to each other transmits a large amount of data unnecessarily.

An efficient reconciliation algorithm should determine which messages have already been delivered by both replicas, and transmit only those messages that are unknown to the other replica.
For example, replica $p$ should only send $\mathcal{M}_p \setminus \mathcal{M}_r$ to replica $r$, and replica $r$ should only send $\mathcal{M}_r \setminus \mathcal{M}_p$ to replica $p$.
The algorithm should also complete in a small number of round-trips and minimize the size of messages sent.
These goals rule out other na\"{\i}ve approaches too: for example, instead of sending all messages in $\mathcal{M}_p$, replica $p$ could send the hash of each message in $\mathcal{M}_p$, which can be used by other replicas to determine which messages they are missing; this is still inefficient, as the message size is $O(|\mathcal{M}_{p}|)$.

\begin{figure*}
    \centering
    \begin{tikzpicture}
% Message delay
\def\delay{0.6}

% Timelimes
\node (p-start) at (0, 0.5) {$p$};
\node (p-end)   at (0, -2.8) {};
\node (q-start) at (2.5, 0.5) {$q$};
\node (q-end)   at (2.5, -2.8) {};
\node (r-start) at (5, 0.5) {$r$};
\node (r-end)   at (5, -2.8) {};
\draw (p-start) -- (p-end);
\draw (q-start) -- (q-end);
\draw (r-start) -- (r-end);

% Messages
\draw[thick,->] (2.5, 0) to node [above,sloped] {$\{(0,1,0): A\}$} (0, -\delay) node [left] {$\mathit{vec} = (0,1,0),\; \mathcal{M}_p = \{A\}$};

\draw[thick,->] (2.5, -0.1) to node [above,sloped] {$\{(0,1,0): B\}$} (5, -0.1-\delay) node [right] {$\mathit{vec} = (0,1,0),\; \mathcal{M}_r = \{B\}$};

\draw[thick,->] (0, -1.2) to node [above,pos=0.25,sloped] {$(0,1,0)$} (5, -1.7-\delay) node [right] {$\mathit{vec} = (0,1,0),\; \mathcal{M}_r = \{B\}$};

\draw[thick,->] (5, -1.2) to node [above,pos=0.25,sloped] {$(0,1,0)$} (0, -1.7-\delay) node [left] {$\mathit{vec} = (0,1,0),\; \mathcal{M}_p = \{A\}$};

\end{tikzpicture}
    \caption{Replicas $p$ and $r$ believe they are in the same state because their vector timestamps are the same, when in fact their sets of messages are inconsistent due to $q$'s faulty behavior.}
    \label{fig:vectorclocks}
\end{figure*}
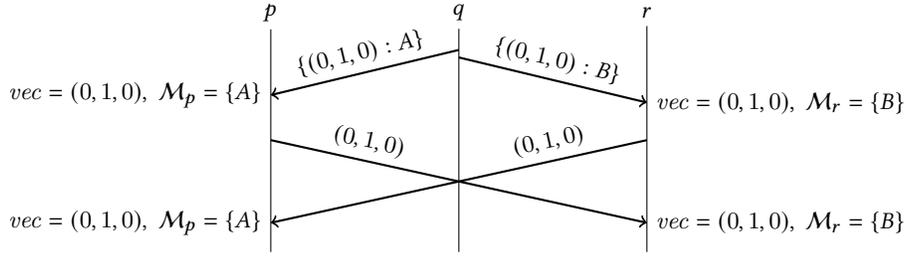

\subsection{Vector clocks}\label{sec:vectorclocks}

Non-Byzantine causal broadcast algorithms often rely on \emph{vector clocks} to determine which messages to send to each other, and how to order them~\cite{Birman:1991el,Schwarz:1994}.
However, vector clocks are not suitable in a Byzantine setting.
The problem is illustrated in Figure~\ref{fig:vectorclocks}, where faulty replica $q$ generates two different messages, $A$ and $B$, with the same vector timestamp $(0, 1, 0)$.

In a non-Byzantine system, the three components of the timestamp represent the number of distinct messages seen from $p$, $q$, and $r$ respectively.
Thus, $p$ and $r$ should be able to reconcile their sets of messages by first sending each other their latest vector timestamps as a concise summary of the set of messages they have seen.
However, in Figure~\ref{fig:vectorclocks} this approach fails due to $q$'s earlier faulty behavior: $p$ and $r$ detect that their vector timestamps are equal, and thus incorrectly believe that they are in the same state, even though their sets of messages are different.
Thus, vector clocks can be corrupted by a faulty replica.
A causal broadcast algorithm in a Byzantine system must not be vulnerable to such corruption.

\section{Algorithms for BEC}\label{sec:algorithm}

We now demonstrate how to implement BEC and therefore preserve \I-confluent invariants in a system with arbitrarily many Byzantine faults.
We begin by first presenting two causal broadcast algorithms (\S~\ref{sec:algorithm1} and \S~\ref{sec:algorithm2}), and then defining a replication algorithm on top (\S~\ref{sec:implementing-bec}).
At the core of our protocol is a reconciliation algorithm that ensures two replicas have delivered the same set of broadcast messages, in causal order.
The reconciliation is efficient in the sense that when two correct replicas communicate, they only exchange broadcast messages that the other replica has not already delivered.

\subsection{Definitions}\label{sec:algorithm-definitions}

Let $\mathcal{M}$ be the set of broadcast messages delivered by some replica.
$\mathcal{M}$ is a set of triples $(v, \mathit{hs}, \mathit{sig})$, where $v$ is any value, $\mathit{sig}$ is a digital signature over $(v, \mathit{hs})$ using the sender's private key, and $\mathit{hs}$ is a set of hashes produced by a cryptographic hash function $H(\cdot)$.
We assume that $H$ is collision-resistant, i.e.\ that it is computationally infeasible to find distinct $x$ and $y$ such that $H(x) = H(y)$.
This assumption is standard in cryptography, and it can easily be met by using a strong hash function such as SHA-256~\cite{SHA2}.

Let $A, B \in \mathcal{M}$, where $B = (v, \mathit{hs}, \mathit{sig})$ and $H(A) \in \mathit{hs}$.
Then we call $A$ a \emph{predecessor} of $B$, and $B$ a \emph{successor} of $A$.
Predecessors are also known as \emph{causal dependencies}.

Define a graph with a vertex for each message in $\mathcal{M}$, and a directed edge from each message to each of its predecessors.
We can assume that this graph is acyclic because the presence of a cycle would imply knowledge of a collision in the hash function.
Figure~\ref{fig:example-dags} shows examples of such graphs.

Let $\mathrm{succ}^1(\mathcal{M}, m)$ be the set of successors of message $m$ in $\mathcal{M}$, let $\mathrm{succ}^2(\mathcal{M}, m)$ be the successors of the successors of $m$, and so on, and let $\mathrm{succ}^*(\mathcal{M}, m)$ be the transitive closure:
\begin{align*}
\mathrm{succ}^i(\mathcal{M}, m) &=
\begin{cases}
\{(v, \mathit{hs}, \mathit{sig}) \in \mathcal{M} \mid H(m) \in \mathit{hs}\} & \text{ for } i=1 \\
\bigcup_{m' \in \mathrm{succ}^1(\mathcal{M}, m)} \mathrm{succ}^{i-1}(\mathcal{M}, m') & \text{ for } i>1
\end{cases} \\
\mathrm{succ}^*(\mathcal{M}, m) &= \bigcup_{i \ge 1} \mathrm{succ}^i(\mathcal{M}, m)
\end{align*}
We define the set of predecessors of $m$ similarly:
\begin{align*}
\mathrm{pred}^i(\mathcal{M}, m) &=
\begin{cases}
\{ m' \in \mathcal{M} \mid m = (v, \mathit{hs}, \mathit{sig}) \wedge H(m') \in \mathit{hs}\} & \text{ for } i=1 \\
\bigcup_{m' \in \mathrm{pred}^1(\mathcal{M}, m)} \mathrm{pred}^{i-1}(\mathcal{M}, m') & \text{ for } i>1
\end{cases} \\
\mathrm{pred}^*(\mathcal{M}, m) &= \bigcup_{i \ge 1} \mathrm{pred}^i(\mathcal{M}, m)
\end{align*}
Let $\mathrm{heads}(\mathcal{M})$ denote the set of hashes of those messages in $\mathcal{M}$ that have no successors:
\[ \mathrm{heads}(\mathcal{M}) = \{H(m) \mid m \in \mathcal{M} \wedge \mathrm{succ}^1(\mathcal{M}, m) = \{\}\;\}. \]

\subsection{Algorithm for Byzantine Causal Broadcast}\label{sec:algorithm1}

Define a \emph{connection} to be a logical grouping of a bidirectional sequence of related request/response messages between two replicas (in practice, it can be implemented as a TCP connection).
Our reconciliation algorithm runs in the context of a connection.

When a correct replica wishes to broadcast a message with value $v$, it executes lines~\ref{line:broadcast-begin}--\ref{line:broadcast-end} of Algorithm~\ref{fig:algorithm}: it constructs a message $m$ containing the current heads and a signature, delivers $m$ to itself, adds $m$ to the set of locally delivered messages $\mathcal{M}$, and sends $m$ via all connections.
However, this is not sufficient to ensure eventual delivery, since some replicas may be disconnected, and faulty replicas might not correctly follow this protocol.

To ensure eventual delivery, we assume that replicas periodically attempt to reconnect to each other and reconcile their sets of messages to discover any missing messages.
If two replicas are not able to connect directly, they can still exchange messages by periodically reconciling with one or more correct intermediary replicas (as stated in \S~\ref{sec:system-model}, we assume that such intermediaries exist).

\algblockdefx{On}{EndOn}[1]{\textbf{on} #1 \textbf{do}}{\textbf{end on}}
\algblockdefx{Atomic}{EndAtomic}{\textbf{atomically do}}{\textbf{end atomic}}

\begin{algorithm}[p]
    \begin{algorithmic}[1]
    \On{request to broadcast $v$}\label{line:broadcast-begin}
        \State $\mathit{hs} := \mathrm{heads}(\mathcal{M})$\label{line:broadcast-heads}
        \State $\mathit{sig} := \text{signature over } (v, \mathit{hs}) \text{ using this replica's private key}$
        \State $m := (v, \mathit{hs}, \mathit{sig})$
        \Atomic
            \State \textbf{deliver} $m$ to self\label{line:deliver-local}
            \State $\mathcal{M} := \mathcal{M} \cup \{m\}$\label{line:update-m-local}
        \EndAtomic
        \State \textbf{send} $\langle\mathsf{msgs}: \{m\}\rangle$ via all active connections\label{line:eager-send}
    \EndOn\label{line:broadcast-end}
    \State
    \On{connecting to another replica, and periodically} \label{line:connect-begin}
        \State // connection-local variables
        \State $\mathit{sent} := \{\};\; \mathit{recvd} := \{\};\; \mathit{missing} := \{\};\; \mathcal{M}_\mathsf{conn} := \mathcal{M}$ \label{line:init}
        \State \textbf{send} $\langle\mathsf{heads}: \mathrm{heads}(\mathcal{M}_\mathsf{conn})\rangle$ via current connection \label{line:send-heads}
    \EndOn \label{line:connect-end}
    \State
    \On{receiving $\langle\mathsf{heads}: \mathit{hs}\rangle$ via a connection} \label{line:recv-heads}
        \State \Call{HandleMissing}{$\{h \in \mathit{hs} \mid \nexists m \in \mathcal{M}_\mathsf{conn}.\; H(m) = h\}$} \label{line:heads-missing}
    \EndOn\label{line:recv-heads-end}
    \State
    \On{receiving $\langle\mathsf{msgs}: \mathit{new}\rangle$ via a connection} \label{line:recv-msgs}
        \State \mbox{$\mathit{recvd} := \mathit{recvd} \,\cup\, \{(v, \mathit{hs}, \mathit{sig}) \in \mathit{new} \mid \mathrm{check}((v, \mathit{hs}), \mathit{sig}) \}$}\label{line:msgs-recvd}
        \State $\mathit{unresolved} := \{h \mid \exists (v, \mathit{hs}, \mathit{sig}) \in \mathit{recvd}.\; h \in \mathit{hs} \;\wedge$ \label{line:msgs-missing}
        \State \hspace*{7.5em}$\nexists m \in (\mathcal{M}_\mathsf{conn} \cup \mathit{recvd}).\; H(m) = h\}$
        \State \Call{HandleMissing}{$\mathit{unresolved}$} \label{line:msgs-handle-missing}
    \EndOn\label{line:recv-msgs-end}
    \State
    \On{receiving $\langle\mathsf{needs}: \mathit{hashes}\rangle$ via a connection} \label{line:recv-needs}
        \State $\mathit{reply} := \{m \in \mathcal{M}_\mathsf{conn} \mid H(m) \in \mathit{hashes} \,\wedge\, m \notin \mathit{sent}\}$ \label{line:needs-reply}
        \State $\mathit{sent} := \mathit{sent} \cup \mathit{reply}$
        \State \textbf{send} $\langle\mathsf{msgs}: \mathit{reply}\rangle$ via current connection \label{line:send-msgs}
    \EndOn\label{line:end-needs}
    \State
    \Function{HandleMissing}{$\mathit{hashes}$}
        \State $\mathit{missing} := (\mathit{missing} \cup \mathit{hashes}) \setminus \{H(m) \mid m \in \mathit{recvd}\}$
        \If{$\mathit{missing} = \{\}$} \label{line:missing-empty}
            \Atomic
            \State $\mathit{msgs} := \mathit{recvd} \setminus \mathcal{M}$
            \State $\mathcal{M} := \mathcal{M} \cup \mathit{recvd}$ \label{line:update-m}
            \State \textbf{deliver} all of the messages in $\mathit{msgs}$ \label{line:deliver}
            \State \hspace{3.5em}in topologically sorted order
            \EndAtomic
            \State \textbf{send} $\langle\mathsf{msgs}: \mathit{msgs}\rangle$ via all other connections\label{line:eager-relay}
            \State \textbf{reconciliation complete} \label{line:finish}
        \Else
            \State \textbf{send} $\langle\mathsf{needs}: \mathit{missing}\rangle$ via current connection \label{line:send-missing}
        \EndIf
    \EndFunction
    \end{algorithmic}
    \caption{A Byzantine causal broadcast algorithm.}\label{fig:algorithm}
\end{algorithm}

\begin{figure}[p]
    \begin{subfigure}{\columnwidth}
    \centering
    \begin{tikzpicture}[node distance=0.9cm]

% nodes
\node (a) {$A$};
\node (b) [right of=a] {$B$};
\node (c) [above right of=b] {$C$};
\node (d) [right of=c] {$D$};
\node (e) [right of=d,draw,circle] {$E$};
\node (j) [below right of=b] {$J$};
\node (k) [right of=j] {$K$};
\node (l) [right of=k] {$L$};
\node (m) [right of=l,draw,circle] {$M$};

% arrows
\draw[<-] (a) -- (b);
\draw[<-] (b) -- (c);
\draw[<-] (c) -- (d);
\draw[<-] (d) -- (e);
\draw[<-] (j) -- (e);
\draw[<-] (b) -- (j);
\draw[<-] (j) -- (k);
\draw[<-] (k) -- (l);
\draw[<-] (l) -- (m);
\end{tikzpicture}
    \caption{Messages at $p$ before reconciliation.}
    \end{subfigure}\\[1em]
    \begin{subfigure}{\columnwidth}
    \centering
    \begin{tikzpicture}[node distance=0.9cm]

% nodes
\node (a) {$A$};
\node (b) [right of=a] {$B$};
\node (f) [above right of=b] {$F$};
\node (g) [right of=f,circle,draw] {$G$};
\node (j) [below right of=b] {$J$};
\node (k) [right of=j,circle,draw] {$K$};

% arrows
\draw[<-] (a) -- (b);
\draw[<-] (b) -- (f);
\draw[<-] (f) -- (g);
\draw[<-] (b) -- (j);
\draw[<-] (j) -- (k);
\end{tikzpicture}
    \caption{Messages at $q$ before reconciliation.}
    \end{subfigure}\\[1em]
    \begin{subfigure}{\columnwidth}
    \centering
    \begin{tikzpicture}[node distance=0.9cm]

% nodes
\node (a) {$A$};
\node (b) [right of=a] {$B$};
\node (c) [above right of=b] {$C$};
\node (d) [right of=c] {$D$};
\node (e) [right of=d,draw,circle] {$E$};
\node (j) [right of=b] {$J$};
\node (k) [right of=j] {$K$};
\node (l) [right of=k] {$L$};
\node (m) [right of=l,draw,circle] {$M$};
\node (f) [below right of=b] {$F$};
\node (g) [right of=f,draw,circle] {$G$};

% arrows
\draw[<-] (a) -- (b);
\draw[<-] (b) -- (c);
\draw[<-] (c) -- (d);
\draw[<-] (d) -- (e);
\draw[<-] (b) -- (j);
\draw[<-] (j) -- (e);
\draw[<-] (j) -- (k);
\draw[<-] (k) -- (l);
\draw[<-] (l) -- (m);
\draw[<-] (b) -- (f);
\draw[<-] (f) -- (g);
\end{tikzpicture}
    \caption{Messages at $p$ and $q$ after reconciliation.}
    \end{subfigure}
    \caption{Example DAGs of delivered messages. Arrows represent a message referencing the hash of its predecessor, and heads (messages with no successors) are marked with circles.}
    \label{fig:example-dags}
\end{figure}
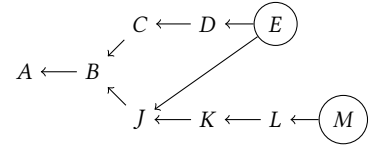
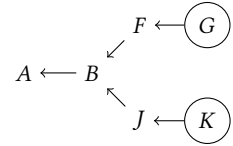
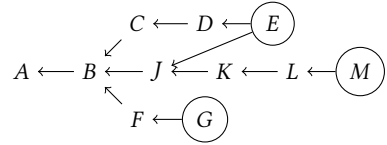

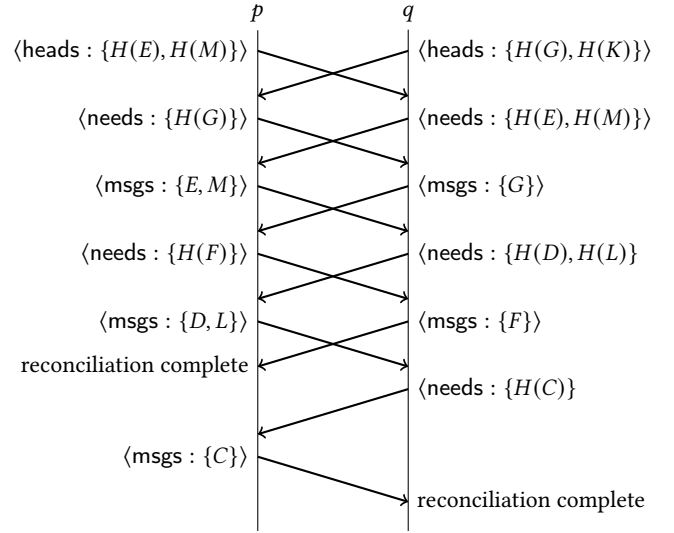
\begin{figure}[p]
    \begin{tikzpicture}
\newlength{\width}\setlength{\width}{2cm}
\newlength{\latency}\setlength{\latency}{0.9cm}
\newlength{\replydelay}\setlength{\replydelay}{0.3cm}
\tikzstyle{msg}=[thick,->]

% Timelimes
\node (p1-start) at (0, 0.5cm) {$p$};
\node (p2-start) at (\width, 0.5cm) {$q$};
\node (p1-end) at (0,-6.5cm) {};
\node (p2-end) at (\width,-6.5cm) {};
\draw (p1-start) -- (p1-end);
\draw (p2-start) -- (p2-end);

%\node (p-start) at (-0.5cm, \width) {$p$};
%\node (q-start) at (-0.5cm, 0) {$q$};
%\node (p-end) at (14cm, \distance) {};
%\node (q-end) at (14cm, 0) {};
%\draw (p-start) -- (p-end);
%\draw (q-start) -- (q-end);

% Messages
\draw[msg] (0,0) node[left] {$\langle\mathsf{heads}: \{H(E),H(M)\}\rangle$} -- (\width,\replydelay-\latency);
\draw[msg] (\width,0) node[right] {$\langle\mathsf{heads}: \{H(G),H(K)\}\rangle$} -- (0,\replydelay-\latency);

\draw[msg] (\width, -\latency) node[right] {$\langle\mathsf{needs}: \{H(E),H(M)\}\rangle$} -- (0,\replydelay-2.0\latency);
\draw[msg] (0, -\latency) node[left] {$\langle\mathsf{needs}: \{H(G)\}\rangle$} -- (\width,\replydelay-2.0\latency);

\draw[msg] (0, -2.0\latency) node[left] {$\langle\mathsf{msgs}: \{E,M\}\rangle$} -- (\width,\replydelay-3.0\latency);
\draw[msg] (\width, -2.0\latency) node[right] {$\langle\mathsf{msgs}: \{G\}\rangle$} -- (0,\replydelay-3.0\latency);

\draw[msg] (\width, -3.0\latency) node[right] {$\langle\mathsf{needs}: \{H(D),H(L)\}$} -- (0,\replydelay-4.0\latency);
\draw[msg] (0, -3.0\latency) node[left] {$\langle\mathsf{needs}: \{H(F)\}\rangle$} -- (\width,\replydelay-4.0\latency);

\draw[msg] (0, -4.0\latency) node[left] {$\langle\mathsf{msgs}: \{D,L\}\rangle$} -- (\width,\replydelay-5.0\latency);
\draw[msg] (\width, -4.0\latency) node[right] {$\langle\mathsf{msgs}: \{F\}\rangle$} -- (0,\replydelay-5.0\latency) node[left] {reconciliation complete};

\draw[msg] (\width, -5.0\latency) node[right] {$\langle\mathsf{needs}: \{H(C)\}$} -- (0,\replydelay-6.0\latency);

\draw[msg] (0, -6.0\latency) node[left] {$\langle\mathsf{msgs}: \{C\}\rangle$} -- (\width,\replydelay-7.0\latency) node[right] {reconciliation complete};

\end{tikzpicture}
\centering
    \caption{Requests/responses sent in the course of running the reconciliation process in Algorithm~\ref{fig:algorithm} with the example in Figure~\ref{fig:example-dags}.}
    \label{fig:messages}
\end{figure}

\begin{algorithm*}
    \begin{algorithmic}[1]
    \On{connecting to replica $q$, and periodically}\Comment{Replaces lines \ref{line:connect-begin}--\ref{line:connect-end} of Algorithm \ref{fig:algorithm}\hspace*{5cm}~}
        \State $\mathit{sent} := \{\};\; \mathit{recvd} := \{\};\; \mathit{missing} := \{\};\; \mathcal{M}_\mathsf{conn} := \mathcal{M}$ \Comment{connection-local variables\hspace*{5cm}~}
        \State $\mathit{oldHeads} := \Call{LoadHeads}{q}$\label{line:load-heads}
        \State $\mathit{filter} := \textsc{MakeBloomFilter}(\Call{MessagesSince}{\mathit{oldHeads}})$\label{line:make-bloom}
        \State \textbf{send} $\langle\mathsf{heads}: \mathrm{heads}(\mathcal{M}_\mathsf{conn}),\, \mathsf{oldHeads}: \mathit{oldHeads},\, \mathsf{filter}: \mathit{filter}\rangle$ \label{line:a2-send-heads}
    \EndOn
    \State
    \On{receiving $\langle\mathsf{heads}: \mathit{hs},\, \mathsf{oldHeads}: \mathit{oldHeads},\, \mathsf{filter}: \mathit{filter}\rangle$}\Comment{Replaces lines \ref{line:recv-heads}--\ref{line:recv-heads-end}\hspace*{5cm}~}\label{line:a2-recv-heads}
        \State $\mathit{bloomNegative} := \{m \in \Call{MessagesSince}{\mathit{oldHeads}} \mid \neg\Call{BloomMember}{\mathit{filter}, m}\}$\label{line:bloom-member}
        \State $\mathit{reply} := \left(\mathit{bloomNegative} \,\cup\, \bigcup_{m \in \mathit{bloomNegative}} \mathrm{succ}^*(\mathcal{M}_\mathsf{conn}, m)\right) \setminus \mathit{sent}$\label{line:bloom-succ}
        \If{$\mathit{reply} \neq \{\}$}
            \State $\mathit{sent} := \mathit{sent} \cup \mathit{reply}$
            \State \textbf{send} $\langle\mathsf{msgs}: \mathit{reply}\rangle$ \label{line:a2-heads-reply}
        \EndIf
        \State \Call{HandleMissing}{$\{h \in \mathit{hs} \mid \nexists m \in \mathcal{M}_\mathsf{conn}.\; H(m) = h\}$} \label{line:a2-heads-missing}
    \EndOn
    \State
    \Function{MessagesSince}{$\mathit{oldHeads}$}\label{line:msg-since-begin}
        \State $\mathit{known} := \{m \in \mathcal{M}_\mathsf{conn} \mid H(m) \in \mathit{oldHeads}\}$
        \State \textbf{return} $\mathcal{M}_\mathsf{conn} \setminus \left(\mathit{known} \,\cup\, \bigcup_{m \in \mathit{known}} \mathrm{pred}^*(\mathcal{M}_\mathsf{conn}, m)\right)$
    \EndFunction\label{line:msg-since-end}
    \end{algorithmic}
    \caption{Optimizing Algorithm~\ref{fig:algorithm} to reduce the number of round-trips.}\label{fig:algorithm2}
\end{algorithm*}

We illustrate the operation of the reconciliation algorithm using the example in Figure~\ref{fig:example-dags}; the requests/responses sent in the course of the execution are shown in Figure~\ref{fig:messages}.
Initially, when a connection is established between two replicas, they send each other their heads (Algorithm~\ref{fig:algorithm}, line~\ref{line:send-heads}).
In the example of Figure~\ref{fig:example-dags}, $p$ sends $\langle\mathsf{heads}: \{H(E),H(M)\}\rangle$ to $q$, while $q$ sends $\langle\mathsf{heads}: \{H(G),H(K)\}\rangle$ to $p$.

Each replica also initializes variables $\mathit{sent}$ and $\mathit{recvd}$ to contain the set of messages sent to/received from the other replica within the scope of this particular connection, $\mathit{missing}$ to contain the set of hashes for which we currently lack a message, and $\mathcal{M}_\mathsf{conn}$ to contain a read-only snapshot of this replica's set of messages $\mathcal{M}$ at the time the connection is established (line~\ref{line:init}).
In practice, this snapshot can be implemented using snapshot isolation~\cite{Berenson:1995}.

A replica may concurrently execute several instances of this algorithm using several connections; each connection then has a separate copy of the variables $\mathit{sent}$, $\mathit{recvd}$, $\mathit{missing}$, and $\mathcal{M}_\mathsf{conn}$, while $\mathcal{M}$ is a global variable that is shared between all connections.
Each connection thread executes independently, except for the blocks marked \emph{atomically}, which are executed only by one thread at a time on a given replica.
$\mathcal{M}$ should be maintained in durable storage, while the other variables may be lost in case of a crash.

On receiving the heads from the other replica (line~\ref{line:recv-heads}), the recipient checks whether the recipient's $\mathcal{M}_\mathsf{conn}$ contains a matching message for each hash.
If any hashes are unknown, it replies with a $\mathsf{needs}$ request for the messages matching those hashes (lines~\ref{line:heads-missing} and \ref{line:send-missing}).
In our running example, $p$ needs $H(G)$, while $q$ needs $H(E)$ and $H(M)$.
A replica responds to such a $\mathsf{needs}$ request by returning all the matching messages in a $\mathsf{msgs}$ response (lines~\ref{line:recv-needs}--\ref{line:end-needs}).

On receiving $\mathsf{msgs}$, we first discard any broadcast messages that are not correctly signed (line~\ref{line:msgs-recvd}): the function $\mathrm{check}(m, s)$ returns $\mathsf{true}$ if $s$ is a valid signature over message $m$ by a legitimate replica in the system, and $\mathsf{false}$ otherwise.
For each correctly signed message we then inspect the hashes.
If any predecessor hashes do not resolve to a known message in $\mathcal{M}_\mathsf{conn}$ or $\mathit{recvd}$, the replica sends another $\mathsf{needs}$ request with those hashes (lines~\ref{line:msgs-missing}--\ref{line:msgs-handle-missing}).
In successive rounds of this protocol, the replicas work their way from the heads along the paths of predecessors, until they reach the common ancestors of both replicas' heads.

Eventually, when there are no unresolved hashes, we update the global set $\mathcal{M}$ to reflect the messages we have delivered, perform a topological sort of the graph of received messages to put them in causal order, deliver them to the application in that order, and conclude the protocol run (lines~\ref{line:missing-empty}--\ref{line:finish}).
Once a replica completes reconciliation (by reaching line~\ref{line:finish}), it can conclude that its current set of delivered messages is a superset of the set of delivered messages on the other replica at the start of reconciliation.

When a message $m$ is broadcast, it is also sent as $\langle\mathsf{msgs}: \{m\}\rangle$ on line~\ref{line:eager-send}, and the recipient treats it the same as $\mathsf{msgs}$ received during reconciliation (lines~\ref{line:recv-msgs}--\ref{line:recv-msgs-end}).
Sending messages in this way is not strictly necessary, as the periodic reconciliations will eventually deliver such messages, but broadcasting them eagerly can reduce latency.
Moreover, when a recipient delivers messages, it may also choose to eagerly relay them to other replicas it is connected to, without waiting for the next reconciliation (line~\ref{line:eager-relay}); this also reduces latency, but may result in a replica redundantly receiving messages that it already has.
The literature on gossip protocols examines in detail the question of when replicas should forward messages they receive~\cite{Leitao:2009fi}, while considering trade-offs of delivery latency and bandwidth use; we leave a detailed discussion out of scope for this paper.

We prove in Appendix~\ref{sec:proof} that this algorithm implements all five properties of causal broadcast.
Even though Byzantine-faulty replicas may send arbitrarily malformed messages, a correct replica will not deliver messages without a complete predecessor graph.
Any messages delivered by one correct replica will eventually reach every other correct replica through reconciliations.
After reconciliation, both replicas have delivered the same set of messages.

\subsection{Reducing the number of round trips}\label{sec:algorithm2}

A downside of Algorithm~\ref{fig:algorithm} is that the number of round trips can be up to the length of the longest path in the predecessor graph, making it slow when performing reconciliation over a high-latency network.
We now show how to reduce the number of round-trips using Bloom filters~\cite{Bloom:1970} and a small amount of additional state.

Note that Algorithm~\ref{fig:algorithm} does not store any information about the outcome of the last reconciliation with a particular replica; if two replicas periodically reconcile their states, they need to discover each other's state from scratch on every protocol run.
As per \S~\ref{sec:system-model} we assume that communication between replicas is authenticated, and thus a replica knows the identity of the other replica it is communicating with.
We can therefore record the outcome of a protocol run with a particular replica, and use that information in the next reconciliation with the same replica.
We do this by adding the following instruction after line~\ref{line:update-m} of Algorithm~\ref{fig:algorithm}, where $q$ is the identity of the current connection's remote replica:
\[ \textsc{StoreHeads}(q, \mathrm{heads}(\mathcal{M}_\mathsf{conn} \cup \mathit{recvd})) \]
which updates a key-value store in durable storage, associating the value $\mathrm{heads}(\mathcal{M}_\mathsf{conn} \cup \mathit{recvd})$ with the key $q$ (overwriting any previous value for that key if appropriate).
We use this information in Algorithm~\ref{fig:algorithm2}, which replaces the ``on connecting'' and ``on receiving heads'' functions of Algorithm~\ref{fig:algorithm}, while leaving the rest of Algorithm~\ref{fig:algorithm} unchanged.

First, when replica $p$ establishes a connection with replica $q$, $p$ calls $\textsc{LoadHeads}(q)$ to load the heads from the previous reconciliation with $q$ from the key-value store (Algorithm~\ref{fig:algorithm2}, line~\ref{line:load-heads}).
This function returns the empty set if this is the first reconciliation with $q$.
In Figure~\ref{fig:example-dags}, the previous reconciliation heads might be $\{H(B)\}$.

In lines~\ref{line:msg-since-begin}--\ref{line:msg-since-end} of Algorithm~\ref{fig:algorithm2} we find all of the delivered messages that were added to $\mathcal{M}$ since this last reconciliation (i.e.\ all messages that are not among the last reconciliation's heads or their predecessors), and on line~\ref{line:make-bloom} we construct a Bloom filter~\cite{Bloom:1970} containing those messages.
A Bloom filter is a space-efficient data structure for testing set membership.
It is an array of $m$ bits that is initially all zero; in order to indicate that a certain element is in the set, we choose $k$ bits to set to 1 based on the hash of the element.
To test whether an element is in the set, we check whether all $k$ bits for the hash of that element are set to 1; if so, we say that the element is in the set.
This procedure may produce false positives because it is possible that all $k$ bits were set to 1 due to different elements, not due to the element being checked.
The false-positive probability is a function of the number of elements in the set, the number of bits $m$, and the number of bits $k$ that we set per element~\cite{Bloom:1970,Bose:2008,Christensen:2010}.

We assume $\textsc{MakeBloomFilter}(S)$ creates a Bloom filter from set $S$ and $\textsc{BloomMember}(F,s)$ tests if the element $s$ is a member of the Bloom filter $F$.
In the example of Figure~\ref{fig:example-dags}, $p$'s Bloom filter would contain $\{C, D, E, J, K, L, M\}$, while $q$'s filter contains $\{F, G, J, K\}$.
We send this Bloom filter to the other replica, along with the heads (Algorithm~\ref{fig:algorithm2}, lines~\ref{line:make-bloom}--\ref{line:a2-send-heads}).

On receiving the heads and Bloom filter, we identify any messages that were added since the last reconciliation that are \emph{not} present in the Bloom filter's membership check (line~\ref{line:bloom-member}).
In the example, $q$ looks up $\{F, G, J, K\}$ in the Bloom filter received from $p$; \textsc{BloomMember} returns true for $J$ and $K$. \textsc{BloomMember} is likely to return false for $F$ and $G$, but may return true due to a false positive.
In this example, we assume that \textsc{BloomMember} returns false for $F$ and true for $G$ (a false positive in the case of $G$).

Any Bloom-negative messages are definitely unknown to the other replica, so we send those in reply.
Moreover, we also send any successors of Bloom-negative messages (line~\ref{line:bloom-succ}): since the set $\mathcal{M}$ for a correct replica cannot contain messages whose predecessors are missing, we know that these messages must also be missing from the other replica.
In the example, $q$ sends $\langle\mathsf{msgs}: \{F, G\}\rangle$ to $p$, because $F$ is Bloom-negative and $G$ is a successor of $F$.

Due to Bloom filter false positives, the set of messages in the reply on line~\ref{line:a2-heads-reply} may be incomplete, but it is likely to contain most of the messages that the other replica is lacking.
To fill in the remaining missing messages we revert back to Algorithm~\ref{fig:algorithm}, and perform round trips of $\mathsf{needs}$ requests and $\mathsf{msgs}$ responses until the received set of messages is complete.

The size of the Bloom filter can be chosen dynamically based on the number of elements it contains.
Note that the Bloom filter reflects only messages that were added since the last reconciliation with $q$, not all messages $\mathcal{M}$.
Thus, if the reconciliations are frequent, they can employ a small Bloom filter size to minimize the cost.

This optimized algorithm also tolerates Byzantine faults.
For example, a faulty replica may send a correct replica an arbitrarily corrupted Bloom filter, but this only changes the set of messages in the reply from the correct replica, and has no effect on $\mathcal{M}$ at the correct replica.
We formally analyze the correctness of this algorithm in Appendix~\ref{sec:proof}.

\begin{table*}
\caption{Determining safety of updates with respect to different types of invariant}\label{tab:safety}
\begin{tabular}{l|l}
\toprule
\textbf{Invariant} & \textbf{Update is unsafe if it\dots} \\
\midrule
Row-level check constraint & Inserts/updates tuple with a value that violates the check \\
Attribute has non-negative value & Subtracts a positive amount from the value of that attribute \\
Foreign key constraint & Deletes a tuple from the constraint's target relation \\
Uniqueness of an attribute & Inserts tuple with user-chosen value of that attribute (may be safe if the value is \\ & determined by the hash of the message containing the update) \\
Value is materialized view of a query & All updates are safe, provided materialized view is updated after applying updates \\
\bottomrule
\end{tabular}
\end{table*}

\subsection{Discussion}\label{sec:algorithm-discussion}

One further optimization could be added to our algorithms: on receiving the other replica's heads, a replica can check whether it has any successors of those heads.
If so, those successors can immediately be sent to the other replica (Git calls this a ``fast-forward'').
If neither replica's heads are known to the other (i.e.\ their histories have diverged), they fall back to the aforementioned algorithm.
We have omitted this optimization from our algorithms because we found that it did not noticeably improve the performance of Algorithm~\ref{fig:algorithm2}, and so it was not worth the additional complexity.

A potential issue with Algorithms~\ref{fig:algorithm} and~\ref{fig:algorithm2} is the unbounded growth of storage requirements, since the set $\mathcal{M}$ grows monotonically (much like most algorithms for Byzantine agreement, which produce an append-only log without considering how that log might be truncated).
If the set of replicas in the system is known, we can truncate history as follows: once every replica has delivered a message $m$ (i.e.\ $m$ is \emph{stable}~\cite{Birman:1991el}), the algorithm no longer needs to refer to any of the predecessors of $m$, and so all of those predecessors can be safely removed from $\mathcal{M}$ without affecting the algorithm.
Stability can be determined by keeping track of the latest heads for each replica, and propagating this information between replicas.

When one of the communicating replicas is Byzantine-faulty, the reconciliation algorithm may never terminate, e.g.\ because the faulty replica may send hashes that do not resolve to any message, and so the state $\mathit{missing} = \{\}$ is never reached.
However, in a non-terminating protocol run no messages are delivered and $\mathcal{M}$ is never updated, and so the actions of the faulty replica have no effect on the state of the correct replica.
Reconciliations with other replicas are unaffected, since replicas may perform multiple reconciliations concurrently.

In a protocol run that terminates, the only possible protocol violations from a Byzantine-faulty replica are to omit heads, or to extend the set $\mathcal{M}$ with well-formed messages (i.e.\ messages containing only hashes that resolve to other messages, signed with the private key of one of the replicas in the system).
Any omitted messages will eventually be received through reconciliations with other correct replicas, and any added messages will be forwarded to other replicas; either way, the eventual delivery property of causal broadcast is preserved.

Arbitrary $\mathsf{needs}$ requests sent by a faulty replica do not affect the state of the recipient.
Thus, a faulty replica cannot corrupt the state of a correct replica in a way that would prevent it from later reconciling with another correct replica.

If one of the replicas crashes, both replicas abort the reconciliation and no messages are delivered.
The next reconciliation attempt then starts afresh.
(If desired, it would not be difficult to modify the algorithm so that an in-progress reconciliation can be restarted.)
Note that it is possible for one replica to complete reconciliation and to deliver its new messages while the other replica crashes just before reaching this point.
Thus, when we load the heads from the previous reconciliation on line~\ref{line:load-heads} of Algorithm~\ref{fig:algorithm2}, the local and the remote replica's $\mathit{oldHeads}$ may differ.
This does not affect the correctness of the algorithm.

\begin{algorithm*}
    \begin{algorithmic}[1]
    \On{commit of local transaction $T$}
        \State \textbf{let} $(\mathit{ins}, \mathit{del}) := \Call{GeneratedUpdates}{T}$
        \State \textbf{broadcast} $(\mathit{ins}, \mathit{del})$ by causal broadcast
    \EndOn
    \State
    \On{delivering $m$ by causal broadcast}
        \State \textbf{let} $((\mathit{ins}, \mathit{del}), \mathit{hs}, \mathit{sig}) := m$
        \If{updates $(\mathit{ins}, \mathit{del})$ are safe w.r.t. all invariants \textbf{and}\label{line:safety-check}\\\hspace{3em}$\forall (h,r,t) \in \mathit{del}.\; h \in \{H(m') \mid m' \in \mathrm{pred}^*(\mathcal{M}, m)\}$}\label{line:pred-check}
            \State $S := S \setminus \mathit{del} \,\cup\, \{(H(m), \mathit{rel}, \mathit{tuple}) \mid (\mathit{rel}, \mathit{tuple}) \in \mathit{ins}\}$ \label{line:state-update}
        \EndIf
    \EndOn
    \end{algorithmic}
    \caption{BEC database replication using causal broadcast.}\label{fig:algorithm3}
\end{algorithm*}

\subsection{BEC replication using causal broadcast}\label{sec:implementing-bec}

Given the Byzantine causal broadcast protocol we have defined, we now introduce a replication algorithm that ensures Byzantine Eventual Consistency, assuming \I-confluence.
The details depend on the data model of the database being replicated, and the types of updates allowed.
Algorithm~\ref{fig:algorithm3} shows an approach for a relational database that supports insertion and deletion of tuples in unordered relations (updates are performed by deletion and re-insertion).

Let each replica have state $S$, which is stored durably, and which is initially the empty set.
$S$ is a set of triples: $(h, \mathit{rel}, \mathit{tuple}) \in S$ means the relation named $\mathit{rel}$ contains the tuple $\mathit{tuple}$, and that tuple was inserted by a message whose hash is $h$.
We assume the schema of $\mathit{tuple}$ is known to the application, and we ignore DDL in this example.
Our algorithm gives each replica a full copy of the database $S$; sharding/partitioning could be added if required.

When a transaction $T$ executes, we allow it to read the current state of $S$ at the local replica.
When $T$ commits, we assume the function $\textsc{GeneratedUpdates}(T)$ returns the inserts and deletions performed in $T$.
Insertions are represented as pairs of $(\mathit{rel}, \mathit{tuple})$ indicating the insertion of $\mathit{tuple}$ into the relation named $\mathit{rel}$.
Deletions are represented by the $(h, \mathit{rel}, \mathit{tuple})$ triple to be deleted.
We encode these updates as a message and disseminate it to the other replicas via Byzantine causal broadcast.

When a message is delivered by causal broadcast (including self-delivery to the replica that sent the message), we first check on line~\ref{line:safety-check} of Algorithm~\ref{fig:algorithm3} whether the updates are \emph{safe} with regard to all of the invariants in the application.
An update is unsafe if applying that update could cause the set of transactions to no longer be \I-confluent with regard to a particular invariant.
For example, a deletion of a tuple in a particular relation is unsafe if there is a referential integrity invariant enforcing a foreign key constraint whose target is that relation, because a different transaction could concurrently insert a tuple that has a foreign key reference to the deleted tuple, leading to a constraint violation.
Rules for determining safety for various common types of invariant are described in Table~\ref{tab:safety}; for a deeper analysis of the \I-confluence of different types of constraint, see the extended version of Bailis et al.'s paper~\cite{Bailis:2014ext}.
Note that safety can be determined as a function of only the updates and the invariants, without depending on the current replica state.
The check for safety at the time of delivering a message is necessary because even if we assume that correct replicas only generate safe updates, faulty replicas may generate unsafe updates, which must be ignored by correct replicas.

In addition to checking safety, we check on line~\ref{line:pred-check} that any deletions in the message $m$ are for tuples that were inserted by a message that causally precedes $m$.
This ensures that the insertion is applied before the deletion on all replicas, which is necessary to ensure convergence.
If these conditions are met, we apply the updates to the replica state $S$ on line~\ref{line:state-update}.
We remove any deleted tuples, and we augment any insertions with the hash of the message.
This ensures that subsequent deletions can unambiguously reference the element of $S$ to be deleted, even if multiple replicas concurrently insert the same tuple into the same relation.

We prove in Appendix~\ref{sec:bec-proof} that this algorithm ensures Byzantine Eventual Consistency.
This algorithm could be extended with other operations besides insertion and deletion: for example, it might be useful to support an operation that adds a (possibly negative) value to a numeric attribute; such an operation would commute trivially with other operations of the same type, allowing several concurrent updates to a numeric value (e.g.\ an account balance) to be merged.
Further data models and operations can be implemented using CRDT techniques (\S~\ref{sec:crdts}).

\begin{figure*}
  \includegraphics[width=\textwidth,keepaspectratio=true]{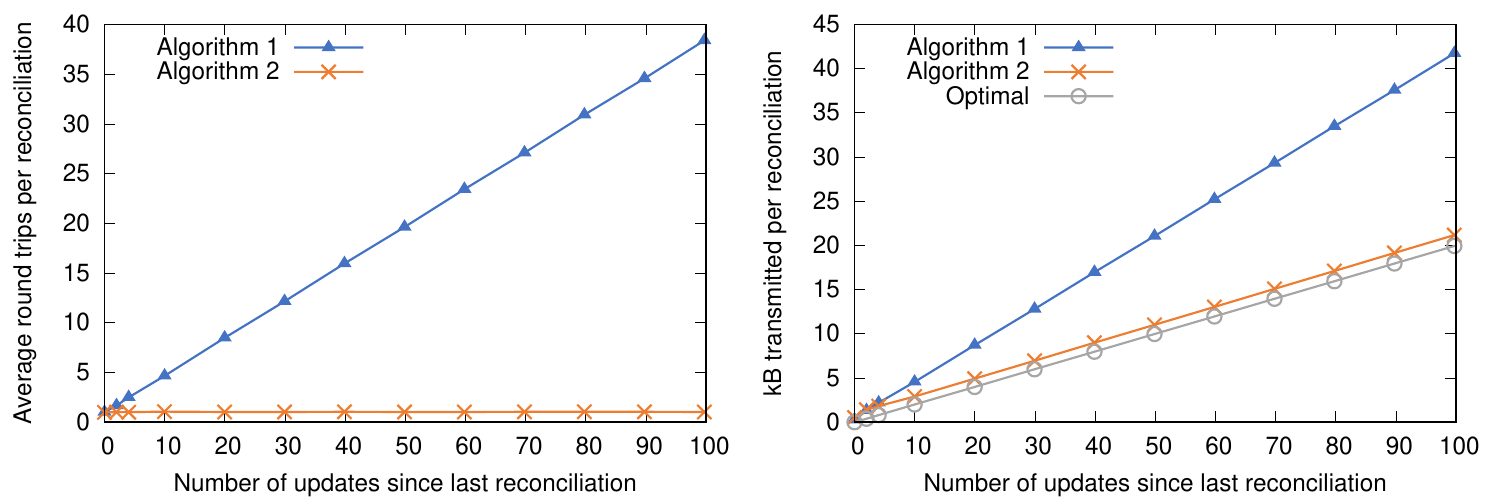}
  \caption{Results from the evaluation of our prototype. Left: number of round trips to complete a reconciliation; right: network bandwidth used per reconciliation (lower is better).}
  \label{fig:evaluation}
\end{figure*}

\section{Evaluation}\label{sec:evaluation}

To evaluate the algorithms introduced in \S~\ref{sec:algorithm} we implemented a prototype and measured its behavior.\footnote{Source code available at \url{https://github.com/ept/byzantine-eventual}}
Our prototype runs all replicas in-memory in a single process, simulating a message-passing network and recording statistics such as the number of messages, hashes, and Bloom filter bits transmitted, in order to measure the algorithm's network communication costs.
In our experiments we use four replicas, each replica generating updates at a constant rate, and every pair of replicas periodically reconciles their states.
We then vary the intervals at which reconciliations are performed (with longer intervals, more updates accumulate between reconciliations) and measure the network communication cost of performing the reconciliation.
To ensure we exercise the reconciliation algorithm, replicas do not eagerly send messages (lines~\ref{line:eager-send} and~\ref{line:eager-relay} of Algorithm~\ref{fig:algorithm} are omitted), and we rely only on periodic reconciliation to exchange messages.
For each of the following data points we compute the average over 600 reconciliations (100 reconciliations between each distinct pair of replicas) performed at regular intervals.

First, we measure the average number of round-trips required to complete one reconciliation (Figure~\ref{fig:evaluation} left).
The greater the number of updates generated between reconciliations, the longer the paths in the predecessor graph.
Therefore, when Algorithm~\ref{fig:algorithm} is used, the number of round trips increases linearly with the number of updates added.
However, Algorithm~\ref{fig:algorithm2} reduces each reconciliation to 1.03 round trips on average, and this number remains constant as the number of updates grows.
96.7\% of reconciliations with Algorithm~\ref{fig:algorithm2} complete in one round trip, 3.2\% require two round trips, and 0.04\% require three or more round trips.
These figures are based on using Bloom filters with 10 bits per entry and 7 hash functions.

Next, we estimate the network traffic resulting from the use of our algorithms.
For this, we assume that each update is 200 bytes in size (not counting its predecessor hashes), hashes are 32 bytes in size (SHA-256), and Bloom filters use 10 bits per element.
Moreover, we assume that each request or response message incurs an additional constant overhead of 100 bytes (e.g.\ for TCP/IP packet headers and signatures).
We compute the number of kilobytes sent per reconciliation (in both directions) using each algorithm.

Figure~\ref{fig:evaluation} (right) shows the results from this experiment.
The gray line represents a hypothetical optimal algorithm that transmits only new messages, but no additional metadata such as hashes or Bloom filters.
Compared to this optimum, Algorithm~\ref{fig:algorithm2} incurs a near-constant overhead of approximately 1~kB per reconciliation for the heads and predecessor hashes, Bloom filter, and occasional additional round trips.
In contrast, the cost of Algorithm~\ref{fig:algorithm} is more than double the optimal, primarily because it sends many $\mathsf{needs}$ messages containing hashes, and it sends messages in many small responses rather than batched into one response.
Thus, we can see that in terms of network performance, Algorithm~\ref{fig:algorithm2} is close to the optimum in terms of both round trips and bytes transmitted, making it viable for use in practice.

In our prototype, all replicas correctly follow the protocol.
Adding faulty replicas may alter the shape of the predecessor graph (e.g.\ resulting in more concurrent updates than there are replicas), but we believe that this would not fundamentally alter our results.
We leave an evaluation of other metrics (e.g.\ CPU or memory use) for future work.

\section{Related Work}\label{sec:relwork}

Hash chaining is widely used: in Git repositories~\cite{GitHTTP}, Merkle trees~\cite{Merkle:1987}, blockchains~\cite{Bano:2019}, and peer-to-peer storage systems such as IPLD~\cite{IPLD}.
Our Algorithm~\ref{fig:algorithm} has similarities to the protocol used by \texttt{git fetch} and \texttt{git push}; to reduce the number of round trips, Git's ``smart'' transfer protocol sends the most recent 32 hashes rather than just the heads~\cite{GitHTTP}.
Git also supports an experimental ``skipping'' reconciliation algorithm~\cite{GitSkipping} in which the search for common ancestors skips some vertices in the predecessor graph, with exponentially growing skip sizes; this algorithm ensures a logarithmic number of round-trips, but may end up unnecessarily transmitting commits that the recipient already has.
Other authors~\cite{Baird:2016tq,Kang:2003} also discuss replicated hash graphs but do not present efficient reconciliation algorithms for bringing replicas up-to-date.

Byzantine agreement has been the subject of extensive research and has seen a recent renewal of interest due to its application in blockchains~\cite{Bano:2019}.
To tolerate $f$ faults, Byzantine agreement algorithms typically require $3f+1$ replicas~\cite{Castro:1999,Kotla:2007,Bessani:2014}, and some even require $5f+1$ replicas~\cite{Abd:2005,Martin:2006}.
This bound can be lowered, for example, to $2f+1$ if synchrony and digital signatures are assumed~\cite{Abraham:2017}. 
Most algorithms also require at least one round of communication with at least $2f+1$ replicas, incurring both significant latency and limiting availability.
Some algorithms instead take a different approach to bounding the number of failures: for example, Upright~\cite{Clement:2009} separates the number of crash failures ($u$) and Byzantine failures ($r$) and uses $2u+r+1$ replicas.
Byzantine quorum systems~\cite{Malkhi:1998} generalize from a threshold $f$ of failures to a set of possible failures.
Zeno~\cite{Singh:2009} makes progress with just $f+1$ replicas, but safety depends on less than $\frac{1}{3}$ of replicas being Byzantine-faulty.
Previous work on Byzantine fault tolerant CRDTs~\cite{Chai:2014,Shoker:2017,Zhao:2016}, Secure Reliable Multicast~\cite{Malki:1996,Malkhi:2000}, Secure Causal Atomic Broadcast~\cite{Cachin:2001cj,Duan:2017} and Byzantine Lattice Agreement~\cite{DiLuna:2020} also assumes $3f+1$ replicas.
All of these algorithms require Sybil countermeasures, such as central control over the participating replicas' identities; moreover, many algorithms ignore the problem of reconfiguring the system to change the set of replicas.

Little prior work tolerates arbitrary numbers of Byzantine-faulty replicas. Depot~\cite{Mahajan:2011} and OldBlue~\cite{VanGundy:2012} provide causal broadcast in this model: OldBlue's algorithm is similar to our Algorithm~\ref{fig:algorithm}, while Depot uses a more complex replication algorithm involving a combination of logical clocks and hash chains to detect and recover from inconsistencies.
We were not able to compare our algorithms to Depot because the available publications~\cite{Mahajan:2010,Mahajan:2011,Mahajan:2012} do not describe Depot's algorithm in sufficient detail to reproduce it.
Depot's consistency model (fork-join-causal) is specific to a key-value data model, and unlike BEC it does not consider the problem of maintaining invariants.
Mahajan et al.\ have also shown that no system that tolerates Byzantine failures can enforce fork causal~\cite{Mahajan:2011} or stronger consistency in an always available, one-way convergent system~\cite{Mahajan:2011cac}. 
BEC provides a weaker two-way convergence property, which requires that eventually a correct replica's updates are reflected on another correct replica only if they can bidirectionally exchange messages for a sufficient period.

Recent work by van der Linde et al.~\cite{vanderLinde:2020} also considers causally consistent replication in the face of Byzantine faults, taking a very different approach to ours: detecting cryptographic proof of faulty behavior, and banning replicas found to be misbehaving.
This approach relies on a trusted central server and trusted hardware such as SGX, whereas we do not assume any trusted components.

In SPORC~\cite{Feldman:2010wl}, BFT2F~\cite{Li:2007} and SUNDR~\cite{Mazieres:2002}, a faulty replica can partition the system, preventing some replicas from ever synchronizing again, so these systems do not satisfy the \emph{eventual update} property of BEC.
Drabkin et al.~\cite{Drabkin:2005} present an algorithm for Byzantine reliable broadcast, but it does not provide causal ordering.

Our reconciliation algorithm is related to the problem of computing the difference, union, or intersection between sets on remote replicas.
This problem has been studied in various domains, including peer-to-peer systems, deduplication of backups, and error-correction.
Approaches include using Bloom filters~\cite{Skjegstad:2011}, invertible Bloom filters~\cite{Goodrich:2011,Eppstein:2011} and polynomial encoding~\cite{Minsky:2003}.
However, these approaches are not designed to tolerate Byzantine faults.

\I-confluence was introduced by Bailis et al.~\cite{Bailis:2014} in the context of non-Byzantine systems.
It is closely related to the concept of logical monotonicity~\cite{Conway:2012} and the CALM theorem~\cite{Ameloot:2013,Hellerstein:2010}, which states that coordination can be avoided for programs that are monotonic.
COPS~\cite{Lloyd:2011} is an example of a non-Byzantine system that achieves causal consistency while avoiding coordination, and BEC is a Byzantine variant of COPS's Causal+ consistency model.
Non-Byzantine causal broadcast was introduced by the ISIS system~\cite{Birman:1991el}.

% Snapdoc~\cite{Kollmann:2019hf} provides cryptographic integrity checks for CRDTs, but its approach (using RSA accumulators) incurs large overheads.
% Truong et al.~\cite{Truong:2012et} present another scheme for authenticating CRDT history, but do not include a reconciliation protocol.

% BAR Gossip uses a model of "rational" replicas that are assumed to maximise a given loss function
% Harry C. Li, Allen Clement, et al. BAR Gossip. OSDI 2006 https://static.usenix.org/event/osdi06/tech/full_papers/li/li.pdf

% Tseng et al.~\cite{Tseng:2019jb} prove that Byzantine causal memory can only be done with $3f+1$ processes?!

% https://github.com/sipa/minisketch

% Comparison to Julien Quintard work on byzantine file systems https://www.repository.cam.ac.uk/bitstream/handle/1810/243442/thesis.pdf?sequence=1&isAllowed=y
% https://infinit.sh

% Comparison to irmin
% https://mirage.github.io/irmin/irmin/Irmin/index.html#syncing-with-a-remote
% https://github.com/mirage/irmin/blob/master/src/irmin/sync_ext.ml#L86-L123
% Send paper to Irmin authors?

\section{Conclusions}\label{sec:conc}

Many peer-to-peer systems tolerate only a bounded number of Byzantine-faulty nodes, and therefore need to employ expensive countermeasures against Sybil attacks, such as proof-of-work, or centrally controlled permissions for joining the system.
In this work we asked the question: what are the limits of what we can achieve without introducing Sybil countermeasures?
In other words, which applications can tolerate arbitrary numbers of Byzantine faults?

We have answered this question with both a positive and a negative result.
Our positive result is an algorithm that achieves Byzantine Eventual Consistency in such a system, provided that the application's transactions are \I-confluent with regard to its invariants.
Our negative result is an impossibility proof showing that such an algorithm does not exist if the application is not \I-confluent.
We proved our algorithms correct, and demonstrated that our optimized algorithm incurs only a small network communication overhead compared to the theoretical optimum, making it immediately applicable in practice.

As shown in \S~\ref{sec:crdts}, many existing systems and applications use CRDTs to achieve strong eventual consistency in a non-Byzantine model.
These applications are already \I-confluent, and adopting our approach will allow those systems to gain robustness against Byzantine faults.
For systems that currently require all nodes to be trusted, and hence can only be deployed in trusted datacenter networks, adding Byzantine fault tolerance opens up new opportunities for deployment in untrusted peer-to-peer settings.

We hope that BEC will inspire further research to ensure the correctness of data systems in the presence of arbitrary numbers of Byzantine faults.
Some open questions include:
\begin{itemize}
    \item How can we best ensure that correct replicas form a connected component, as assumed in \S~\ref{sec:system-model}?
    Connecting each replica to every other is the simplest solution, but it can be expensive if the number of replicas is large.
    \item How can we formalize Table~\ref{tab:safety}, i.e.\ the process of checking whether an update is safe with regard to an invariant?
    \item Is it generally true that a problem can be solved without coordination in a non-Byzantine system if and only if it is immune to Sybil attacks in a Byzantine context?
\end{itemize}

\begin{acks}
Thank you to Alastair Beresford, Jon Crowcroft, Srinivasan Keshav, Smita Vijaya Kumar and Gavin Stark for feedback on a draft of this paper.
Martin Kleppmann is supported by a Leverhulme Trust Early Career Fellowship, the Isaac Newton Trust, and Nokia Bell Labs.
This work was funded in part by EP/T022493/1.
\end{acks}

\bibliographystyle{ACM-Reference-Format}

\begin{thebibliography}{80}

%%% ====================================================================
%%% NOTE TO THE USER: you can override these defaults by providing
%%% customized versions of any of these macros before the \bibliography
%%% command.  Each of them MUST provide its own final punctuation,
%%% except for \shownote{}, \showDOI{}, and \showURL{}.  The latter two
%%% do not use final punctuation, in order to avoid confusing it with
%%% the Web address.
%%%
%%% To suppress output of a particular field, define its macro to expand
%%% to an empty string, or better, \unskip, like this:
%%%
%%% \newcommand{\showDOI}[1]{\unskip}   % LaTeX syntax
%%%
%%% \def \showDOI #1{\unskip}           % plain TeX syntax
%%%
%%% ====================================================================

\ifx \showCODEN    \undefined \def \showCODEN     #1{\unskip}     \fi
\ifx \showDOI      \undefined \def \showDOI       #1{#1}\fi
\ifx \showISBNx    \undefined \def \showISBNx     #1{\unskip}     \fi
\ifx \showISBNxiii \undefined \def \showISBNxiii  #1{\unskip}     \fi
\ifx \showISSN     \undefined \def \showISSN      #1{\unskip}     \fi
\ifx \showLCCN     \undefined \def \showLCCN      #1{\unskip}     \fi
\ifx \shownote     \undefined \def \shownote      #1{#1}          \fi
\ifx \showarticletitle \undefined \def \showarticletitle #1{#1}   \fi
\ifx \showURL      \undefined \def \showURL       {\relax}        \fi
% The following commands are used for tagged output and should be
% invisible to TeX
\providecommand\bibfield[2]{#2}
\providecommand\bibinfo[2]{#2}
\providecommand\natexlab[1]{#1}
\providecommand\showeprint[2][]{arXiv:#2}

\bibitem[\protect\citeauthoryear{Abd-El-Malek, Ganger, Goodson, Reiter, and
  Wylie}{Abd-El-Malek et~al\mbox{.}}{2005}]%
        {Abd:2005}
\bibfield{author}{\bibinfo{person}{Michael Abd-El-Malek},
  \bibinfo{person}{Gregory~R. Ganger}, \bibinfo{person}{Garth~R. Goodson},
  \bibinfo{person}{Michael~K. Reiter}, {and} \bibinfo{person}{Jay~J. Wylie}.}
  \bibinfo{year}{2005}\natexlab{}.
\newblock \showarticletitle{Fault-Scalable {Byzantine} Fault-Tolerant
  Services}.
\newblock \bibinfo{journal}{\emph{SIGOPS Operating Systems Review}}
  \bibinfo{volume}{39}, \bibinfo{number}{5} (\bibinfo{date}{Oct.}
  \bibinfo{year}{2005}), \bibinfo{pages}{59--74}.
\newblock
\showISSN{0163-5980}
\urldef\tempurl%
\url{https://doi.org/10.1145/1095809.1095817}
\showDOI{\tempurl}


\bibitem[\protect\citeauthoryear{Abraham, Devadas, Dolev, Nayak, and
  Ren}{Abraham et~al\mbox{.}}{2017}]%
        {Abraham:2017}
\bibfield{author}{\bibinfo{person}{Ittai Abraham}, \bibinfo{person}{Srinivas
  Devadas}, \bibinfo{person}{Danny Dolev}, \bibinfo{person}{Kartik Nayak},
  {and} \bibinfo{person}{Ling Ren}.} \bibinfo{year}{2017}\natexlab{}.
\newblock \showarticletitle{Efficient Synchronous Byzantine Consensus}.
\newblock \bibinfo{journal}{\emph{arXiv}} (\bibinfo{date}{Sept.}
  \bibinfo{year}{2017}).
\newblock
\showeprint{1704.02397}
\urldef\tempurl%
\url{https://arxiv.org/abs/1704.02397}
\showURL{%
\tempurl}


\bibitem[\protect\citeauthoryear{Adya, Liskov, and O'Neil}{Adya
  et~al\mbox{.}}{2000}]%
        {Adya:2000}
\bibfield{author}{\bibinfo{person}{Atul Adya}, \bibinfo{person}{Barbara
  Liskov}, {and} \bibinfo{person}{Patrick O'Neil}.}
  \bibinfo{year}{2000}\natexlab{}.
\newblock \showarticletitle{Generalized isolation level definitions}. In
  \bibinfo{booktitle}{\emph{16th International Conference on Data Engineering}}
  \emph{(\bibinfo{series}{ICDE 2000})}. \bibinfo{pages}{67--78}.
\newblock
\urldef\tempurl%
\url{https://doi.org/10.1109/icde.2000.839388}
\showDOI{\tempurl}


\bibitem[\protect\citeauthoryear{Akkoorath, Tomsic, Bravo, Li, Crain, Bieniusa,
  Pregui{\c c}a, and Shapiro}{Akkoorath et~al\mbox{.}}{2016}]%
        {Akkoorath2016Cure}
\bibfield{author}{\bibinfo{person}{Deepthi~Devaki Akkoorath},
  \bibinfo{person}{Alejandro~Z. Tomsic}, \bibinfo{person}{Manuel Bravo},
  \bibinfo{person}{Zhongmiao Li}, \bibinfo{person}{Tyler Crain},
  \bibinfo{person}{Annette Bieniusa}, \bibinfo{person}{Nuno Pregui{\c c}a},
  {and} \bibinfo{person}{Marc Shapiro}.} \bibinfo{year}{2016}\natexlab{}.
\newblock \showarticletitle{{Cure}: Strong Semantics Meets High Availability
  and Low Latency}. In \bibinfo{booktitle}{\emph{36th IEEE International
  Conference on Distributed Computing Systems}} \emph{(\bibinfo{series}{ICDCS
  2016})}. \bibinfo{publisher}{IEEE}, \bibinfo{pages}{405--414}.
\newblock
\urldef\tempurl%
\url{https://doi.org/10.1109/ICDCS.2016.98}
\showDOI{\tempurl}


\bibitem[\protect\citeauthoryear{Ameloot, Neven, and Van Den~Bussche}{Ameloot
  et~al\mbox{.}}{2013}]%
        {Ameloot:2013}
\bibfield{author}{\bibinfo{person}{Tom~J. Ameloot}, \bibinfo{person}{Frank
  Neven}, {and} \bibinfo{person}{Jan Van Den~Bussche}.}
  \bibinfo{year}{2013}\natexlab{}.
\newblock \showarticletitle{Relational Transducers for Declarative Networking}.
\newblock \bibinfo{journal}{\emph{J. ACM}} \bibinfo{volume}{60},
  \bibinfo{number}{2}, Article \bibinfo{articleno}{15} (\bibinfo{date}{May}
  \bibinfo{year}{2013}).
\newblock
\showISSN{0004-5411}
\urldef\tempurl%
\url{https://doi.org/10.1145/2450142.2450151}
\showDOI{\tempurl}


\bibitem[\protect\citeauthoryear{Bailis, Fekete, Franklin, Ghodsi, Hellerstein,
  and Stoica}{Bailis et~al\mbox{.}}{2014a}]%
        {Bailis:2014}
\bibfield{author}{\bibinfo{person}{Peter Bailis}, \bibinfo{person}{Alan
  Fekete}, \bibinfo{person}{Michael~J Franklin}, \bibinfo{person}{Ali Ghodsi},
  \bibinfo{person}{Joseph~M Hellerstein}, {and} \bibinfo{person}{Ion Stoica}.}
  \bibinfo{year}{2014}\natexlab{a}.
\newblock \showarticletitle{Coordination avoidance in database systems}.
\newblock \bibinfo{journal}{\emph{Proceedings of the VLDB Endowment}}
  \bibinfo{volume}{8}, \bibinfo{number}{3} (\bibinfo{date}{Nov.}
  \bibinfo{year}{2014}), \bibinfo{pages}{185--196}.
\newblock
\urldef\tempurl%
\url{https://doi.org/10.14778/2735508.2735509}
\showDOI{\tempurl}


\bibitem[\protect\citeauthoryear{Bailis, Fekete, Franklin, Ghodsi, Hellerstein,
  and Stoica}{Bailis et~al\mbox{.}}{2014b}]%
        {Bailis:2014ext}
\bibfield{author}{\bibinfo{person}{Peter Bailis}, \bibinfo{person}{Alan
  Fekete}, \bibinfo{person}{Michael~J Franklin}, \bibinfo{person}{Ali Ghodsi},
  \bibinfo{person}{Joseph~M Hellerstein}, {and} \bibinfo{person}{Ion Stoica}.}
  \bibinfo{year}{2014}\natexlab{b}.
\newblock \showarticletitle{Coordination Avoidance in Database Systems
  (Extended Version)}.
\newblock \bibinfo{journal}{\emph{arXiv}} (\bibinfo{date}{Oct.}
  \bibinfo{year}{2014}).
\newblock
\showeprint{1402.2237}
\urldef\tempurl%
\url{https://arxiv.org/abs/1402.2237}
\showURL{%
\tempurl}


\bibitem[\protect\citeauthoryear{Baird}{Baird}{2016}]%
        {Baird:2016tq}
\bibfield{author}{\bibinfo{person}{Leemon Baird}.}
  \bibinfo{year}{2016}\natexlab{}.
\newblock \bibinfo{booktitle}{\emph{The {Swirlds} hashgraph consensus
  algorithm: Fair, fast, {Byzantine} fault tolerance}}.
\newblock \bibinfo{type}{{T}echnical {R}eport} TR-2016-01.
  \bibinfo{institution}{Swirlds}.
\newblock
\urldef\tempurl%
\url{https://www.swirlds.com/downloads/SWIRLDS-TR-2016-01.pdf}
\showURL{%
\tempurl}


\bibitem[\protect\citeauthoryear{Bano, Sonnino, Al-Bassam, Azouvi, McCorry,
  Meiklejohn, and Danezis}{Bano et~al\mbox{.}}{2019}]%
        {Bano:2019}
\bibfield{author}{\bibinfo{person}{Shehar Bano}, \bibinfo{person}{Alberto
  Sonnino}, \bibinfo{person}{Mustafa Al-Bassam}, \bibinfo{person}{Sarah
  Azouvi}, \bibinfo{person}{Patrick McCorry}, \bibinfo{person}{Sarah
  Meiklejohn}, {and} \bibinfo{person}{George Danezis}.}
  \bibinfo{year}{2019}\natexlab{}.
\newblock \showarticletitle{{SoK}: Consensus in the Age of Blockchains}. In
  \bibinfo{booktitle}{\emph{1st ACM Conference on Advances in Financial
  Technologies}} \emph{(\bibinfo{series}{AFT 2019})}. \bibinfo{publisher}{ACM},
  \bibinfo{pages}{183--198}.
\newblock
\urldef\tempurl%
\url{https://doi.org/10.1145/3318041.3355458}
\showDOI{\tempurl}


\bibitem[\protect\citeauthoryear{Berenson, Bernstein, Gray, Melton, O'Neil, and
  O'Neil}{Berenson et~al\mbox{.}}{1995}]%
        {Berenson:1995}
\bibfield{author}{\bibinfo{person}{Hal Berenson}, \bibinfo{person}{Philip~A
  Bernstein}, \bibinfo{person}{Jim~N Gray}, \bibinfo{person}{Jim Melton},
  \bibinfo{person}{Elizabeth O'Neil}, {and} \bibinfo{person}{Patrick O'Neil}.}
  \bibinfo{year}{1995}\natexlab{}.
\newblock \showarticletitle{A critique of {ANSI SQL} isolation levels}.
\newblock \bibinfo{journal}{\emph{ACM SIGMOD Record}} \bibinfo{volume}{24},
  \bibinfo{number}{2} (\bibinfo{date}{May} \bibinfo{year}{1995}),
  \bibinfo{pages}{1--10}.
\newblock
\urldef\tempurl%
\url{https://doi.org/10.1145/568271.223785}
\showDOI{\tempurl}


\bibitem[\protect\citeauthoryear{Bessani, Sousa, and Alchieri}{Bessani
  et~al\mbox{.}}{2014}]%
        {Bessani:2014}
\bibfield{author}{\bibinfo{person}{Alysson Bessani}, \bibinfo{person}{Jo\~{a}o
  Sousa}, {and} \bibinfo{person}{Eduardo E.~P. Alchieri}.}
  \bibinfo{year}{2014}\natexlab{}.
\newblock \showarticletitle{State Machine Replication for the Masses with
  {BFT-SMART}}. In \bibinfo{booktitle}{\emph{44th Annual IEEE/IFIP
  International Conference on Dependable Systems and Networks}}
  \emph{(\bibinfo{series}{DSN 2014})}. \bibinfo{publisher}{IEEE},
  \bibinfo{pages}{355--362}.
\newblock
\urldef\tempurl%
\url{https://doi.org/10.1109/DSN.2014.43}
\showDOI{\tempurl}


\bibitem[\protect\citeauthoryear{Birman, Schiper, and Stephenson}{Birman
  et~al\mbox{.}}{1991}]%
        {Birman:1991el}
\bibfield{author}{\bibinfo{person}{Kenneth Birman}, \bibinfo{person}{Andr{\'e}
  Schiper}, {and} \bibinfo{person}{Pat Stephenson}.}
  \bibinfo{year}{1991}\natexlab{}.
\newblock \showarticletitle{Lightweight causal and atomic group multicast}.
\newblock \bibinfo{journal}{\emph{ACM Transactions on Computer Systems}}
  \bibinfo{volume}{9}, \bibinfo{number}{3} (\bibinfo{date}{Aug.}
  \bibinfo{year}{1991}), \bibinfo{pages}{272--314}.
\newblock
\urldef\tempurl%
\url{https://doi.org/10.1145/128738.128742}
\showDOI{\tempurl}


\bibitem[\protect\citeauthoryear{Bloom}{Bloom}{1970}]%
        {Bloom:1970}
\bibfield{author}{\bibinfo{person}{Burton~H. Bloom}.}
  \bibinfo{year}{1970}\natexlab{}.
\newblock \showarticletitle{Space/Time Trade-Offs in Hash Coding with Allowable
  Errors}.
\newblock \bibinfo{journal}{\emph{Commun. ACM}} \bibinfo{volume}{13},
  \bibinfo{number}{7} (\bibinfo{date}{July} \bibinfo{year}{1970}),
  \bibinfo{pages}{422--426}.
\newblock
\showISSN{0001-0782}
\urldef\tempurl%
\url{https://doi.org/10.1145/362686.362692}
\showDOI{\tempurl}


\bibitem[\protect\citeauthoryear{Bose, Guo, Kranakis, Maheshwari, Morin,
  Morrison, Smid, and Tang}{Bose et~al\mbox{.}}{2008}]%
        {Bose:2008}
\bibfield{author}{\bibinfo{person}{Prosenjit Bose}, \bibinfo{person}{Hua Guo},
  \bibinfo{person}{Evangelos Kranakis}, \bibinfo{person}{Anil Maheshwari},
  \bibinfo{person}{Pat Morin}, \bibinfo{person}{Jason Morrison},
  \bibinfo{person}{Michiel Smid}, {and} \bibinfo{person}{Yihui Tang}.}
  \bibinfo{year}{2008}\natexlab{}.
\newblock \showarticletitle{On the False-Positive Rate of Bloom Filters}.
\newblock \bibinfo{journal}{\emph{Inf. Process. Lett.}} \bibinfo{volume}{108},
  \bibinfo{number}{4} (\bibinfo{date}{Oct.} \bibinfo{year}{2008}),
  \bibinfo{pages}{210–213}.
\newblock
\showISSN{0020-0190}
\urldef\tempurl%
\url{https://doi.org/10.1016/j.ipl.2008.05.018}
\showDOI{\tempurl}


\bibitem[\protect\citeauthoryear{Cachin, Guerraoui, and Rodrigues}{Cachin
  et~al\mbox{.}}{2011}]%
        {Cachin:2011wt}
\bibfield{author}{\bibinfo{person}{Christian Cachin}, \bibinfo{person}{Rachid
  Guerraoui}, {and} \bibinfo{person}{Luís Rodrigues}.}
  \bibinfo{year}{2011}\natexlab{}.
\newblock \bibinfo{booktitle}{\emph{Introduction to Reliable and Secure
  Distributed Programming} (\bibinfo{edition}{second} ed.)}.
\newblock \bibinfo{publisher}{Springer}.
\newblock
\showISBNx{9783642152597}


\bibitem[\protect\citeauthoryear{Cachin, Kursawe, Petzold, and Shoup}{Cachin
  et~al\mbox{.}}{2001}]%
        {Cachin:2001cj}
\bibfield{author}{\bibinfo{person}{Christian Cachin}, \bibinfo{person}{Klaus
  Kursawe}, \bibinfo{person}{Frank Petzold}, {and} \bibinfo{person}{Victor
  Shoup}.} \bibinfo{year}{2001}\natexlab{}.
\newblock \showarticletitle{Secure and Efficient Asynchronous Broadcast
  Protocols}. In \bibinfo{booktitle}{\emph{21st Annual International Cryptology
  Conference}} \emph{(\bibinfo{series}{CRYPTO 2001})}.
  \bibinfo{publisher}{Springer}, \bibinfo{pages}{524--541}.
\newblock
\urldef\tempurl%
\url{https://doi.org/10.1007/3-540-44647-8_31}
\showDOI{\tempurl}


\bibitem[\protect\citeauthoryear{Castro and Liskov}{Castro and Liskov}{1999}]%
        {Castro:1999}
\bibfield{author}{\bibinfo{person}{Miguel Castro} {and}
  \bibinfo{person}{Barbara Liskov}.} \bibinfo{year}{1999}\natexlab{}.
\newblock \showarticletitle{Practical {Byzantine} Fault Tolerance}. In
  \bibinfo{booktitle}{\emph{3rd Symposium on Operating Systems Design and
  Implementation}} \emph{(\bibinfo{series}{OSDI 1999})}.
  \bibinfo{publisher}{USENIX Association}, \bibinfo{pages}{173–186}.
\newblock


\bibitem[\protect\citeauthoryear{Chai and Zhao}{Chai and Zhao}{2014}]%
        {Chai:2014}
\bibfield{author}{\bibinfo{person}{Hua Chai} {and} \bibinfo{person}{Wenbing
  Zhao}.} \bibinfo{year}{2014}\natexlab{}.
\newblock \showarticletitle{Byzantine Fault Tolerance for Services with
  Commutative Operations}. In \bibinfo{booktitle}{\emph{2014 IEEE International
  Conference on Services Computing}} \emph{(\bibinfo{series}{SCC 2014})}.
  \bibinfo{publisher}{IEEE}, \bibinfo{pages}{219--226}.
\newblock
\urldef\tempurl%
\url{https://doi.org/10.1109/SCC.2014.37}
\showDOI{\tempurl}


\bibitem[\protect\citeauthoryear{Christensen, Roginsky, and Jimeno}{Christensen
  et~al\mbox{.}}{2010}]%
        {Christensen:2010}
\bibfield{author}{\bibinfo{person}{Ken Christensen}, \bibinfo{person}{Allen
  Roginsky}, {and} \bibinfo{person}{Miguel Jimeno}.}
  \bibinfo{year}{2010}\natexlab{}.
\newblock \showarticletitle{A New Analysis of the False Positive Rate of a
  Bloom Filter}.
\newblock \bibinfo{journal}{\emph{Inf. Process. Lett.}} \bibinfo{volume}{110},
  \bibinfo{number}{21} (\bibinfo{date}{Oct.} \bibinfo{year}{2010}),
  \bibinfo{pages}{944–949}.
\newblock
\showISSN{0020-0190}
\urldef\tempurl%
\url{https://doi.org/10.1016/j.ipl.2010.07.024}
\showDOI{\tempurl}


\bibitem[\protect\citeauthoryear{Clement, Kapritsos, Lee, Wang, Alvisi, Dahlin,
  and Riche}{Clement et~al\mbox{.}}{2009}]%
        {Clement:2009}
\bibfield{author}{\bibinfo{person}{Allen Clement}, \bibinfo{person}{Manos
  Kapritsos}, \bibinfo{person}{Sangmin Lee}, \bibinfo{person}{Yang Wang},
  \bibinfo{person}{Lorenzo Alvisi}, \bibinfo{person}{Mike Dahlin}, {and}
  \bibinfo{person}{Taylor Riche}.} \bibinfo{year}{2009}\natexlab{}.
\newblock \showarticletitle{Upright Cluster Services}. In
  \bibinfo{booktitle}{\emph{22nd ACM SIGOPS Symposium on Operating Systems
  Principles}} \emph{(\bibinfo{series}{SOSP 2009})}. \bibinfo{publisher}{ACM},
  \bibinfo{pages}{277--290}.
\newblock
\urldef\tempurl%
\url{https://doi.org/10.1145/1629575.1629602}
\showDOI{\tempurl}


\bibitem[\protect\citeauthoryear{Conway, Marczak, Alvaro, Hellerstein, and
  Maier}{Conway et~al\mbox{.}}{2012}]%
        {Conway:2012}
\bibfield{author}{\bibinfo{person}{Neil Conway}, \bibinfo{person}{William~R.
  Marczak}, \bibinfo{person}{Peter Alvaro}, \bibinfo{person}{Joseph~M.
  Hellerstein}, {and} \bibinfo{person}{David Maier}.}
  \bibinfo{year}{2012}\natexlab{}.
\newblock \showarticletitle{Logic and Lattices for Distributed Programming}. In
  \bibinfo{booktitle}{\emph{3rd ACM Symposium on Cloud Computing}}
  \emph{(\bibinfo{series}{SoCC 2012})}. \bibinfo{pages}{1--14}.
\newblock
\urldef\tempurl%
\url{https://doi.org/10.1145/2391229.2391230}
\showDOI{\tempurl}


\bibitem[\protect\citeauthoryear{D{\'e}fago, Schiper, and Urb{\'a}n}{D{\'e}fago
  et~al\mbox{.}}{2004}]%
        {Defago:2004ji}
\bibfield{author}{\bibinfo{person}{Xavier D{\'e}fago},
  \bibinfo{person}{Andr{\'e} Schiper}, {and} \bibinfo{person}{P{\'e}ter
  Urb{\'a}n}.} \bibinfo{year}{2004}\natexlab{}.
\newblock \showarticletitle{Total order broadcast and multicast algorithms:
  Taxonomy and survey}.
\newblock \bibinfo{journal}{\emph{Comput. Surveys}} \bibinfo{volume}{36},
  \bibinfo{number}{4} (\bibinfo{date}{Dec.} \bibinfo{year}{2004}),
  \bibinfo{pages}{372--421}.
\newblock
\urldef\tempurl%
\url{https://doi.org/10.1145/1041680.1041682}
\showDOI{\tempurl}


\bibitem[\protect\citeauthoryear{{Di Luna}, Anceaume, and Querzoni}{{Di Luna}
  et~al\mbox{.}}{2020}]%
        {DiLuna:2020}
\bibfield{author}{\bibinfo{person}{Giuseppe~Antonio {Di Luna}},
  \bibinfo{person}{Emmanuelle Anceaume}, {and} \bibinfo{person}{Leonardo
  Querzoni}.} \bibinfo{year}{2020}\natexlab{}.
\newblock \showarticletitle{Byzantine Generalized Lattice Agreement}. In
  \bibinfo{booktitle}{\emph{IEEE International Parallel and Distributed
  Processing Symposium}} \emph{(\bibinfo{series}{IPDPS 2020})}.
  \bibinfo{publisher}{IEEE}, \bibinfo{pages}{674--683}.
\newblock
\urldef\tempurl%
\url{https://doi.org/10.1109/IPDPS47924.2020.00075}
\showDOI{\tempurl}


\bibitem[\protect\citeauthoryear{Douceur}{Douceur}{2002}]%
        {Douceur:2002}
\bibfield{author}{\bibinfo{person}{John~R. Douceur}.}
  \bibinfo{year}{2002}\natexlab{}.
\newblock \showarticletitle{The {Sybil} Attack}. In
  \bibinfo{booktitle}{\emph{International Workshop on Peer-to-Peer Systems}}
  \emph{(\bibinfo{series}{IPTPS 2002})}. \bibinfo{publisher}{Springer},
  \bibinfo{pages}{251--260}.
\newblock
\urldef\tempurl%
\url{https://doi.org/10.1007/3-540-45748-8_24}
\showDOI{\tempurl}


\bibitem[\protect\citeauthoryear{Drabkin, Friedman, and Segal}{Drabkin
  et~al\mbox{.}}{2005}]%
        {Drabkin:2005}
\bibfield{author}{\bibinfo{person}{Vadim Drabkin}, \bibinfo{person}{Roy
  Friedman}, {and} \bibinfo{person}{Marc Segal}.}
  \bibinfo{year}{2005}\natexlab{}.
\newblock \showarticletitle{Efficient Byzantine Broadcast in Wireless Ad-Hoc
  Networks}. In \bibinfo{booktitle}{\emph{Proceedings of the 2005 International
  Conference on Dependable Systems and Networks}} \emph{(\bibinfo{series}{DSN
  '05})}. \bibinfo{publisher}{IEEE Computer Society}, \bibinfo{address}{USA},
  \bibinfo{pages}{160–169}.
\newblock
\showISBNx{0769522823}
\urldef\tempurl%
\url{https://doi.org/10.1109/DSN.2005.42}
\showDOI{\tempurl}


\bibitem[\protect\citeauthoryear{Duan, Reiter, and Zhang}{Duan
  et~al\mbox{.}}{2017}]%
        {Duan:2017}
\bibfield{author}{\bibinfo{person}{Sisi Duan}, \bibinfo{person}{Michael~K.
  Reiter}, {and} \bibinfo{person}{Haibin Zhang}.}
  \bibinfo{year}{2017}\natexlab{}.
\newblock \showarticletitle{Secure Causal Atomic Broadcast, Revisited}. In
  \bibinfo{booktitle}{\emph{47th Annual IEEE/IFIP International Conference on
  Dependable Systems and Networks}} \emph{(\bibinfo{series}{DSN 2017})}.
  \bibinfo{publisher}{IEEE}, \bibinfo{pages}{61--72}.
\newblock
\urldef\tempurl%
\url{https://doi.org/10.1109/DSN.2017.64}
\showDOI{\tempurl}


\bibitem[\protect\citeauthoryear{Dwork, Lynch, and Stockmeyer}{Dwork
  et~al\mbox{.}}{1988}]%
        {Dwork:1988}
\bibfield{author}{\bibinfo{person}{Cynthia Dwork}, \bibinfo{person}{Nancy
  Lynch}, {and} \bibinfo{person}{Larry Stockmeyer}.}
  \bibinfo{year}{1988}\natexlab{}.
\newblock \showarticletitle{Consensus in the Presence of Partial Synchrony}.
\newblock \bibinfo{journal}{\emph{J. ACM}} \bibinfo{volume}{35},
  \bibinfo{number}{2} (\bibinfo{date}{April} \bibinfo{year}{1988}),
  \bibinfo{pages}{288--323}.
\newblock
\urldef\tempurl%
\url{https://doi.org/10.1145/42282.42283}
\showDOI{\tempurl}


\bibitem[\protect\citeauthoryear{Eppstein, Goodrich, Uyeda, and
  Varghese}{Eppstein et~al\mbox{.}}{2011}]%
        {Eppstein:2011}
\bibfield{author}{\bibinfo{person}{David Eppstein}, \bibinfo{person}{Michael~T.
  Goodrich}, \bibinfo{person}{Frank Uyeda}, {and} \bibinfo{person}{George
  Varghese}.} \bibinfo{year}{2011}\natexlab{}.
\newblock \showarticletitle{What's the Difference? {Efficient} Set
  Reconciliation without Prior Context}. In \bibinfo{booktitle}{\emph{ACM
  SIGCOMM 2011 Conference}}. \bibinfo{publisher}{ACM},
  \bibinfo{pages}{218--229}.
\newblock
\urldef\tempurl%
\url{https://doi.org/10.1145/2018436.2018462}
\showDOI{\tempurl}


\bibitem[\protect\citeauthoryear{Feldman, Zeller, Freedman, and Felten}{Feldman
  et~al\mbox{.}}{2010}]%
        {Feldman:2010wl}
\bibfield{author}{\bibinfo{person}{Ariel~J. Feldman},
  \bibinfo{person}{William~P. Zeller}, \bibinfo{person}{Michael~J. Freedman},
  {and} \bibinfo{person}{Edward~W. Felten}.} \bibinfo{year}{2010}\natexlab{}.
\newblock \showarticletitle{{SPORC}: Group Collaboration using Untrusted Cloud
  Resources}. In \bibinfo{booktitle}{\emph{9th USENIX Symposium on Operating
  Systems Design and Implementation}} \emph{(\bibinfo{series}{OSDI 2010})}.
  \bibinfo{publisher}{USENIX Association}.
\newblock


\bibitem[\protect\citeauthoryear{Gomes, Kleppmann, Mulligan, and
  Beresford}{Gomes et~al\mbox{.}}{2017}]%
        {Gomes:2017gy}
\bibfield{author}{\bibinfo{person}{Victor~B.F. Gomes}, \bibinfo{person}{Martin
  Kleppmann}, \bibinfo{person}{Dominic~P. Mulligan}, {and}
  \bibinfo{person}{Alastair~R. Beresford}.} \bibinfo{year}{2017}\natexlab{}.
\newblock \showarticletitle{Verifying strong eventual consistency in
  distributed systems}.
\newblock \bibinfo{journal}{\emph{Proceedings of the ACM on Programming
  Languages}} \bibinfo{volume}{1}, \bibinfo{number}{OOPSLA}
  (\bibinfo{date}{Oct.} \bibinfo{year}{2017}).
\newblock
\urldef\tempurl%
\url{https://doi.org/10.1145/3133933}
\showDOI{\tempurl}


\bibitem[\protect\citeauthoryear{Goodrich and Mitzenmacher}{Goodrich and
  Mitzenmacher}{2011}]%
        {Goodrich:2011}
\bibfield{author}{\bibinfo{person}{Michael~T. Goodrich} {and}
  \bibinfo{person}{Michael Mitzenmacher}.} \bibinfo{year}{2011}\natexlab{}.
\newblock \bibinfo{title}{Invertible Bloom Lookup Tables}.
\newblock
\newblock
\showeprint[arxiv]{1101.2245}~[cs.DS]
\urldef\tempurl%
\url{https://arxiv.org/abs/1101.2245}
\showURL{%
\tempurl}


\bibitem[\protect\citeauthoryear{Hellerstein}{Hellerstein}{2010}]%
        {Hellerstein:2010}
\bibfield{author}{\bibinfo{person}{Joseph~M. Hellerstein}.}
  \bibinfo{year}{2010}\natexlab{}.
\newblock \showarticletitle{The declarative imperative}.
\newblock \bibinfo{journal}{\emph{ACM SIGMOD Record}} \bibinfo{volume}{39},
  \bibinfo{number}{1} (\bibinfo{date}{Sept.} \bibinfo{year}{2010}),
  \bibinfo{pages}{5--19}.
\newblock
\urldef\tempurl%
\url{https://doi.org/10.1145/1860702.1860704}
\showDOI{\tempurl}


\bibitem[\protect\citeauthoryear{Kang, Wilensky, and Kubiatowicz}{Kang
  et~al\mbox{.}}{2003}]%
        {Kang:2003}
\bibfield{author}{\bibinfo{person}{Brent~Byunghoon Kang},
  \bibinfo{person}{Robert Wilensky}, {and} \bibinfo{person}{John Kubiatowicz}.}
  \bibinfo{year}{2003}\natexlab{}.
\newblock \showarticletitle{The hash history approach for reconciling mutual
  inconsistency}. In \bibinfo{booktitle}{\emph{23rd International Conference on
  Distributed Computing Systems}} \emph{(\bibinfo{series}{ICDCS 2003})}.
  \bibinfo{publisher}{IEEE}, \bibinfo{pages}{670--677}.
\newblock
\urldef\tempurl%
\url{https://doi.org/10.1109/ICDCS.2003.1203518}
\showDOI{\tempurl}


\bibitem[\protect\citeauthoryear{Kleppmann and Beresford}{Kleppmann and
  Beresford}{2017}]%
        {Kleppmann:2017}
\bibfield{author}{\bibinfo{person}{Martin Kleppmann} {and}
  \bibinfo{person}{Alastair~R Beresford}.} \bibinfo{year}{2017}\natexlab{}.
\newblock \showarticletitle{A Conflict-Free Replicated {JSON} Datatype}.
\newblock \bibinfo{journal}{\emph{IEEE Transactions on Parallel and Distributed
  Systems}} \bibinfo{volume}{28}, \bibinfo{number}{10} (\bibinfo{date}{April}
  \bibinfo{year}{2017}), \bibinfo{pages}{2733--2746}.
\newblock
\urldef\tempurl%
\url{https://doi.org/10.1109/TPDS.2017.2697382}
\showDOI{\tempurl}


\bibitem[\protect\citeauthoryear{Kleppmann, Wiggins, van Hardenberg, and
  McGranaghan}{Kleppmann et~al\mbox{.}}{2019}]%
        {Kleppmann2019localfirst}
\bibfield{author}{\bibinfo{person}{Martin Kleppmann}, \bibinfo{person}{Adam
  Wiggins}, \bibinfo{person}{Peter van Hardenberg}, {and} \bibinfo{person}{Mark
  McGranaghan}.} \bibinfo{year}{2019}\natexlab{}.
\newblock \showarticletitle{Local-First Software: You own your data, in spite
  of the cloud}. In \bibinfo{booktitle}{\emph{ACM SIGPLAN International
  Symposium on New Ideas, New Paradigms, and Reflections on Programming and
  Software}} \emph{(\bibinfo{series}{Onward! 2019})}. \bibinfo{publisher}{ACM},
  \bibinfo{pages}{154--178}.
\newblock
\urldef\tempurl%
\url{https://doi.org/10.1145/3359591.3359737}
\showDOI{\tempurl}


\bibitem[\protect\citeauthoryear{Kotla, Alvisi, Dahlin, Clement, and
  Wong}{Kotla et~al\mbox{.}}{2007}]%
        {Kotla:2007}
\bibfield{author}{\bibinfo{person}{Ramakrishna Kotla}, \bibinfo{person}{Lorenzo
  Alvisi}, \bibinfo{person}{Mike Dahlin}, \bibinfo{person}{Allen Clement},
  {and} \bibinfo{person}{Edmund Wong}.} \bibinfo{year}{2007}\natexlab{}.
\newblock \showarticletitle{Zyzzyva: Speculative {Byzantine} Fault Tolerance}.
\newblock \bibinfo{journal}{\emph{SIGOPS Operating Systems Review}}
  \bibinfo{volume}{41}, \bibinfo{number}{6} (\bibinfo{date}{Oct.}
  \bibinfo{year}{2007}), \bibinfo{pages}{45--58}.
\newblock
\showISSN{0163-5980}
\urldef\tempurl%
\url{https://doi.org/10.1145/1323293.1294267}
\showDOI{\tempurl}


\bibitem[\protect\citeauthoryear{Lamport, Shostak, and Pease}{Lamport
  et~al\mbox{.}}{1982}]%
        {Lamport:1982}
\bibfield{author}{\bibinfo{person}{Leslie Lamport}, \bibinfo{person}{Robert
  Shostak}, {and} \bibinfo{person}{Marshall Pease}.}
  \bibinfo{year}{1982}\natexlab{}.
\newblock \showarticletitle{The {Byzantine} Generals Problem}.
\newblock \bibinfo{journal}{\emph{ACM Transactions on Programming Languages and
  Systems}} \bibinfo{volume}{4}, \bibinfo{number}{3} (\bibinfo{date}{July}
  \bibinfo{year}{1982}), \bibinfo{pages}{382–401}.
\newblock
\showISSN{0164-0925}
\urldef\tempurl%
\url{https://doi.org/10.1145/357172.357176}
\showDOI{\tempurl}


\bibitem[\protect\citeauthoryear{Leitão, Pereira, and Rodrigues}{Leitão
  et~al\mbox{.}}{2009}]%
        {Leitao:2009fi}
\bibfield{author}{\bibinfo{person}{João Leitão}, \bibinfo{person}{José
  Pereira}, {and} \bibinfo{person}{Luís Rodrigues}.}
  \bibinfo{year}{2009}\natexlab{}.
\newblock \showarticletitle{Gossip-Based Broadcast}.
\newblock In \bibinfo{booktitle}{\emph{Handbook of Peer-to-Peer Networking}}.
  \bibinfo{publisher}{Springer}, \bibinfo{pages}{831--860}.
\newblock
\showISBNx{978-0-387-09750-3}
\urldef\tempurl%
\url{https://doi.org/10.1007/978-0-387-09751-0_29}
\showDOI{\tempurl}


\bibitem[\protect\citeauthoryear{Li and Mazi{\`e}res}{Li and
  Mazi{\`e}res}{2007}]%
        {Li:2007}
\bibfield{author}{\bibinfo{person}{Jinyuan Li} {and} \bibinfo{person}{David
  Mazi{\`e}res}.} \bibinfo{year}{2007}\natexlab{}.
\newblock \showarticletitle{Beyond One-third Faulty Replicas in {Byzantine}
  Fault Tolerant Systems}. In \bibinfo{booktitle}{\emph{4th USENIX Symposium on
  Networked Systems Design {\&} Implementation}} \emph{(\bibinfo{series}{NSDI
  2007})}. \bibinfo{publisher}{USENIX}, \bibinfo{pages}{131--144}.
\newblock


\bibitem[\protect\citeauthoryear{Linde, Leitão, and Preguiça}{Linde
  et~al\mbox{.}}{2020}]%
        {vanderLinde:2020}
\bibfield{author}{\bibinfo{person}{Albert van~der Linde},
  \bibinfo{person}{João Leitão}, {and} \bibinfo{person}{Nuno Preguiça}.}
  \bibinfo{year}{2020}\natexlab{}.
\newblock \showarticletitle{Practical Client-side Replication: Weak Consistency
  Semantics for Insecure Settings}.
\newblock \bibinfo{journal}{\emph{Proceedings of the VLDB Endowment}}
  \bibinfo{volume}{13}, \bibinfo{number}{11} (\bibinfo{date}{July}
  \bibinfo{year}{2020}), \bibinfo{pages}{2590--2605}.
\newblock
\urldef\tempurl%
\url{https://doi.org/10.14778/3407790.3407847}
\showDOI{\tempurl}


\bibitem[\protect\citeauthoryear{Lloyd, Freedman, Kaminsky, and Andersen}{Lloyd
  et~al\mbox{.}}{2011}]%
        {Lloyd:2011}
\bibfield{author}{\bibinfo{person}{Wyatt Lloyd}, \bibinfo{person}{Michael~J.
  Freedman}, \bibinfo{person}{Michael Kaminsky}, {and}
  \bibinfo{person}{David~G. Andersen}.} \bibinfo{year}{2011}\natexlab{}.
\newblock \showarticletitle{Don't Settle for Eventual: Scalable Causal
  Consistency for Wide-Area Storage with {COPS}}. In
  \bibinfo{booktitle}{\emph{23rd ACM Symposium on Operating Systems
  Principles}} \emph{(\bibinfo{series}{SOSP 2011})}. \bibinfo{publisher}{ACM},
  \bibinfo{pages}{401--416}.
\newblock
\urldef\tempurl%
\url{https://doi.org/10.1145/2043556.2043593}
\showDOI{\tempurl}


\bibitem[\protect\citeauthoryear{Lv, He, Cheng, and Wu}{Lv
  et~al\mbox{.}}{2018}]%
        {Lv:2018ie}
\bibfield{author}{\bibinfo{person}{Xiao Lv}, \bibinfo{person}{Fazhi He},
  \bibinfo{person}{Yuan Cheng}, {and} \bibinfo{person}{Yiqi Wu}.}
  \bibinfo{year}{2018}\natexlab{}.
\newblock \showarticletitle{A novel {CRDT}-based synchronization method for
  real-time collaborative {CAD} systems}.
\newblock \bibinfo{journal}{\emph{Advanced Engineering Informatics}}
  \bibinfo{volume}{38} (\bibinfo{date}{Aug.} \bibinfo{year}{2018}),
  \bibinfo{pages}{381--391}.
\newblock
\urldef\tempurl%
\url{https://doi.org/10.1016/j.aei.2018.08.008}
\showDOI{\tempurl}


\bibitem[\protect\citeauthoryear{Mahajan}{Mahajan}{2012}]%
        {Mahajan:2012}
\bibfield{author}{\bibinfo{person}{Prince Mahajan}.}
  \bibinfo{year}{2012}\natexlab{}.
\newblock \emph{\bibinfo{title}{Highly Available Storage with Minimal Trust}}.
\newblock \bibinfo{thesistype}{Ph.D. Dissertation}. \bibinfo{school}{University
  of Texas at Austin}.
\newblock
\urldef\tempurl%
\url{https://repositories.lib.utexas.edu/handle/2152/16320}
\showURL{%
\tempurl}


\bibitem[\protect\citeauthoryear{Mahajan, Alvisi, and Dahlin}{Mahajan
  et~al\mbox{.}}{2011a}]%
        {Mahajan:2011cac}
\bibfield{author}{\bibinfo{person}{Prince Mahajan}, \bibinfo{person}{Lorenzo
  Alvisi}, {and} \bibinfo{person}{Mike Dahlin}.}
  \bibinfo{year}{2011}\natexlab{a}.
\newblock \bibinfo{booktitle}{\emph{Consistency, Availability, and
  Convergence}}.
\newblock \bibinfo{type}{{T}echnical {R}eport} UTCS TR-11-22.
  \bibinfo{institution}{University of Texas at Austin}.
\newblock
\urldef\tempurl%
\url{https://www.cs.cornell.edu/lorenzo/papers/cac-tr.pdf}
\showURL{%
\tempurl}


\bibitem[\protect\citeauthoryear{Mahajan, Setty, Lee, Clement, Alvisi, Dahlin,
  and Walfish}{Mahajan et~al\mbox{.}}{2010}]%
        {Mahajan:2010}
\bibfield{author}{\bibinfo{person}{Prince Mahajan}, \bibinfo{person}{Srinath
  Setty}, \bibinfo{person}{Sangmin Lee}, \bibinfo{person}{Allen Clement},
  \bibinfo{person}{Lorenzo Alvisi}, \bibinfo{person}{Mike Dahlin}, {and}
  \bibinfo{person}{Michael Walfish}.} \bibinfo{year}{2010}\natexlab{}.
\newblock \showarticletitle{{Depot}: Cloud storage with minimal trust}. In
  \bibinfo{booktitle}{\emph{9th USENIX conference on Operating Systems Design
  and Implementation}} \emph{(\bibinfo{series}{OSDI 2010})}.
\newblock


\bibitem[\protect\citeauthoryear{Mahajan, Setty, Lee, Clement, Alvisi, Dahlin,
  and Walfish}{Mahajan et~al\mbox{.}}{2011b}]%
        {Mahajan:2011}
\bibfield{author}{\bibinfo{person}{Prince Mahajan}, \bibinfo{person}{Srinath
  Setty}, \bibinfo{person}{Sangmin Lee}, \bibinfo{person}{Allen Clement},
  \bibinfo{person}{Lorenzo Alvisi}, \bibinfo{person}{Mike Dahlin}, {and}
  \bibinfo{person}{Michael Walfish}.} \bibinfo{year}{2011}\natexlab{b}.
\newblock \showarticletitle{{Depot}: Cloud Storage with Minimal Trust}.
\newblock \bibinfo{journal}{\emph{ACM Transactions on Computer Systems}}
  \bibinfo{volume}{29}, \bibinfo{number}{4}, Article \bibinfo{articleno}{12}
  (\bibinfo{date}{Dec.} \bibinfo{year}{2011}).
\newblock
\urldef\tempurl%
\url{https://doi.org/10.1145/2063509.2063512}
\showDOI{\tempurl}


\bibitem[\protect\citeauthoryear{Malkhi, Merritt, and Rodeh}{Malkhi
  et~al\mbox{.}}{2000}]%
        {Malkhi:2000}
\bibfield{author}{\bibinfo{person}{Dahlia Malkhi}, \bibinfo{person}{Michael
  Merritt}, {and} \bibinfo{person}{Ohad Rodeh}.}
  \bibinfo{year}{2000}\natexlab{}.
\newblock \showarticletitle{Secure Reliable Multicast Protocols in a WAN}.
\newblock \bibinfo{journal}{\emph{Distrib. Comput.}} \bibinfo{volume}{13},
  \bibinfo{number}{1} (\bibinfo{date}{Jan.} \bibinfo{year}{2000}),
  \bibinfo{pages}{19–28}.
\newblock
\showISSN{0178-2770}
\urldef\tempurl%
\url{https://doi.org/10.1007/s004460050002}
\showDOI{\tempurl}


\bibitem[\protect\citeauthoryear{Malkhi and Reiter}{Malkhi and Reiter}{1998}]%
        {Malkhi:1998}
\bibfield{author}{\bibinfo{person}{Dahlia Malkhi} {and}
  \bibinfo{person}{Michael Reiter}.} \bibinfo{year}{1998}\natexlab{}.
\newblock \showarticletitle{Byzantine Quorum Systems}.
\newblock \bibinfo{journal}{\emph{Distrib. Comput.}} \bibinfo{volume}{11},
  \bibinfo{number}{4} (\bibinfo{date}{Oct.} \bibinfo{year}{1998}),
  \bibinfo{pages}{203–213}.
\newblock
\showISSN{0178-2770}
\urldef\tempurl%
\url{https://doi.org/10.1007/s004460050050}
\showDOI{\tempurl}


\bibitem[\protect\citeauthoryear{Malki and Reiter}{Malki and Reiter}{1996}]%
        {Malki:1996}
\bibfield{author}{\bibinfo{person}{Dalia Malki} {and} \bibinfo{person}{Michael
  Reiter}.} \bibinfo{year}{1996}\natexlab{}.
\newblock \showarticletitle{A High-Throughput Secure Reliable Multicast
  Protocol}. In \bibinfo{booktitle}{\emph{Proceedings of the 9th IEEE Workshop
  on Computer Security Foundations}} \emph{(\bibinfo{series}{CSFW '96})}.
  \bibinfo{publisher}{IEEE Computer Society}, \bibinfo{address}{USA},
  \bibinfo{pages}{9}.
\newblock
\showISBNx{0818675225}


\bibitem[\protect\citeauthoryear{Martin and Alvisi}{Martin and Alvisi}{2006}]%
        {Martin:2006}
\bibfield{author}{\bibinfo{person}{Jean-Philippe Martin} {and}
  \bibinfo{person}{Lorenzo Alvisi}.} \bibinfo{year}{2006}\natexlab{}.
\newblock \showarticletitle{Fast {Byzantine} Consensus}.
\newblock \bibinfo{journal}{\emph{IEEE Transactions on Dependable and Secure
  Computing}} \bibinfo{volume}{3}, \bibinfo{number}{3} (\bibinfo{date}{July}
  \bibinfo{year}{2006}), \bibinfo{pages}{202--215}.
\newblock
\showISSN{1545-5971}
\urldef\tempurl%
\url{https://doi.org/10.1109/TDSC.2006.35}
\showDOI{\tempurl}


\bibitem[\protect\citeauthoryear{Mazi{\`e}res and Shasha}{Mazi{\`e}res and
  Shasha}{2002}]%
        {Mazieres:2002}
\bibfield{author}{\bibinfo{person}{David Mazi{\`e}res} {and}
  \bibinfo{person}{Dennis Shasha}.} \bibinfo{year}{2002}\natexlab{}.
\newblock \showarticletitle{Building secure file systems out of {Byzantine}
  storage}. In \bibinfo{booktitle}{\emph{21st Symposium on Principles of
  Distributed Computing}} \emph{(\bibinfo{series}{PODC 2002})}.
  \bibinfo{publisher}{ACM}, \bibinfo{pages}{108--117}.
\newblock
\urldef\tempurl%
\url{https://doi.org/10.1145/571825.571840}
\showDOI{\tempurl}


\bibitem[\protect\citeauthoryear{Merkle}{Merkle}{1987}]%
        {Merkle:1987}
\bibfield{author}{\bibinfo{person}{Ralph~C. Merkle}.}
  \bibinfo{year}{1987}\natexlab{}.
\newblock \showarticletitle{A Digital Signature Based on a Conventional
  Encryption Function}. In \bibinfo{booktitle}{\emph{A Conference on the Theory
  and Applications of Cryptographic Techniques on Advances in Cryptology}}
  \emph{(\bibinfo{series}{CRYPTO 1987})}. \bibinfo{publisher}{Springer},
  \bibinfo{pages}{369--378}.
\newblock
\showISBNx{3540187960}
\urldef\tempurl%
\url{https://doi.org/10.1007/3-540-48184-2_32}
\showDOI{\tempurl}


\bibitem[\protect\citeauthoryear{Minsky, Trachtenberg, and Zippel}{Minsky
  et~al\mbox{.}}{2003}]%
        {Minsky:2003}
\bibfield{author}{\bibinfo{person}{Yaron Minsky}, \bibinfo{person}{Ari
  Trachtenberg}, {and} \bibinfo{person}{Richard Zippel}.}
  \bibinfo{year}{2003}\natexlab{}.
\newblock \showarticletitle{Set Reconciliation with Nearly Optimal
  Communication Complexity}.
\newblock \bibinfo{journal}{\emph{IEEE Transactions on Information Theory}}
  \bibinfo{volume}{49}, \bibinfo{number}{9} (\bibinfo{date}{Sept.}
  \bibinfo{year}{2003}), \bibinfo{pages}{2213--2218}.
\newblock
\showISSN{0018-9448}
\urldef\tempurl%
\url{https://doi.org/10.1109/TIT.2003.815784}
\showDOI{\tempurl}


\bibitem[\protect\citeauthoryear{Najafzadeh, Shapiro, and Eugster}{Najafzadeh
  et~al\mbox{.}}{2018}]%
        {Najafzadeh:2018bw}
\bibfield{author}{\bibinfo{person}{Mahsa Najafzadeh}, \bibinfo{person}{Marc
  Shapiro}, {and} \bibinfo{person}{Patrick Eugster}.}
  \bibinfo{year}{2018}\natexlab{}.
\newblock \showarticletitle{Co-Design and Verification of an Available File
  System}. In \bibinfo{booktitle}{\emph{19th International Conference on
  Verification, Model Checking, and Abstract Interpretation}}
  \emph{(\bibinfo{series}{VMCAI 2018})}. \bibinfo{publisher}{Springer},
  \bibinfo{pages}{358--381}.
\newblock
\urldef\tempurl%
\url{https://doi.org/10.1007/978-3-319-73721-8_17}
\showDOI{\tempurl}


\bibitem[\protect\citeauthoryear{Nakamoto}{Nakamoto}{2008}]%
        {Nakamoto:2008}
\bibfield{author}{\bibinfo{person}{Satoshi Nakamoto}.}
  \bibinfo{year}{2008}\natexlab{}.
\newblock \bibinfo{title}{{Bitcoin}: A Peer-to-Peer Electronic Cash System}.
\newblock
\newblock
\urldef\tempurl%
\url{https://bitcoin.org/bitcoin.pdf}
\showURL{%
\tempurl}


\bibitem[\protect\citeauthoryear{of~Standards and Technology}{of~Standards and
  Technology}{2002}]%
        {SHA2}
\bibfield{author}{\bibinfo{person}{National~Institute of Standards} {and}
  \bibinfo{person}{Technology}.} \bibinfo{year}{2002}\natexlab{}.
\newblock \bibinfo{title}{Secure Hash Standard (SHA) -- FIPS 180-2}.
\newblock
\newblock
\urldef\tempurl%
\url{https://csrc.nist.gov/publications/detail/fips/180/2/archive/2004-02-25}
\showURL{%
\tempurl}


\bibitem[\protect\citeauthoryear{Pearce and Hamano}{Pearce and Hamano}{2013}]%
        {GitHTTP}
\bibfield{author}{\bibinfo{person}{Shawn~O. Pearce} {and}
  \bibinfo{person}{Junio~C. Hamano}.} \bibinfo{year}{2013}\natexlab{}.
\newblock \bibinfo{title}{Git HTTP transfer protocols}.
\newblock
\newblock
\urldef\tempurl%
\url{https://www.git-scm.com/docs/http-protocol}
\showURL{%
\tempurl}


\bibitem[\protect\citeauthoryear{Pouwelse, Garbacki, Epema, and Sips}{Pouwelse
  et~al\mbox{.}}{2005}]%
        {Pouwelse:2005}
\bibfield{author}{\bibinfo{person}{Johan Pouwelse}, \bibinfo{person}{Paweł
  Garbacki}, \bibinfo{person}{Dick Epema}, {and} \bibinfo{person}{Henk Sips}.}
  \bibinfo{year}{2005}\natexlab{}.
\newblock \showarticletitle{The {BitTorrent} {P2P} File-Sharing System:
  Measurements and Analysis}. In \bibinfo{booktitle}{\emph{4th International
  Workshop on Peer-to-Peer Systems}} \emph{(\bibinfo{series}{IPTPS 2005})}.
  \bibinfo{pages}{205--216}.
\newblock
\urldef\tempurl%
\url{https://doi.org/10.1007/11558989_19}
\showDOI{\tempurl}


\bibitem[\protect\citeauthoryear{Pregui{\c c}a, Baquero, and Shapiro}{Pregui{\c
  c}a et~al\mbox{.}}{2018}]%
        {Preguica:2018gi}
\bibfield{author}{\bibinfo{person}{Nuno Pregui{\c c}a}, \bibinfo{person}{Carlos
  Baquero}, {and} \bibinfo{person}{Marc Shapiro}.}
  \bibinfo{year}{2018}\natexlab{}.
\newblock \showarticletitle{Conflict-Free Replicated Data Types ({CRDTs})}.
\newblock In \bibinfo{booktitle}{\emph{Encyclopedia of Big Data Technologies}}.
  \bibinfo{publisher}{Springer}.
\newblock
\showISBNx{978-3-319-63962-8}
\urldef\tempurl%
\url{https://doi.org/10.1007/978-3-319-63962-8_185-1}
\showDOI{\tempurl}


\bibitem[\protect\citeauthoryear{{Protocol Labs}}{{Protocol Labs}}{[n.d.]}]%
        {IPLD}
\bibfield{author}{\bibinfo{person}{{Protocol Labs}}.}
  \bibinfo{year}{[n.d.]}\natexlab{}.
\newblock \bibinfo{title}{{IPLD}}.
\newblock
\newblock
\urldef\tempurl%
\url{https://ipld.io/}
\showURL{%
\tempurl}


\bibitem[\protect\citeauthoryear{Robinson, Hand, Madsen, and McKelvey}{Robinson
  et~al\mbox{.}}{2018}]%
        {Robinson:2018}
\bibfield{author}{\bibinfo{person}{Danielle~C. Robinson},
  \bibinfo{person}{Joe~A. Hand}, \bibinfo{person}{Mathias~Buus Madsen}, {and}
  \bibinfo{person}{Karissa~R. McKelvey}.} \bibinfo{year}{2018}\natexlab{}.
\newblock \showarticletitle{The {Dat} Project, an open and decentralized
  research data tool}.
\newblock \bibinfo{journal}{\emph{Scientific Data}}  \bibinfo{volume}{5}
  (\bibinfo{date}{Oct.} \bibinfo{year}{2018}), \bibinfo{pages}{180221}.
\newblock
\showISSN{2052-4463}
\urldef\tempurl%
\url{https://doi.org/10.1038/sdata.2018.221}
\showDOI{\tempurl}


\bibitem[\protect\citeauthoryear{Schneider}{Schneider}{1990}]%
        {Schneider:1990}
\bibfield{author}{\bibinfo{person}{Fred~B. Schneider}.}
  \bibinfo{year}{1990}\natexlab{}.
\newblock \showarticletitle{Implementing Fault-Tolerant Services Using the
  State Machine Approach: A Tutorial}.
\newblock \bibinfo{journal}{\emph{Comput. Surveys}} \bibinfo{volume}{22},
  \bibinfo{number}{4} (\bibinfo{date}{Dec.} \bibinfo{year}{1990}),
  \bibinfo{pages}{299--319}.
\newblock
\showISSN{0360-0300}
\urldef\tempurl%
\url{https://doi.org/10.1145/98163.98167}
\showDOI{\tempurl}


\bibitem[\protect\citeauthoryear{Schwarz and Mattern}{Schwarz and
  Mattern}{1994}]%
        {Schwarz:1994}
\bibfield{author}{\bibinfo{person}{Reinhard Schwarz} {and}
  \bibinfo{person}{Friedemann Mattern}.} \bibinfo{year}{1994}\natexlab{}.
\newblock \showarticletitle{Detecting causal relationships in distributed
  computations: In search of the holy grail}.
\newblock \bibinfo{journal}{\emph{Distributed Computing}} \bibinfo{volume}{7},
  \bibinfo{number}{3} (\bibinfo{date}{March} \bibinfo{year}{1994}),
  \bibinfo{pages}{149--174}.
\newblock
\urldef\tempurl%
\url{https://doi.org/10.1007/BF02277859}
\showDOI{\tempurl}


\bibitem[\protect\citeauthoryear{Shapiro, Pregui{\c c}a, Baquero, and
  Zawirski}{Shapiro et~al\mbox{.}}{2011a}]%
        {Shapiro:2011wy}
\bibfield{author}{\bibinfo{person}{Marc Shapiro}, \bibinfo{person}{Nuno
  Pregui{\c c}a}, \bibinfo{person}{Carlos Baquero}, {and}
  \bibinfo{person}{Marek Zawirski}.} \bibinfo{year}{2011}\natexlab{a}.
\newblock \bibinfo{booktitle}{\emph{A comprehensive study of Convergent and
  Commutative Replicated Data Types}}.
\newblock \bibinfo{type}{{T}echnical {R}eport} 7506.
  \bibinfo{institution}{INRIA}.
\newblock
\urldef\tempurl%
\url{http://hal.inria.fr/inria-00555588/}
\showURL{%
\tempurl}


\bibitem[\protect\citeauthoryear{Shapiro, Pregui{\c c}a, Baquero, and
  Zawirski}{Shapiro et~al\mbox{.}}{2011b}]%
        {Shapiro:2011}
\bibfield{author}{\bibinfo{person}{Marc Shapiro}, \bibinfo{person}{Nuno
  Pregui{\c c}a}, \bibinfo{person}{Carlos Baquero}, {and}
  \bibinfo{person}{Marek Zawirski}.} \bibinfo{year}{2011}\natexlab{b}.
\newblock \showarticletitle{Conflict-Free Replicated Data Types}. In
  \bibinfo{booktitle}{\emph{13th International Symposium on Stabilization,
  Safety, and Security of Distributed Systems}} \emph{(\bibinfo{series}{SSS
  2011})}. \bibinfo{publisher}{Springer}, \bibinfo{pages}{386--400}.
\newblock
\urldef\tempurl%
\url{https://doi.org/10.1007/978-3-642-24550-3_29}
\showDOI{\tempurl}


\bibitem[\protect\citeauthoryear{Shoker, Yactine, and Baquero}{Shoker
  et~al\mbox{.}}{2017}]%
        {Shoker:2017}
\bibfield{author}{\bibinfo{person}{Ali Shoker}, \bibinfo{person}{Houssam
  Yactine}, {and} \bibinfo{person}{Carlos Baquero}.}
  \bibinfo{year}{2017}\natexlab{}.
\newblock \showarticletitle{As Secure as Possible Eventual Consistency: Work in
  Progress}. In \bibinfo{booktitle}{\emph{3rd International Workshop on
  Principles and Practice of Consistency for Distributed Data}}
  \emph{(\bibinfo{series}{PaPoC 2017})}. \bibinfo{publisher}{ACM}, Article
  \bibinfo{articleno}{5}.
\newblock
\urldef\tempurl%
\url{https://doi.org/10.1145/3064889.3064895}
\showDOI{\tempurl}


\bibitem[\protect\citeauthoryear{Singh, Castro, Druschel, and Rowstron}{Singh
  et~al\mbox{.}}{2004}]%
        {Singh:2004}
\bibfield{author}{\bibinfo{person}{Atul Singh}, \bibinfo{person}{Miguel
  Castro}, \bibinfo{person}{Peter Druschel}, {and} \bibinfo{person}{Antony
  Rowstron}.} \bibinfo{year}{2004}\natexlab{}.
\newblock \showarticletitle{Defending against Eclipse Attacks on Overlay
  Networks}. In \bibinfo{booktitle}{\emph{11th ACM SIGOPS European Workshop}}
  \emph{(\bibinfo{series}{EW 2011})}. \bibinfo{publisher}{ACM}.
\newblock
\urldef\tempurl%
\url{https://doi.org/10.1145/1133572.1133613}
\showDOI{\tempurl}


\bibitem[\protect\citeauthoryear{Singh, Fonseca, Kuznetsov, Rodrigues, and
  Maniatis}{Singh et~al\mbox{.}}{2009}]%
        {Singh:2009}
\bibfield{author}{\bibinfo{person}{Atul Singh}, \bibinfo{person}{Pedro
  Fonseca}, \bibinfo{person}{Petr Kuznetsov}, \bibinfo{person}{Rodrigo
  Rodrigues}, {and} \bibinfo{person}{Petros Maniatis}.}
  \bibinfo{year}{2009}\natexlab{}.
\newblock \showarticletitle{{Zeno}: Eventually Consistent {Byzantine}-Fault
  Tolerance}. In \bibinfo{booktitle}{\emph{6th USENIX Symposium on Networked
  Systems Design and Implementation}} \emph{(\bibinfo{series}{NSDI 2009})}.
  \bibinfo{publisher}{USENIX}, \bibinfo{pages}{169--184}.
\newblock


\bibitem[\protect\citeauthoryear{Skjegstad and Maseng}{Skjegstad and
  Maseng}{2011}]%
        {Skjegstad:2011}
\bibfield{author}{\bibinfo{person}{Magnus Skjegstad} {and}
  \bibinfo{person}{Torleiv Maseng}.} \bibinfo{year}{2011}\natexlab{}.
\newblock \showarticletitle{Low Complexity Set Reconciliation Using Bloom
  Filters}. In \bibinfo{booktitle}{\emph{7th ACM SIGACT/SIGMOBILE International
  Workshop on Foundations of Mobile Computing}} \emph{(\bibinfo{series}{FOMC
  2011})}. \bibinfo{publisher}{ACM}, \bibinfo{pages}{33--41}.
\newblock
\urldef\tempurl%
\url{https://doi.org/10.1145/1998476.1998483}
\showDOI{\tempurl}


\bibitem[\protect\citeauthoryear{Tan and Hamano}{Tan and Hamano}{2018}]%
        {GitSkipping}
\bibfield{author}{\bibinfo{person}{Jonathan~Tan Tan} {and}
  \bibinfo{person}{Junio~C. Hamano}.} \bibinfo{year}{2018}\natexlab{}.
\newblock \bibinfo{title}{negotiator/skipping: skip commits during fetch}.
\newblock \bibinfo{howpublished}{Git source code commit 42cc7485}.
\newblock
\urldef\tempurl%
\url{https://github.com/git/git/commit/42cc7485a2ec49ecc440c921d2eb0cae4da80549}
\showURL{%
\tempurl}


\bibitem[\protect\citeauthoryear{Tao, Shapiro, and Rancurel}{Tao
  et~al\mbox{.}}{2015}]%
        {Tao:2015gd}
\bibfield{author}{\bibinfo{person}{Vinh Tao}, \bibinfo{person}{Marc Shapiro},
  {and} \bibinfo{person}{Vianney Rancurel}.} \bibinfo{year}{2015}\natexlab{}.
\newblock \showarticletitle{Merging semantics for conflict updates in
  geo-distributed file systems}. In \bibinfo{booktitle}{\emph{8th ACM
  International Systems and Storage Conference}} \emph{(\bibinfo{series}{SYSTOR
  2015})}. \bibinfo{publisher}{ACM}.
\newblock
\urldef\tempurl%
\url{https://doi.org/10.1145/2757667.2757683}
\showDOI{\tempurl}


\bibitem[\protect\citeauthoryear{Tarr, Lavoie, Meyer, and Tschudin}{Tarr
  et~al\mbox{.}}{2019}]%
        {Tarr:2019}
\bibfield{author}{\bibinfo{person}{Dominic Tarr}, \bibinfo{person}{Erick
  Lavoie}, \bibinfo{person}{Aljoscha Meyer}, {and} \bibinfo{person}{Christian
  Tschudin}.} \bibinfo{year}{2019}\natexlab{}.
\newblock \showarticletitle{{Secure Scuttlebutt}: An Identity-Centric Protocol
  for Subjective and Decentralized Applications}. In
  \bibinfo{booktitle}{\emph{6th ACM Conference on Information-Centric
  Networking}} \emph{(\bibinfo{series}{ICN 2019})}.
\newblock
\urldef\tempurl%
\url{https://doi.org/10.1145/3357150.3357396}
\showDOI{\tempurl}


\bibitem[\protect\citeauthoryear{van~der Linde, Fouto, Leit{\~a}o, Pregui{\c
  c}a, Casti{\~n}eira, and Bieniusa}{van~der Linde et~al\mbox{.}}{2017}]%
        {vanderLinde:2017fu}
\bibfield{author}{\bibinfo{person}{Albert van~der Linde},
  \bibinfo{person}{Pedro Fouto}, \bibinfo{person}{Jo{\~a}o Leit{\~a}o},
  \bibinfo{person}{Nuno Pregui{\c c}a}, \bibinfo{person}{Santiago
  Casti{\~n}eira}, {and} \bibinfo{person}{Annette Bieniusa}.}
  \bibinfo{year}{2017}\natexlab{}.
\newblock \showarticletitle{{Legion}: Enriching Internet Services with
  Peer-to-Peer Interactions}. In \bibinfo{booktitle}{\emph{26th International
  Conference on World Wide Web}} \emph{(\bibinfo{series}{WWW 2017})}.
  \bibinfo{publisher}{ACM}, \bibinfo{pages}{283--292}.
\newblock
\urldef\tempurl%
\url{https://doi.org/10.1145/3038912.3052673}
\showDOI{\tempurl}


\bibitem[\protect\citeauthoryear{Van~Gundy and Chen}{Van~Gundy and
  Chen}{2012}]%
        {VanGundy:2012}
\bibfield{author}{\bibinfo{person}{Matthew~D. Van~Gundy} {and}
  \bibinfo{person}{Hao Chen}.} \bibinfo{year}{2012}\natexlab{}.
\newblock \bibinfo{title}{{OldBlue}: Causal Broadcast In A Mutually Suspicious
  Environment (Working Draft)}.
\newblock
\newblock
\urldef\tempurl%
\url{https://matt.singlethink.net/projects/mpotr/oldblue-draft.pdf}
\showURL{%
\tempurl}


\bibitem[\protect\citeauthoryear{van Hardenberg and Kleppmann}{van Hardenberg
  and Kleppmann}{2020}]%
        {vanHardenberg2020PushPin}
\bibfield{author}{\bibinfo{person}{Peter van Hardenberg} {and}
  \bibinfo{person}{Martin Kleppmann}.} \bibinfo{year}{2020}\natexlab{}.
\newblock \showarticletitle{{PushPin}: Towards Production-Quality Peer-to-Peer
  Collaboration}. In \bibinfo{booktitle}{\emph{7th Workshop on Principles and
  Practice of Consistency for Distributed Data}} \emph{(\bibinfo{series}{PaPoC
  2020})}. \bibinfo{publisher}{ACM}, Article \bibinfo{articleno}{10}.
\newblock
\urldef\tempurl%
\url{https://doi.org/10.1145/3380787.3393683}
\showDOI{\tempurl}


\bibitem[\protect\citeauthoryear{Vogels}{Vogels}{2009}]%
        {Vogels:2009ca}
\bibfield{author}{\bibinfo{person}{Werner Vogels}.}
  \bibinfo{year}{2009}\natexlab{}.
\newblock \showarticletitle{Eventually consistent}.
\newblock \bibinfo{journal}{\emph{Commun. ACM}} \bibinfo{volume}{52},
  \bibinfo{number}{1} (\bibinfo{year}{2009}), \bibinfo{pages}{40--44}.
\newblock
\urldef\tempurl%
\url{https://doi.org/10.1145/1435417.1435432}
\showDOI{\tempurl}


\bibitem[\protect\citeauthoryear{Vries}{Vries}{2020}]%
        {deVries:2020}
\bibfield{author}{\bibinfo{person}{Alex~de Vries}.}
  \bibinfo{year}{2020}\natexlab{}.
\newblock \showarticletitle{{Bitcoin}'s energy consumption is underestimated: A
  market dynamics approach}.
\newblock \bibinfo{journal}{\emph{Energy Research \& Social Science}}
  \bibinfo{volume}{70} (\bibinfo{date}{Dec.} \bibinfo{year}{2020}),
  \bibinfo{pages}{101721}.
\newblock
\showISSN{2214-6296}
\urldef\tempurl%
\url{https://doi.org/10.1016/j.erss.2020.101721}
\showDOI{\tempurl}


\bibitem[\protect\citeauthoryear{Weiss, Urso, and Molli}{Weiss
  et~al\mbox{.}}{2009}]%
        {Weiss:2009ht}
\bibfield{author}{\bibinfo{person}{St{\'e}phane Weiss}, \bibinfo{person}{Pascal
  Urso}, {and} \bibinfo{person}{Pascal Molli}.}
  \bibinfo{year}{2009}\natexlab{}.
\newblock \showarticletitle{{Logoot}: A Scalable Optimistic Replication
  Algorithm for Collaborative Editing on {P2P} Networks}. In
  \bibinfo{booktitle}{\emph{29th IEEE International Conference on Distributed
  Computing Systems}} \emph{(\bibinfo{series}{ICDCS 2009})}.
  \bibinfo{publisher}{IEEE}, \bibinfo{pages}{404--412}.
\newblock
\urldef\tempurl%
\url{https://doi.org/10.1109/ICDCS.2009.75}
\showDOI{\tempurl}


\bibitem[\protect\citeauthoryear{Zawirski, Pregui{\c c}a, Duarte, Bieniusa,
  Balegas, and Shapiro}{Zawirski et~al\mbox{.}}{2015}]%
        {Zawirski2015SwiftCloud}
\bibfield{author}{\bibinfo{person}{Marek Zawirski}, \bibinfo{person}{Nuno
  Pregui{\c c}a}, \bibinfo{person}{S{\'e}rgio Duarte}, \bibinfo{person}{Annette
  Bieniusa}, \bibinfo{person}{Valter Balegas}, {and} \bibinfo{person}{Marc
  Shapiro}.} \bibinfo{year}{2015}\natexlab{}.
\newblock \showarticletitle{Write Fast, Read in the Past: Causal Consistency
  for Client-side Applications}. In \bibinfo{booktitle}{\emph{16th Annual
  Middleware Conference}}. \bibinfo{publisher}{ACM/IFIP/USENIX},
  \bibinfo{pages}{75--87}.
\newblock
\urldef\tempurl%
\url{https://doi.org/10.1145/2814576.2814733}
\showDOI{\tempurl}


\bibitem[\protect\citeauthoryear{Zhao, Babi, Yang, Luo, Zhu, Yang, Luo, and
  Yang}{Zhao et~al\mbox{.}}{2016}]%
        {Zhao:2016}
\bibfield{author}{\bibinfo{person}{Wenbing Zhao}, \bibinfo{person}{Mamdouh
  Babi}, \bibinfo{person}{William Yang}, \bibinfo{person}{Xiong Luo},
  \bibinfo{person}{Yueqin Zhu}, \bibinfo{person}{Jack Yang},
  \bibinfo{person}{Chaomin Luo}, {and} \bibinfo{person}{Mary Yang}.}
  \bibinfo{year}{2016}\natexlab{}.
\newblock \showarticletitle{Byzantine Fault Tolerance for Collaborative Editing
  with Commutative Operations}. In \bibinfo{booktitle}{\emph{IEEE International
  Conference on Electro Information Technology}} \emph{(\bibinfo{series}{EIT
  2016})}. \bibinfo{publisher}{IEEE}, \bibinfo{pages}{246--251}.
\newblock
\urldef\tempurl%
\url{https://doi.org/10.1109/eit.2016.7535248}
\showDOI{\tempurl}


\end{thebibliography}

\appendix

\section{Proofs for causal broadcast}\label{sec:proof}

In this appendix we show that Algorithms~\ref{fig:algorithm} and \ref{fig:algorithm2} implement causal broadcast, as defined in \S~\ref{sec:broadcast}, in the Byzantine system model of \S~\ref{sec:system-model}.
Where a lemma does not specify which of the two algorithms it applies to, it holds for both.

% self-delivery, eventual delivery, authenticity, non-duplication, causal order
\begin{lemma}\label{lemma:easy-properties}
Algorithms~\ref{fig:algorithm} and \ref{fig:algorithm2} satisfy the \emph{self-delivery}, \emph{authenticity}, \emph{non-duplication}, and \emph{causal order} properties of causal broadcast, as defined in \S~\ref{sec:broadcast-properties}.
\end{lemma}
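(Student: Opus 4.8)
The plan is to dispatch the four properties one at a time, first observing that all of them depend only on code that is \emph{shared} by both algorithms: Algorithm~\ref{fig:algorithm2} replaces only the ``on connecting'' and ``on receiving heads'' handlers, whereas the broadcast code, the $\mathsf{msgs}$ handler, and the \textsc{HandleMissing} routine -- which is where every delivery occurs -- are inherited unchanged. Hence it suffices to reason about the common delivery path, and each argument transfers to Algorithm~\ref{fig:algorithm2} verbatim. Self-delivery is then immediate: when a correct replica broadcasts $v$, it constructs $m$ and delivers $m$ to itself on line~\ref{line:deliver-local} within the broadcast operation, so $p$ delivers its own message.

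For authenticity I would note that deliveries happen only inside \textsc{HandleMissing} (line~\ref{line:deliver}) or as the self-delivery above, and the messages delivered there come from $\mathit{recvd}$. A message enters $\mathit{recvd}$ only after passing $\mathrm{check}((v,\mathit{hs}),\mathit{sig})$ on line~\ref{line:msgs-recvd}, i.e.\ only if it carries a valid signature by the replica $s$ recorded as its sender. Since the system model guarantees that no replica knows another's private key, a valid signature attributed to a \emph{correct} $s$ can only have been produced by $s$, and a correct replica signs a message solely when broadcasting it; thus the delivered message was broadcast by $s$. For a faulty $s$ the property is vacuous.

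For non-duplication the plan is to treat the global set $\mathcal{M}$ as a faithful record of the messages a correct replica has already delivered, using collision-resistance of $H$ so that set membership in $\mathcal{M}$ correctly identifies ``the same message'' by content. Deliveries are confined to $\mathit{recvd}\setminus\mathcal{M}$, and $\mathcal{M}$ is extended by exactly these messages in the \emph{same} atomic block (line~\ref{line:update-m}), so any message already in $\mathcal{M}$ is excluded from a later $\mathit{recvd}\setminus\mathcal{M}$, and a self-delivered broadcast (already added to $\mathcal{M}$) is never re-delivered through reconciliation. The one subtlety is concurrency: two connection threads could each compute $\mathit{recvd}\setminus\mathcal{M}$ before either updates $\mathcal{M}$; here I would appeal to the stated ``executed only by one thread at a time'' semantics of the atomic blocks to serialize them.

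Causal order is the main obstacle and needs the most care. The key structural fact is that when a correct replica broadcasts $m_2$ it sets the predecessor hashes of $m_2$ to $\mathrm{heads}(\mathcal{M})$, and every message currently in $\mathcal{M}$ lies in $\mathrm{pred}^*(\cdot, m_2)$, being either a head or an ancestor of one. So if that replica had broadcast or delivered $m_1$ earlier, then $m_1\in\mathcal{M}$ when $m_2$ is broadcast, making $m_1$ a transitive predecessor of $m_2$ in the hash graph. I would then show that a correct replica delivers $m_2$ only once this entire ancestry is resolved: \textsc{HandleMissing} reaches the delivery block only when $\mathit{missing}=\{\}$ (line~\ref{line:missing-empty}), meaning every predecessor hash resolves to a known message, and the batch is delivered in topologically sorted order, placing every predecessor of $m_2$ before $m_2$. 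Combining the two, any correct replica delivering $m_2$ delivers $m_1$ no later. The delicate points are (i) confirming that the hash references genuinely capture ``broadcast-or-delivered before,'' which rests on $\mathrm{heads}$ summarizing all of $\mathcal{M}$, and (ii) arguing that resolution plus topological ordering cannot place $m_2$ ahead of an ancestor even across several reconciliation rounds -- here collision-resistance guarantees the predecessor graph is acyclic, so a consistent topological order exists.
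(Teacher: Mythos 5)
Your proposal is correct and follows essentially the same route as the paper's proof: the same property-by-property argument citing the same code points (self-delivery via line~\ref{line:deliver-local}, signature checking at line~\ref{line:msgs-recvd}, the atomic $\mathit{recvd} \setminus \mathcal{M}$ update at line~\ref{line:update-m}, and hash-chained predecessors plus topologically sorted, fully resolved delivery at lines~\ref{line:broadcast-heads} and~\ref{line:deliver}), with your explicit observation that Algorithm~\ref{fig:algorithm2} leaves the entire delivery path unchanged merely making precise what the paper handles implicitly. The only clause of the paper's non-duplication argument you omit is that hash chaining makes every broadcast message unique (a second broadcast, even of the same value $v$, includes the first message's hash among its predecessors), which is what rules out a duplicate delivery through the self-delivery path at line~\ref{line:deliver-local}, where no membership check against $\mathcal{M}$ is performed; your $\mathit{recvd} \setminus \mathcal{M}$ argument covers only reconciliation-side deliveries, and your appeal to collision resistance there is superfluous since membership in $\mathcal{M}$ is by equality of triples, not hashes.
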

\begin{proof}
The \emph{self-delivery} property holds trivially, because each time a correct replica broadcasts a message, it immediately delivers that message to itself (Algorithm~\ref{fig:algorithm}, line~\ref{line:deliver-local}).

The \emph{authenticity} property holds because when a broadcast message is delivered, it was either sent by the local replica (Algorithm~\ref{fig:algorithm}, line~\ref{line:deliver-local}), in which case it is trivially authentic, or it was received from another replica (Algorithm~\ref{fig:algorithm}, line~\ref{line:deliver}).
In the latter case, messages are in the set $\mathcal{M}$ only if they were broadcast, and we discard any messages that do not have a valid signature from its sender (Algorithm~\ref{fig:algorithm}, line~\ref{line:msgs-recvd}).
Our system model assumes that signatures are unforgeable, so a correct replica delivers a message only if it was broadcast by the replica that signed it.

The \emph{non-duplication} property holds because every message is unique (due to the inclusion of the hashes of its predecessors), and messages delivered during reconciliation are limited to those not already in $\mathcal{M}$ (Algorithm~\ref{fig:algorithm}, line~\ref{line:deliver}).
Since $\mathcal{M}$ is immediately updated to include all delivered messages, and this takes place in an \emph{atomic} block (preventing two threads from concurrently trying to deliver the same message), this algorithm ensures that no correct replica delivers the same message more than once.

The \emph{causal order} property holds because when a correct replica broadcasts a message, the predecessor hashes are computed such that every message previously broadcast or delivered by this replica becomes a (direct or indirect) predecessor of the new message (Algorithm~\ref{fig:algorithm}, line \ref{line:broadcast-heads}).
Any correct replica delivers messages in topologically sorted order, i.e.\ any predecessors of $m$ are delivered before $m$ (Algorithm~\ref{fig:algorithm}, line \ref{line:deliver}).
The reconciliation algorithm delivers messages only once all hashes have been resolved (once all direct and indirect predecessor messages have been received), so we know that there are no missing predecessors.
Thus, whenever a correct replica broadcasts or delivers $m_1$ before broadcasting $m_2$, all correct replicas deliver $m_1$ before delivering $m_2$.
\end{proof}

This leaves the \emph{eventual delivery} property, which is the focus of the remainder of this appendix.
We consider two correct replicas $p$ and $q$, with initial sets of messages $\mathcal{M}_p$ and $\mathcal{M}_q$ respectively at the start of the execution.
Assume that in this run of the algorithm, $p$ and $q$ both complete the reconciliation by reaching line~\ref{line:finish} of Algorithm~\ref{fig:algorithm}.
Let $\mathit{recvd}_p$ be the contents of the variable $\mathit{recvd}$ at replica $p$ when the reconciliation is complete, and likewise $\mathit{recvd}_q$ at replica $q$.
Further, let $\mathcal{M}'_p = \mathcal{M}_p \cup \mathit{recvd}_p$ and $\mathcal{M}'_q = \mathcal{M}_q \cup \mathit{recvd}_q$ be the final set of messages at both replicas.

\begin{lemma}\label{lemma:no-p-missing}
The set of messages $\mathcal{M}$ of a correct replica $p$ grows monotonically.
\end{lemma}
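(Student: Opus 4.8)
The plan is to prove this by direct inspection of the pseudocode: I would enumerate every line in Algorithms~\ref{fig:algorithm} and~\ref{fig:algorithm2} that writes to the global variable $\mathcal{M}$, and show that each such write can only add elements. Concretely, I expect to find exactly two assignments to $\mathcal{M}$ in Algorithm~\ref{fig:algorithm}: line~\ref{line:update-m-local}, where a locally broadcast message is added via $\mathcal{M} := \mathcal{M} \cup \{m\}$, and line~\ref{line:update-m} inside \textsc{HandleMissing}, where received messages are added via $\mathcal{M} := \mathcal{M} \cup \mathit{recvd}$. For Algorithm~\ref{fig:algorithm2}, I would note that it replaces only the ``on connecting'' and ``on receiving heads'' handlers while leaving \textsc{HandleMissing} and the broadcast handler unchanged, so the set of assignments to $\mathcal{M}$ is identical. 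No other line of either algorithm mutates $\mathcal{M}$ (in particular, $\mathcal{M}_\mathsf{conn}$, $\mathit{recvd}$, $\mathit{sent}$, and $\mathit{missing}$ are all connection-local variables, not $\mathcal{M}$ itself).

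Having identified these assignments, the core argument is immediate: both are of the form $\mathcal{M} := \mathcal{M} \cup X$ for some set $X$, and since $\mathcal{M} \subseteq \mathcal{M} \cup X$ always holds, the value of $\mathcal{M}$ after each write is a superset of its value beforehand. I would then conclude by a trivial induction over the sequence of writes that $\mathcal{M}$ never loses an element, i.e.\ it grows monotonically over the course of the execution.

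The only subtlety — and it is minor — is concurrency: a correct replica may run several connection threads at once, each of which can invoke \textsc{HandleMissing} and thus write to the shared variable $\mathcal{M}$. To make ``grows monotonically'' well-defined in the presence of interleaving, I would invoke the paper's stipulation (\S~\ref{sec:algorithm1}) that the blocks marked \emph{atomically} are executed by only one thread at a time on a given replica. Since both writes to $\mathcal{M}$ occur inside such atomic blocks, the writes are totally ordered on each replica, and monotonicity holds with respect to that serialized order. I do not expect any genuine obstacle here; the lemma is a structural observation about the algorithms that serves as a building block for the eventual-delivery argument that follows.
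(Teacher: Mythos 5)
Your proof is correct and follows essentially the same route as the paper's: identify the only two assignments to $\mathcal{M}$ (lines~\ref{line:update-m-local} and~\ref{line:update-m} of Algorithm~\ref{fig:algorithm}) and observe that both are unions, hence monotone. Your additional remarks about Algorithm~\ref{fig:algorithm2} leaving these writes unchanged and about the \emph{atomically} blocks serializing concurrent writes are sound refinements that the paper's proof leaves implicit, but they do not change the argument.
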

\begin{proof}
The replica $p$ only modifies $\mathcal{M}$ by generating new operations, which are added to $\mathcal{M}$ (Algorithm~\ref{fig:algorithm}, line~\ref{line:update-m-local}), or by unioning it with the set $\mathit{recvd}$ (Algorithm~\ref{fig:algorithm}, line~\ref{line:update-m}).
Thus, elements are only added to the set $\mathcal{M}$, and therefore $\mathcal{M}$ grows monotonically.
\end{proof}

\begin{lemma}\label{lemma:no-dangling}
Let $m = (v, \mathit{hs}, \mathit{sig})$ such that $m \in \mathcal{M}_p$.
Then its predecessors are also in $\mathcal{M}_p$, i.e.\ $\forall h \in \mathit{hs}.\; \exists m' \in \mathcal{M}_p.\; H(m') = h$.
\end{lemma}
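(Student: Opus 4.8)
The plan is to prove a stronger statement by induction: that at every point in the execution of a correct replica, its set $\mathcal{M}$ is \emph{predecessor-closed}, meaning that for every $(v,\mathit{hs},\mathit{sig}) \in \mathcal{M}$ and every $h \in \mathit{hs}$ there is some $m' \in \mathcal{M}$ with $H(m') = h$. Lemma~\ref{lemma:no-dangling} then follows by applying this invariant to the value of $\mathcal{M}$ at the start of the execution, which is $\mathcal{M}_p$. By Lemma~\ref{lemma:no-p-missing}, $\mathcal{M}$ only grows, and a correct replica modifies $\mathcal{M}$ in exactly two places: when broadcasting (line~\ref{line:update-m-local}) and when completing a reconciliation (line~\ref{line:update-m}). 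The base case is the initially empty $\mathcal{M}$, for which predecessor-closure holds vacuously.

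For the broadcast case, the new message $m$ has $\mathit{hs} = \mathrm{heads}(\mathcal{M})$ (line~\ref{line:broadcast-heads}), and by the definition of $\mathrm{heads}$ every hash in $\mathrm{heads}(\mathcal{M})$ is the hash of some message already in $\mathcal{M}$. Hence the direct predecessors of $m$ are present, while every other message retains its predecessors by the induction hypothesis, so $\mathcal{M} \cup \{m\}$ is still predecessor-closed.

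The reconciliation case is the crux. Here $\mathcal{M}$ is extended by $\mathit{recvd}$, but only when the guard $\mathit{missing} = \{\}$ on line~\ref{line:missing-empty} holds. I would first establish an auxiliary claim: whenever $\mathit{missing} = \{\}$, every message in $\mathit{recvd}$ has all of its predecessor hashes resolving to messages in $\mathcal{M}_\mathsf{conn} \cup \mathit{recvd}$. This follows from tracking the bookkeeping in \textsc{HandleMissing}: each message enters $\mathit{recvd}$ only through line~\ref{line:msgs-recvd}, and line~\ref{line:msgs-missing} immediately adds to $\mathit{missing}$ exactly those predecessor hashes that do not yet resolve within $\mathcal{M}_\mathsf{conn} \cup \mathit{recvd}$; a hash leaves $\mathit{missing}$ only when a message with that hash appears in $\mathit{recvd}$. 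Since $\mathit{recvd}$ grows monotonically and $\mathcal{M}_\mathsf{conn}$ is a fixed snapshot, once a predecessor hash resolves it stays resolved, so the state $\mathit{missing} = \{\}$ can only be reached when every predecessor hash of every received message has a witness in $\mathcal{M}_\mathsf{conn} \cup \mathit{recvd}$; collision-resistance of $H$ guarantees that this witness is unique and is genuinely the referenced predecessor.

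To finish the inductive step, note that $\mathcal{M}_\mathsf{conn}$ is a snapshot of $\mathcal{M}$ taken earlier (line~\ref{line:init}), so by Lemma~\ref{lemma:no-p-missing} we have $\mathcal{M}_\mathsf{conn} \subseteq \mathcal{M}$ at the moment line~\ref{line:update-m} runs. Writing $\mathcal{M}^{+} = \mathcal{M} \cup \mathit{recvd}$, any message originally in $\mathcal{M}$ keeps its predecessors by the induction hypothesis, while any message in $\mathit{recvd}$ has its predecessors in $\mathcal{M}_\mathsf{conn} \cup \mathit{recvd} \subseteq \mathcal{M}^{+}$ by the auxiliary claim; hence $\mathcal{M}^{+}$ is predecessor-closed. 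The same argument applies verbatim to Algorithm~\ref{fig:algorithm2}, since it reuses the unchanged \textsc{HandleMissing} function and message handler, and the Bloom-filter optimization only alters which messages are transmitted, never relaxing the $\mathit{missing} = \{\}$ guard. The main obstacle is the auxiliary claim about $\mathit{missing} = \{\}$: it requires reasoning across multiple reconciliation rounds about the interleaved accumulation and removal of hashes, using the monotonicity of $\mathit{recvd}$ and the fixedness of $\mathcal{M}_\mathsf{conn}$ to rule out a hash being spuriously cleared from $\mathit{missing}$.
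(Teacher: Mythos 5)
Your proof is correct and takes essentially the same route as the paper's: the paper's two-case analysis of how $m$ entered $\mathcal{M}_p$ (broadcast locally at line~\ref{line:update-m-local}, or received and merged at line~\ref{line:update-m}) is exactly your induction over the two sites where $\mathcal{M}$ is mutated, using the same two facts---$\mathit{hs} = \mathrm{heads}(\mathcal{M})$ for an earlier $\mathcal{M}$ combined with monotonicity (Lemma~\ref{lemma:no-p-missing}) in the broadcast case, and the $\mathit{missing} = \{\}$ guard together with line~\ref{line:msgs-missing} in the reconciliation case. Your auxiliary claim about the bookkeeping of $\mathit{missing}$ across rounds just spells out (correctly) a step that the paper reads off directly from line~\ref{line:msgs-missing}.
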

\begin{proof}
There are two ways $m$ can become a member of $\mathcal{M}_p$ for a correct replica $p$:
\begin{description}
    \item[Case] $m$ is broadcast by replica $p$:\\
    In this case, since $p$ is assumed to be correct, the hashes $\mathit{hs}$ are computed as $\mathit{hs} = \{H(m') \mid m' \in \mathcal{M} \wedge \mathrm{succ}^1(\mathcal{M}, m') = \{\}\,\}$ for some earlier state $\mathcal{M}$ (Algorithm~\ref{fig:algorithm}, line~\ref{line:broadcast-heads}).
    As $\mathcal{M}$ grows monotonically (Lemma~\ref{lemma:no-p-missing}), $\mathcal{M} \subseteq \mathcal{M}_p$, and thus we can deduce that $\forall h \in \mathit{hs}.\; \exists m' \in \mathcal{M}_p.\; H(m') = h$.
    \item[Case] $m$ is received from another replica (which may be faulty):\\
    During the run of the protocol at which $p$ received $m$, we have $m \in \mathit{recvd}$ and $\mathit{missing} = \{\}$ at line~\ref{line:update-m} of Algorithm~\ref{fig:algorithm}.
    Let $\mathcal{M}$ be the set of messages at $p$ immediately before that execution of line~\ref{line:update-m}.
    From $\mathit{missing} = \{\}$ and line~\ref{line:msgs-missing} of Algorithm~\ref{fig:algorithm} we have $\forall h \in \mathit{hs}.\; \exists m' \in (\mathcal{M} \cup \mathit{recvd}).\; H(m') = h$.
    Since $\mathcal{M}$ grows monotonically (Lemma~\ref{lemma:no-p-missing}) and $\mathit{recvd} \subseteq \mathcal{M}_p$ (Algorithm~\ref{fig:algorithm}, line~\ref{line:update-m}), $\forall h \in \mathit{hs}.\; \exists m' \in \mathcal{M}_p.\; H(m') = h$.
\end{description}
\end{proof}

\begin{lemma}\label{lemma:no-collision}
Let $m = (v, \mathit{hs}, \mathit{sig})$ such that $m \in \mathcal{M}_p$ and $m \in \mathcal{M}_q$.
Then the hashes $\mathit{hs}$ resolve to the same messages at $p$ and $q$, that is, $\{m' \in \mathcal{M}_p \mid H(m') \in \mathit{hs}\} = \{m' \in \mathcal{M}_q \mid H(m') \in \mathit{hs}\}$.
\end{lemma}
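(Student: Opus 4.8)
The plan is to reduce the claim to the collision-resistance of $H$ together with the ``no dangling predecessors'' guarantee of Lemma~\ref{lemma:no-dangling}. Write $L = \{m' \in \mathcal{M}_p \mid H(m') \in \mathit{hs}\}$ and $R = \{m' \in \mathcal{M}_q \mid H(m') \in \mathit{hs}\}$ for the two sets of resolved predecessors whose equality we must establish. I would prove $L = R$ by showing both inclusions, the two directions being symmetric under exchanging the roles of $p$ and $q$.

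First I would take an arbitrary $m' \in L$, so that $m' \in \mathcal{M}_p$ and $H(m') \in \mathit{hs}$; set $h = H(m')$. Since $m \in \mathcal{M}_q$ by hypothesis and $h \in \mathit{hs}$, applying Lemma~\ref{lemma:no-dangling} at replica $q$ yields a message $m'' \in \mathcal{M}_q$ with $H(m'') = h$. Now $H(m') = H(m'') = h$, and because $H$ is collision-resistant we may assume the algorithm never encounters a distinct pair of messages with the same hash; hence $m' = m''$. Therefore $m' = m'' \in \mathcal{M}_q$ with $H(m') \in \mathit{hs}$, that is, $m' \in R$, giving $L \subseteq R$. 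The same argument with $p$ and $q$ interchanged gives $R \subseteq L$, and hence $L = R$.

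The only delicate point is the appeal to collision resistance. Lemma~\ref{lemma:no-dangling} guarantees that each hash in $\mathit{hs}$ resolves to \emph{some} message at each replica, but on its own it does not guarantee that the resolving message is the \emph{same} at both replicas; that is exactly what would fail if two distinct messages shared a hash. I would therefore state explicitly (mirroring the convention the paper already adopts when it argues the predecessor graph is acyclic) that, since only polynomially many messages are ever produced and $H$ is collision-resistant, we can assume no two distinct messages handled by correct replicas collide, so that each hash uniquely identifies its preimage message. With that convention in place the equality $m' = m''$ is immediate, and the equality of the resolved predecessor sets at $p$ and $q$ follows.
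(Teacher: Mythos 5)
Your proof is correct and follows essentially the same route as the paper's: both arguments combine Lemma~\ref{lemma:no-dangling} (every hash in $\mathit{hs}$ resolves at each replica, since $m$ is in both $\mathcal{M}_p$ and $\mathcal{M}_q$) with collision-resistance of $H$ to force the resolving messages to coincide; you phrase it as a direct double-inclusion while the paper phrases it as a contradiction, which is an immaterial difference. Your explicit remark that collision-resistance is a computational assumption (so one assumes no collision is ever encountered among the polynomially many messages produced) is a slightly more careful statement of the convention the paper invokes implicitly.
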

\begin{proof}
We use proof by contradiction.
Assume there exists $h \in \mathit{hs}$ such that $\{m' \in \mathcal{M}_p \mid H(m') = h\} \neq \{m' \in \mathcal{M}_q \mid H(m') = h\}$.
By Lemma~\ref{lemma:no-dangling} we have $\{m' \in \mathcal{M}_p \mid H(m') = h\} \neq \{\}$ and similarly, $\{m' \in \mathcal{M}_q \mid H(m') = h\} \neq \{\}$.
Hence, there exist $m' \in \mathcal{M}_p$ and $m'' \in \mathcal{M}_q$ such that $m' \neq m''$ and $H(m') = H(m'') = h$.
However, this contradicts our assumption in \S~\ref{sec:algorithm} that the hash function $H(\cdot)$ is collision-resistant.
\end{proof}

\begin{lemma}\label{lemma:no-q-missing}
$\mathcal{M}_q \subseteq \mathcal{M}'_p$ when executing Algorithm~\ref{fig:algorithm}.
\end{lemma}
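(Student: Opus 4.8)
The plan is to argue by contradiction, exploiting the fact that the predecessor graph is finite and acyclic. Suppose $\mathcal{M}_q \not\subseteq \mathcal{M}'_p$, and let $X = \mathcal{M}_q \setminus \mathcal{M}'_p$ be the nonempty set of messages that $q$ holds at the start but $p$ fails to acquire. Ordering messages by the successor relation, I would pick an element $m \in X$ that is \emph{maximal}, i.e.\ such that every successor of $m$ lying in $\mathcal{M}_q$ already belongs to $\mathcal{M}'_p$; such an $m$ exists because $X$ is finite and the graph is acyclic. The whole proof then reduces to a single claim: the hash $H(m)$ must be added to $p$'s local $\mathit{missing}$ set at some point during this reconciliation.

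To establish that claim I would split on whether $m$ is a head of $\mathcal{M}_q$. If $m$ is a head, then $H(m) \in \mathrm{heads}(\mathcal{M}_q)$, so $q$ transmits $H(m)$ on line~\ref{line:send-heads}; since $m \notin \mathcal{M}_p = \mathcal{M}_\mathsf{conn}$, line~\ref{line:heads-missing} adds $H(m)$ to $\mathit{missing}$. Otherwise $m$ has a successor $m' = (v',\mathit{hs}',\mathit{sig}') \in \mathcal{M}_q$ with $H(m) \in \mathit{hs}'$, and by maximality $m' \in \mathcal{M}'_p$. Here there are two sub-cases. If $m' \in \mathcal{M}_p$, then Lemma~\ref{lemma:no-dangling} forces all predecessors of $m'$ into $\mathcal{M}_p$, so (by collision-resistance, identifying the unique preimage of $H(m)$) we get $m \in \mathcal{M}_p \subseteq \mathcal{M}'_p$, contradicting $m \in X$. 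If instead $m' \in \mathit{recvd}_p$, I would look at the round in which $m'$ first enters $\mathit{recvd}$ on line~\ref{line:msgs-recvd}: at that moment either some message with hash $H(m)$ is already present in $\mathcal{M}_\mathsf{conn} \cup \mathit{recvd}$ --- which by collision-resistance is $m$ itself, again contradicting $m \in X$ --- or $H(m)$ is genuinely unresolved, is therefore collected on line~\ref{line:msgs-missing}, and is passed to \textsc{HandleMissing}, entering $\mathit{missing}$.

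It remains to convert ``$H(m)$ enters $\mathit{missing}$'' into ``$m \in \mathit{recvd}_p$''. For this I would use that a hash is removed from $\mathit{missing}$ only by the subtraction in \textsc{HandleMissing}, which requires a matching message in $\mathit{recvd}$, and that $\mathit{recvd}$ grows monotonically; consequently, once $H(m)$ has been added, it can be absent from $\mathit{missing}$ only if $\mathit{recvd}$ contains a preimage of $H(m)$, which by collision-resistance is $m$. Since $p$ completes reconciliation only when $\mathit{missing} = \{\}$ (line~\ref{line:missing-empty}), this yields $m \in \mathit{recvd}_p \subseteq \mathcal{M}'_p$, the desired contradiction, and hence $\mathcal{M}_q \subseteq \mathcal{M}'_p$. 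I expect the main obstacle to be the bookkeeping in the second sub-case: pinpointing the step at which $m'$ enters $\mathit{recvd}$ and showing the predecessor hash $H(m)$ is unresolved precisely then, while cleanly ruling out the possibility that $m$ was already known (which would short-circuit the argument). The monotonicity of $\mathit{recvd}$ and the collision-resistance of $H$ are the two facts that make these steps go through.
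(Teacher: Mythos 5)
Your proof is correct, but it takes a genuinely different route from the paper's. The paper also argues by contradiction and distinguishes whether $m$ is a head of $\mathcal{M}_q$, but in the non-head case it picks a head $m' \in \mathrm{succ}^*(\mathcal{M}_q, m)$ and runs an induction along the whole successor path from $m'$ down to $m$, tracing the protocol exchange at every step: each unresolved predecessor hash enters $\mathit{missing}$, $p$ sends a $\mathsf{needs}$ request, and the correct replica $q$ answers with a $\mathsf{msgs}$ response, with Lemma~\ref{lemma:no-collision} guaranteeing that $p$ and $q$ agree on the identity of each predecessor along the way. You replace that path induction by an extremal choice~-- a maximal element of $X = \mathcal{M}_q \setminus \mathcal{M}'_p$~-- so that a single successor step suffices (your sub-case $m' \in \mathcal{M}_p$, via Lemma~\ref{lemma:no-dangling} plus collision resistance, plays exactly the role of the paper's Lemma~\ref{lemma:no-collision}), and you replace the protocol trace by a termination invariant: within a run, $\mathit{missing}$ is shrunk only by subtracting hashes of messages in the monotonically growing $\mathit{recvd}$, so $p$ reaching line~\ref{line:finish} with $\mathit{missing} = \{\}$ already forces a preimage of $H(m)$, hence $m$ itself, into $\mathit{recvd}_p$. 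Notably, your argument never needs $q$ to answer a $\mathsf{needs}$ request for $m$; it uses only the lemma's hypothesis that $p$ completes, together with $q$'s initial $\mathsf{heads}$ message, which makes it shorter and somewhat more robust than the paper's explicit request/response induction. One cosmetic slip: in your head case, $H(m)$ is not necessarily \emph{added} to $\mathit{missing}$, since \textsc{HandleMissing} filters out hashes of messages already in $\mathit{recvd}$ (e.g.\ if $m$ arrived via an eager $\langle\mathsf{msgs}\rangle$ before $q$'s $\mathsf{heads}$); but in that event $m \in \mathit{recvd}_p \subseteq \mathcal{M}'_p$ contradicts $m \in X$ directly, exactly as the reasoning in your closing paragraph already handles for the other sub-case, so the gap is trivially repaired.
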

\begin{proof}
We use proof by contradiction.
Assume to the contrary that $\exists m \in \mathcal{M}_q.\; m \notin  \mathcal{M}'_p$.
Since $\mathit{recvd} \subseteq \mathcal{M}'_p$ and elements are only added to $\mathit{recvd}$ (Algorithm~\ref{fig:algorithm}, line~\ref{line:msgs-recvd}) then $m \notin  \mathcal{M}'_p$ implies that $m \notin \mathit{recvd}$ on replica $p$.
We now consider two cases depending on the value returned by $\mathrm{succ}^1(\mathcal{M}_q, m)$:
\begin{description}
    \item[Case] $\mathrm{succ}^1(\mathcal{M}_q, m) = \{\}$:\\
    In this case, $H(m) \in \mathrm{heads}(\mathcal{M}_q)$, and so the first $\mathsf{heads}$ request from $q$ to $p$ will contain $H(m)$ (Algorithm~\ref{fig:algorithm}, line~\ref{line:send-heads}).
    Since $m \notin \mathcal{M}_p$, replica $p$ will send a $\mathsf{needs}$ request to $q$ containing $H(m)$ (Algorithm~\ref{fig:algorithm}, line~\ref{line:heads-missing}).
    Upon receiving the $\mathsf{needs}$ message containing $H(m)$, replica $q$ will reply with an $\mathsf{msgs}$ response containing $m$ (Algorithm~\ref{fig:algorithm}, line~\ref{line:send-msgs}).
    Replica $p$ will receive the $\mathsf{msgs}$ response with $m$ from replica $q$ and will add $m$ to $\mathit{recvd}$ (Algorithm~\ref{fig:algorithm}, line~\ref{line:msgs-recvd}).
    This contradicts our previous finding that $m \notin \mathit{recvd}$.
    
    \item[Case] $\mathrm{succ}^1(\mathcal{M}_q, m) \ne \{\}$:\\
    In this case, $H(m) \notin \mathrm{heads}(\mathcal{M}_q)$.
    Since $\mathcal{M}_q$ is a DAG, there must exist a message $m'$ such that $H(m') \in \mathrm{heads}(\mathcal{M}_q)$ and $m' \in \mathrm{succ}^*(\mathcal{M}_q, m)$.
    As in the previous case, $H(m') \in \mathrm{heads}(\mathcal{M}_q)$ implies that $m' \in \mathit{recvd}$.
    Note that none of the messages in $\mathrm{succ}^*(\mathcal{M}_q, m)$ are in $\mathcal{M}_p$ as $m \notin \mathcal{M}'_p$ implies that  $m \notin \mathcal{M}_p$ (Lemma~\ref{lemma:no-p-missing}).
    If $m' \in \mathrm{succ}^1(\mathcal{M}_q, m)$ then it must the case that $m \in \mathit{recvd}$ by the time that $\mathit{missing} = \emptyset$, otherwise $m \in \mathit{missing}$ (Algorithm~\ref{fig:algorithm}, line~\ref{line:msgs-missing}).
    By induction over the path of successors from $m'$ to $m$, we observe that $m \in \mathit{recvd}$.
    At each step of the induction, the replicas move to the predecessors of the previous step; due to Lemma~\ref{lemma:no-collision}, $p$ and $q$ agree about the identity of these predecessors.
    This contradicts our previous finding that $m \notin \mathit{recvd}$.
\end{description}
\end{proof}

\begin{lemma}\label{lemma:no-q-missing2}
$\mathcal{M}_q \subseteq \mathcal{M}'_p$ when executing Algorithm~\ref{fig:algorithm2}.
\end{lemma}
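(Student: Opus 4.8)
The plan is to reduce this to Lemma~\ref{lemma:no-q-missing} by observing that Algorithm~\ref{fig:algorithm2} differs from Algorithm~\ref{fig:algorithm} only by behaviour that is purely \emph{additive}: it never prevents the heads-driven $\mathsf{needs}$/$\mathsf{msgs}$ reconciliation of Algorithm~\ref{fig:algorithm} from running, and its only effect on a receiver is to insert extra messages into $\mathit{recvd}$. The setup ($p$, $q$ both correct, both completing the run, with $\mathcal{M}'_p = \mathcal{M}_p \cup \mathit{recvd}_p$) is unchanged, so I would again argue by contradiction from $m \in \mathcal{M}_q \wedge m \notin \mathcal{M}'_p$ (hence $m \notin \mathit{recvd}_p$).

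First I would establish that the anchor of the Algorithm~\ref{fig:algorithm} argument is preserved. Algorithm~\ref{fig:algorithm2} replaces only the ``on connecting'' and ``on receiving heads'' handlers, leaving the ``on receiving $\mathsf{msgs}$'', ``on receiving $\mathsf{needs}$'', and $\textsc{HandleMissing}$ routines of Algorithm~\ref{fig:algorithm} inherited verbatim. In particular, $p$ and $q$ still exchange $\mathrm{heads}(\mathcal{M}_\mathsf{conn})$ (line~\ref{line:a2-send-heads}), and each still invokes $\textsc{HandleMissing}$ on the heads it does not recognise (line~\ref{line:a2-heads-missing}) --- exactly the starting point of Lemma~\ref{lemma:no-q-missing}. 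Because the termination test $\mathit{missing} = \{\}$ (line~\ref{line:missing-empty}), the completion step (line~\ref{line:finish}), and the predecessor-resolution computation for received messages (line~\ref{line:msgs-missing}) are all unchanged, the ``trace up to a head of $\mathcal{M}_q$, then induct down the predecessor path'' structure of Lemma~\ref{lemma:no-q-missing} applies identically; I would restate that induction, reusing Lemmas~\ref{lemma:no-p-missing}, \ref{lemma:no-dangling}, and~\ref{lemma:no-collision} as before, to conclude $m \in \mathit{recvd}_p$, contradicting the hypothesis.

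Second, I would argue that the only genuinely new behaviour --- the Bloom-filter reply on lines~\ref{line:bloom-member}--\ref{line:a2-heads-reply} --- cannot break this. That reply is sent as an ordinary $\mathsf{msgs}$ message and handled by the unchanged ``on receiving $\mathsf{msgs}$'' code, so its sole effect on the receiver is to add correctly-signed messages to $\mathit{recvd}$ (line~\ref{line:msgs-recvd}). Since $\mathit{recvd}$ only grows and a hash leaves $\mathit{missing}$ only once a matching message is actually present in $\mathit{recvd}$, the optimization can front-load messages but can never let a correct replica reach line~\ref{line:finish} while still lacking a needed message. Hence it at worst shortcuts some $\mathsf{needs}$ round trips and never removes one that Lemma~\ref{lemma:no-q-missing} relies on.

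The main obstacle is making precise that the Bloom path is an optimization and not a replacement, in the face of the two ways it can be ``wrong'': false positives in $\textsc{BloomMember}$, and a stale $\mathit{oldHeads}$ whose predecessors are not in fact common to both replicas (as noted in \S~\ref{sec:algorithm-discussion}). I would handle both uniformly by observing that $\textsc{MessagesSince}$ and the Bloom test influence only the contents of the reply on line~\ref{line:a2-heads-reply}, never the set $\mathit{missing}$ that drives termination. Any message the Bloom reply fails to carry is therefore still requested and delivered through the unchanged $\mathsf{needs}$/$\mathsf{msgs}$ loop, since the heads exchange causes $p$ to fetch the entire predecessor closure of $q$'s heads, which by Lemma~\ref{lemma:no-dangling} is all of $\mathcal{M}_q$. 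Thus $\mathcal{M}_q \subseteq \mathcal{M}'_p$ as required.
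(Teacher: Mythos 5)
Your proposal is correct and follows essentially the same route as the paper's proof: the paper also argues by contradiction and reduces to Lemma~\ref{lemma:no-q-missing}, observing that the Bloom-filter reply can only add messages to $\mathit{recvd}$ (a two-case split on whether $m \in \mathit{reply}$, where $m \in \mathit{reply}$ contradicts $m \notin \mathit{recvd}$ immediately) and that from line~\ref{line:a2-heads-missing} onward the algorithm coincides with Algorithm~\ref{fig:algorithm}. Your additional observations --- that false positives and stale $\mathit{oldHeads}$ affect only the reply contents and never the $\mathit{missing}$ set driving termination --- are a somewhat more explicit justification of the same reduction the paper states more tersely.
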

\begin{proof}
We use proof by contradiction.
Assume $m \in \mathcal{M}_q$ such that $m \notin \mathcal{M}_p'$ and $m \notin \mathit{recvd}$ like in Lemma~\ref{lemma:no-q-missing}.
Let $\mathit{filter}$ be the Bloom filter in the initial message from $p$ to $q$ in the current protocol run (Algorithm~\ref{fig:algorithm2}, line~\ref{line:make-bloom}).
Even though $m \notin \mathcal{M}_p$ (by Lemma~\ref{lemma:no-p-missing}), $\textsc{BloomMember}(\mathit{filter}, m)$ may return a false positive.
Moreover, if it returns true, $m$ may or may not be a successor of a $\mathit{bloomNegative}$ item as computed in Algorithm~\ref{fig:algorithm2}, lines~\ref{line:bloom-member}--\ref{line:bloom-succ}.
As a result it is possible that either $m \in \mathit{reply}$ or $m \notin \mathit{reply}$ after $q$ has executed line~\ref{line:bloom-succ} of Algorithm~\ref{fig:algorithm2}.

If $m \in \mathit{reply}$ then $p$ will receive an $\mathsf{msgs}$ response containing $m$ from $q$, which will be added to $\mathit{recvd}$, contradicting our assumption that $m \notin \mathit{recvd}$.
If $m \notin \mathit{reply}$ we continue to line~\ref{line:a2-heads-missing} of Algorithm~\ref{fig:algorithm2}, from which point onward the algorithm is the same as Algorithm~\ref{fig:algorithm}.
Thus, we have $\mathcal{M}_q \subseteq \mathcal{M}'_p$ by Lemma~\ref{lemma:no-q-missing}.
\end{proof}

\begin{lemma}\label{lemma:no-extras}
$\mathcal{M}'_p \subseteq \mathcal{M}_p \cup \mathcal{M}_q$.
\end{lemma}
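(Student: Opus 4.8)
The plan is to reduce the claim to the stronger containment $\mathit{recvd}_p \subseteq \mathcal{M}_q$. Since $\mathcal{M}'_p = \mathcal{M}_p \cup \mathit{recvd}_p$ by definition, establishing $\mathit{recvd}_p \subseteq \mathcal{M}_q$ immediately yields $\mathcal{M}'_p \subseteq \mathcal{M}_p \cup \mathcal{M}_q$. Thus the whole argument concentrates on the provenance of the messages that accumulate in $p$'s variable $\mathit{recvd}$.

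First I would observe that $\mathit{recvd}$ at $p$ is only ever extended on line~\ref{line:msgs-recvd} of Algorithm~\ref{fig:algorithm}, where its new elements are drawn from the payload $\mathit{new}$ of an $\langle\mathsf{msgs}: \mathit{new}\rangle$ message received on the $p$--$q$ connection (the signature filter $\mathrm{check}(\cdot)$ only removes elements, so it can be dropped for an upper-bound argument). Since communication is authenticated (\S~\ref{sec:system-model}) and $q$ is the correct remote endpoint of this connection, every such $\mathsf{msgs}$ message was produced by $q$ following the protocol.

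Next I would enumerate every point at which a correct $q$ emits an $\langle\mathsf{msgs}: \mathit{reply}\rangle$ response during the run and check that $\mathit{reply} \subseteq \mathcal{M}_\mathsf{conn}$ at $q$. For Algorithm~\ref{fig:algorithm} the only such point is line~\ref{line:send-msgs}, whose payload is computed on line~\ref{line:needs-reply} as $\{m \in \mathcal{M}_\mathsf{conn} \mid \dots\}$, manifestly a subset of $\mathcal{M}_\mathsf{conn}$. For Algorithm~\ref{fig:algorithm2} there is the additional response on line~\ref{line:a2-heads-reply}, whose payload is built on line~\ref{line:bloom-succ} from $\mathit{bloomNegative} \subseteq \textsc{MessagesSince}(\mathit{oldHeads}) \subseteq \mathcal{M}_\mathsf{conn}$ together with $\bigcup_{m \in \mathit{bloomNegative}} \mathrm{succ}^*(\mathcal{M}_\mathsf{conn}, m) \subseteq \mathcal{M}_\mathsf{conn}$, hence also a subset of $\mathcal{M}_\mathsf{conn}$. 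Because the ``on connecting'' handler sets $\mathcal{M}_\mathsf{conn} := \mathcal{M}$ at the start of the run (line~\ref{line:init}), the snapshot $\mathcal{M}_\mathsf{conn}$ at $q$ equals $\mathcal{M}_q$, so every message $p$ receives from $q$ lies in $\mathcal{M}_q$. Combining these facts gives $\mathit{recvd}_p \subseteq \mathcal{M}_q$, and therefore $\mathcal{M}'_p = \mathcal{M}_p \cup \mathit{recvd}_p \subseteq \mathcal{M}_p \cup \mathcal{M}_q$.

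The main subtlety I expect is the bookkeeping over which $\mathsf{msgs}$ messages $p$ may legitimately count into $\mathit{recvd}_p$: I need to confine attention to the responses $q$ sends within this reconciliation connection, and to invoke the correctness of $q$ to guarantee that it never fabricates or forwards a message lying outside its own snapshot $\mathcal{M}_\mathsf{conn}$. Once the relevant emission points are pinned down, each reduces to a one-line containment check, so the remaining work is routine.
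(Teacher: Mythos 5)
Your proposal is correct, but it takes a genuinely different route from the paper. The paper proves this lemma by contradiction with a two-hop provenance chase: supposing $m \in \mathcal{M}'_p$ with $m \notin \mathcal{M}_p \cup \mathcal{M}_q$, it argues $p$ must have received $m$ from $q$; that $q$ only sends messages in $\mathcal{M}_q$ or $\mathcal{M}'_q$; hence (since $m \notin \mathcal{M}_q$) $q$ must have received $m$ from $p$; and finally that $p$, lacking $m \in \mathcal{M}_p$, could never have sent it~--- contradiction. You instead prove the stronger, direct containment $\mathit{recvd}_p \subseteq \mathcal{M}_q$ by enumerating every emission point at a correct $q$ and checking each payload against the read-only snapshot $\mathcal{M}_\mathsf{conn} = \mathcal{M}_q$ fixed on line~\ref{line:init}. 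Your version buys two things: it is mechanical (each emission point reduces to a one-line subset check, with no chase through who-sent-what-first), and it explicitly covers Algorithm~\ref{fig:algorithm2}'s extra reply on line~\ref{line:a2-heads-reply}, which the paper's proof of this lemma never mentions (its argument cites only Algorithm~\ref{fig:algorithm}'s lines, even though the lemma is used for both algorithms via Lemma~\ref{lemma:reconcile-equal}). The paper's contradiction-style chase is, in exchange, somewhat more robust: by allowing $q$ to reply from $\mathcal{M}'_q$ and then tracing such messages back to $p$, it does not lean as heavily on the snapshot semantics of $\mathcal{M}_\mathsf{conn}$. One caution applies equally to both proofs: each implicitly assumes the only $\mathsf{msgs}$ traffic on the $p$--$q$ connection during the run is reconciliation traffic~--- an eager send (line~\ref{line:eager-send}) of a freshly broadcast message, or an eager relay (line~\ref{line:eager-relay}) triggered by a concurrent reconciliation with a third replica, could inject messages outside $\mathcal{M}_q$ into $\mathit{recvd}_p$; the paper only acknowledges such concurrent deliveries later, as exceptions in the eventual-delivery theorem. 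Your closing remark about confining attention to this connection's responses shows you saw exactly this subtlety, so the proposal stands at (slightly above) the paper's own level of rigor.
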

\begin{proof}
We use proof by contradiction.
Assume to the contrary that $\exists m \in \mathcal{M}'_p.\; m \notin \mathcal{M}_p  \land  m \notin \mathcal{M}_q$.
Since $\exists m \in \mathcal{M}'_p$, replica $p$ must have received a message containing $m$ from replica $q$ before it completed reconciliation (Algorithm~\ref{fig:algorithm}, lines \ref{line:recv-msgs}--\ref{line:msgs-handle-missing} and \ref{line:update-m}).
Replica $q$ will only send a message containing $m$ if $m \in \mathcal{M}_q$ or $m \in \mathcal{M}'_q$, depending on whether replica $q$ has completed the reconciliation algorithm.
Since $m \notin \mathcal{M}_q$ then replica $q$ must have received a message containing $m$ from replica $p$.
Since $m \notin \mathcal{M}_p$ then replica $p$ will not send this message and therefore $m$ does not exist.
\end{proof}

\begin{lemma}\label{lemma:reconcile-equal}
When two correct replicas $p$ and $q$, with initial sets of messages $\mathcal{M}_p$ and $\mathcal{M}_q$, have completed reconciliation (i.e.\ both have reached line~\ref{line:finish} of Algorithm~\ref{fig:algorithm}), then their final sets of messages $\mathcal{M}'_p$ and $\mathcal{M}'_q$  are both equal to $\mathcal{M}_p \cup \mathcal{M}_q$.
\end{lemma}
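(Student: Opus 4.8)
The plan is to prove this lemma purely by combining the containments already established in the preceding lemmas, since all of the genuinely technical work has been done there. First I would pin down $\mathcal{M}'_p$ via two inclusions. For $\mathcal{M}_p \cup \mathcal{M}_q \subseteq \mathcal{M}'_p$, note that $\mathcal{M}_p \subseteq \mathcal{M}'_p$ is immediate from the definition $\mathcal{M}'_p = \mathcal{M}_p \cup \mathit{recvd}_p$, while $\mathcal{M}_q \subseteq \mathcal{M}'_p$ is exactly Lemma~\ref{lemma:no-q-missing} (or Lemma~\ref{lemma:no-q-missing2} when running Algorithm~\ref{fig:algorithm2}). For the reverse inclusion $\mathcal{M}'_p \subseteq \mathcal{M}_p \cup \mathcal{M}_q$, I would invoke Lemma~\ref{lemma:no-extras} directly. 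Together these yield $\mathcal{M}'_p = \mathcal{M}_p \cup \mathcal{M}_q$.

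Next I would observe that the roles of $p$ and $q$ are symmetric: the statements of Lemmas~\ref{lemma:no-q-missing}, \ref{lemma:no-q-missing2}, and~\ref{lemma:no-extras} hold with $p$ and $q$ interchanged, because the reconciliation protocol treats both sides identically (each side sends its heads, issues $\mathsf{needs}$ requests, and replies with $\mathsf{msgs}$ in the same way) and each lemma was proved using only properties that hold for an arbitrary correct replica completing reconciliation. Applying the symmetric versions gives $\mathcal{M}_q \cup \mathcal{M}_p \subseteq \mathcal{M}'_q$ and $\mathcal{M}'_q \subseteq \mathcal{M}_q \cup \mathcal{M}_p$, hence $\mathcal{M}'_q = \mathcal{M}_p \cup \mathcal{M}_q$. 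Combining the two results gives $\mathcal{M}'_p = \mathcal{M}'_q = \mathcal{M}_p \cup \mathcal{M}_q$, which is the claim.

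I do not expect a serious obstacle, as this lemma is essentially a bookkeeping step that packages the earlier containments. The one point requiring care is the appeal to symmetry: I would verify that each cited lemma was indeed established assuming only that both $p$ and $q$ are correct and both reach line~\ref{line:finish}, so that relabelling the two replicas is legitimate. Should any earlier proof have implicitly fixed an asymmetric role, I would re-derive the $q$-side inclusions explicitly; but a quick inspection of the protocol confirms that the exchange is fully symmetric, so the relabelling is sound.
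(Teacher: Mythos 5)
Your proof is correct and follows essentially the same route as the paper's: it combines $\mathcal{M}_p \subseteq \mathcal{M}'_p$, Lemmata~\ref{lemma:no-q-missing} and~\ref{lemma:no-q-missing2}, and Lemma~\ref{lemma:no-extras}, then appeals to the symmetry of the protocol to handle $\mathcal{M}'_q$, exactly as the paper does (the only cosmetic difference being that you obtain $\mathcal{M}_p \subseteq \mathcal{M}'_p$ directly from the definition $\mathcal{M}'_p = \mathcal{M}_p \cup \mathit{recvd}_p$ rather than citing Lemma~\ref{lemma:no-p-missing}). Your extra care in checking that the earlier lemmas do not fix asymmetric roles is a sound refinement of the paper's terser ``by swapping $p$ and $q$''.
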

\begin{proof}
We have $\mathcal{M}_p \subseteq \mathcal{M}'_p$ by Lemma \ref{lemma:no-p-missing}, $\mathcal{M}_q \subseteq \mathcal{M}'_p$ by Lemmata \ref{lemma:no-q-missing} and \ref{lemma:no-q-missing2}, and $\mathcal{M}'_p \subseteq \mathcal{M}_p \cup \mathcal{M}_q$ by Lemma \ref{lemma:no-extras}.
From these facts we have shown that $\mathcal{M}'_p = \mathcal{M}_q \cup \mathcal{M}_q$.
Similarly, by swapping $p$ and $q$ we can show that $\mathcal{M}'_q = \mathcal{M}_q \cup \mathcal{M}_q$.
\end{proof}

\begin{lemma}\label{lemma:termination}
If two correct replicas attempt reconciliation an infinite number of times, then there is an infinite number of protocol runs in which the algorithm terminates (i.e.\ both replicas reach line~\ref{line:finish} of Algorithm~\ref{fig:algorithm}), assuming the system model of \S~\ref{sec:system-model}.
\end{lemma}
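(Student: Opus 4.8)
The plan is to split the argument into two independent parts: first, that any single protocol run in which neither replica crashes terminates after finitely many round trips; and second, that infinitely many of the attempted runs are of this crash-free kind.

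For the single-run part, I would fix a run in which $p$ and $q$ both stay up throughout, and track the backward walk through the predecessor graph that the $\mathsf{needs}$/$\mathsf{msgs}$ exchange performs. The snapshots $\mathcal{M}_\mathsf{conn}$ taken on line~\ref{line:init} are finite and, being subsets of the acyclic message graph, contain no cycles. Every message that $p$ ever places in $\mathit{recvd}$ is a genuine message of the correct replica $q$, so by Lemma~\ref{lemma:no-dangling} all of its predecessors lie in $\mathcal{M}_q$; symmetrically for $q$. Hence the set of messages ever requested in the run is confined to the finite set $\mathcal{M}_p \cup \mathcal{M}_q$. I would then establish strict progress: $\mathit{recvd}$ grows monotonically, and in \textsc{HandleMissing} a hash is removed from $\mathit{missing}$ once its message is in $\mathit{recvd}$ and is never re-added, so across successive round trips the frontier of unresolved hashes moves strictly toward the common ancestors and cannot revisit a resolved message. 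Because $\mathcal{M}_p \cup \mathcal{M}_q$ is finite, after finitely many round trips no unresolved hash remains, the test on line~\ref{line:missing-empty} succeeds at both replicas, and both reach line~\ref{line:finish}. Since every exchanged message is eventually delivered under the reliable-link assumption when no crash interrupts the run, this finite sequence of round trips completes in finite time. For Algorithm~\ref{fig:algorithm2} the only difference is the initial Bloom-filter exchange, after which control reaches line~\ref{line:a2-heads-missing} and the run proceeds exactly as in Algorithm~\ref{fig:algorithm}, so termination reduces to the case just analyzed.

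For the second part, I would observe that a run aborts only when $p$ or $q$ crashes during it. Under the crash-recovery reading of the system model of \S~\ref{sec:system-model}, a correct replica crashes only finitely often and is therefore eventually permanently up, so there is a time $t_0$ after which neither $p$ nor $q$ crashes. Every reconciliation attempt that begins after $t_0$ is crash-free and hence terminates by the first part. Because reconciliation is attempted infinitely often, infinitely many attempts begin after $t_0$, giving infinitely many terminating runs.

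The main obstacle I anticipate is this second part rather than the mechanical single-run argument: the lemma is a liveness claim, and everything hinges on ruling out a schedule that times a crash to interrupt every single run. I therefore expect the delicate step to be justifying, from the system model, that correct replicas have cofinitely many crash-free windows (equivalently, are eventually up), so that the asynchronous, unbounded-latency schedule cannot keep every run perpetually aborted; by contrast, the confinement-and-progress bookkeeping for a single run is routine once the finiteness and acyclicity of the snapshots are in hand.
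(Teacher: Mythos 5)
Your proposal is correct and follows essentially the same two-part decomposition as the paper's proof: crash-free runs terminate in finitely many round trips because the message graph is finite and acyclic and Lemma~\ref{lemma:no-dangling} guarantees every requested hash resolves at the correct peer, and infinitely many of the attempted runs are crash-free. You are in fact somewhat more explicit than the paper on both halves --- spelling out the monotone $\mathit{recvd}$/$\mathit{missing}$ frontier bookkeeping, the reduction of Algorithm~\ref{fig:algorithm2} to Algorithm~\ref{fig:algorithm} after line~\ref{line:a2-heads-missing}, and the eventually-up reading of the crash-recovery model, where the paper simply asserts that only a subset of reconciliations can be affected by crashes.
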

\begin{proof}
Our system model assumes network communication is reliable, but it allows correct replicas to crash and recover.
If a crash occurs during a reconciliation, that reconciliation may be aborted.
However, if we perform reconciliation an infinite number of times, then only a subset of reconciliations will be affected by crashes, and an infinite number of reconciliations will be free from crashes.

The graph of messages $\mathcal{M}_p$ at any correct replica $p$ is finite and contains no cycles.
Therefore, every vertex $m' \in \mathcal{M}_p$ can be reached in a finite number of steps by starting a graph traversal at $\mathrm{heads}(\mathcal{M}_p)$ and, in each step, moving from each vertex to its predecessors.
Moreover, by Lemma~\ref{lemma:no-dangling}, $\mathcal{M}_p$ at any correct replica $p$ contains only hashes that are the hash of another message in $\mathcal{M}_p$.
Hence, in a connection in which neither replica crashes, the algorithm will always reach the state $\mathit{missing} = \{\}$ and terminate (i.e.\ reach line~\ref{line:finish} of Algorithm~\ref{fig:algorithm}) in a finite number of round-trips of $\mathsf{needs}$ requests and $\mathsf{msgs}$ responses.

Since there are an infinite number of connection attempts that are free from crashes, and the algorithm always terminates for these connections, we can conclude that there are an infinite number of protocol runs in which the algorithm terminates.
\end{proof}

\begin{theorem}
Algorithms~\ref{fig:algorithm} and~\ref{fig:algorithm2} implement causal broadcast, as defined in \S~\ref{sec:broadcast}, in the Byzantine system model of \S~\ref{sec:system-model}.
\end{theorem}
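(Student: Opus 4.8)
The plan is to observe that almost all of the work has already been done by the preceding lemmas, so that only the \emph{eventual delivery} property remains, and to derive it by propagating a delivered message along a path of correct replicas. First I would note that Lemma~\ref{lemma:easy-properties} already establishes \emph{self-delivery}, \emph{authenticity}, \emph{non-duplication}, and \emph{causal order} for both algorithms; hence it suffices to prove eventual delivery, namely that if a correct replica $p$ delivers a message $m$, then every correct replica eventually delivers $m$.

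Next I would fix a correct replica $p$ that delivers $m$ and an arbitrary correct replica $r$, and invoke the connectivity assumption of \S~\ref{sec:system-model}: the correct replicas form a single connected component, so there is a path $p = r_0, r_1, \dots, r_k = r$ in which every $r_i$ is correct and each consecutive pair is joined by a network link. The claim is that $m$ travels along this path. The base case is immediate, since $p$ delivered $m$ gives $m \in \mathcal{M}_{r_0}$ from that point onward, and by Lemma~\ref{lemma:no-p-missing} it remains there forever.

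For the inductive step I would argue that once $m \in \mathcal{M}_{r_i}$ at some finite time $t_i$, the message crosses to $r_{i+1}$. Because adjacent replicas reconnect and reconcile approximately periodically, only finitely many reconciliation attempts between $r_i$ and $r_{i+1}$ begin before $t_i$; combined with Lemma~\ref{lemma:termination}, which guarantees infinitely many terminating runs, there must be a terminating reconciliation that \emph{starts} after $t_i$. By monotonicity (Lemma~\ref{lemma:no-p-missing}), $m$ lies in $r_i$'s starting set for that run, so Lemma~\ref{lemma:reconcile-equal} forces both final sets to equal the union, whence $m \in \mathcal{M}_{r_{i+1}}$ afterwards. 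Moreover $r_{i+1}$ actually \emph{delivers} $m$: if $m$ was not already present it lies in the newly received set $\mathit{recvd} \setminus \mathcal{M}$ and is delivered on line~\ref{line:deliver}, while if it was already present then $r_{i+1}$ had delivered it earlier, since every element of $\mathcal{M}$ at a correct replica was either generated or delivered. Iterating along the path yields $m \in \mathcal{M}_r$ and that $r$ delivers $m$; as $r$ was arbitrary, eventual delivery holds for both algorithms, completing the proof.

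The hard part, I expect, is the temporal argument in the inductive step: making precise that a terminating reconciliation occurs \emph{after} $m$ enters $\mathcal{M}_{r_i}$, rather than merely that terminating reconciliations exist at all. This rests on carefully combining three ingredients---periodicity (finitely many attempts in any finite prefix of time), Lemma~\ref{lemma:termination} (infinitely many terminating runs), and Lemma~\ref{lemma:no-p-missing} (monotonicity, so $m$ never leaves $\mathcal{M}_{r_i}$ once added)---together with the observation that Lemma~\ref{lemma:reconcile-equal} is phrased in terms of the \emph{starting} sets of a run, which is exactly why the chosen run must begin after $t_i$.
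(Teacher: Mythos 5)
Your proposal is correct and takes essentially the same route as the paper's proof: reduce to \emph{eventual delivery} via Lemma~\ref{lemma:easy-properties}, then propagate $m$ along a path of correct replicas guaranteed by the connectivity assumption of \S~\ref{sec:system-model}, using Lemma~\ref{lemma:no-p-missing}, Lemma~\ref{lemma:termination}, and Lemma~\ref{lemma:reconcile-equal} at each hop. If anything, your insistence on choosing a terminating reconciliation that \emph{starts} after $m$ enters $\mathcal{M}_{r_i}$ (so that $m$ lies in the starting set to which Lemma~\ref{lemma:reconcile-equal} applies) makes the temporal step slightly more precise than the paper's informal remark that missing messages ``will be sent in the next reconciliation.''
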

\begin{proof}
Lemma~\ref{lemma:easy-properties} proves the properties apart from \emph{eventual delivery}.
To prove eventual delivery, for any two correct replicas $p$ and $q$, we must show that a message delivered by $p$ will also be delivered by $q$.
We assume in \S~\ref{sec:system-model} that the correct replicas form a connected component in the graph of replicas and network links.
Thus, either there is a direct network link between $p$ and $q$, or there is a path of network links on which all of the intermediate links are also correct.

Assume that any two adjacent replicas on this path periodically attempt a reconciliation without a bound on the number of reconciliations.
Thus, in an execution of infinite duration, there will be an infinite number of reconciliations between any two adjacent replicas.
By Lemma~\ref{lemma:termination}, an infinite number of these reconciliations will complete.
By Lemma~\ref{lemma:reconcile-equal}, at the instant in which one of these reconciliations completes, the set of messages delivered by one replica equals the set of messages delivered by the other replica, with the exception of any messages delivered by concurrent reconciliations.

Any messages delivered by one replica while the reconciliation with the other replica was in progress will be sent in the next reconciliation, which always exists, since we are assuming an infinite number of reconciliations.
After a reconciliation where one replica completes while the other does not (e.g.\ due to a crash just before completing), the sets of messages delivered by the two replicas may be different, but again the missing messages will be sent in the next reconciliation.

Let $m$ be a message that has been delivered by $p$ at some point in time.
We have $m \in \mathcal{M}_p$ from that time onward, since $\mathcal{M}_p$ is exactly the set of delivered messages, and it grows monotonically (Lemma~\ref{lemma:no-p-missing}).
Thus, $m$ will eventually be delivered by any correct replica to which $p$ has a direct network link.
These replicas will eventually relay $m$ to their direct neighbors, and so on, until $m$ is delivered to $q$ through successive reconciliations along the path from $p$ to $q$.
Therefore, $m$ is eventually delivered by $q$.
\end{proof}

\section{Proofs for BEC replication}\label{sec:bec-proof}

In this appendix we prove that Algorithm~\ref{fig:algorithm3} provides Byzantine Eventual Consistency, as defined in \S~\ref{sec:bec-definition}.

\begin{lemma}\label{lemma:safety-check}
For a given message $m$, the conditions on lines~\ref{line:safety-check} and~\ref{line:pred-check} of Algorithm~\ref{fig:algorithm3} evaluate to the same result ($\mathsf{true}$ or $\mathsf{false}$) on all correct replicas.
\end{lemma}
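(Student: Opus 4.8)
The plan is to treat the compound delivery condition as the conjunction of two independent predicates—the safety check on line~\ref{line:safety-check} and the predecessor check on line~\ref{line:pred-check}—and to show that each evaluates identically on every correct replica. Since a conjunction of two replica-independent predicates is itself replica-independent, this suffices. The guiding observation is that both predicates are functions only of data that is intrinsic to the delivered message $m$ together with the globally-agreed set of invariants, and not of the portion of $\mathcal{M}$ that might legitimately differ between replicas.

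First I would dispatch the safety check. As noted in \S~\ref{sec:implementing-bec}, whether the updates $(\mathit{ins}, \mathit{del})$ are safe with respect to an invariant is a function of only those updates and the invariant itself, with no dependence on the current replica state. The updates $(\mathit{ins}, \mathit{del})$ are a component of $m$, and by the \emph{authenticity} property (Lemma~\ref{lemma:easy-properties}) a correct replica delivers $m$ only if it was genuinely broadcast, so every correct replica that delivers $m$ sees the identical triple $m = ((\mathit{ins}, \mathit{del}), \mathit{hs}, \mathit{sig})$. The set of declared invariants is common knowledge in the system model. Hence the safety predicate returns the same Boolean on all correct replicas.

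The harder part, and the crux of the proof, is line~\ref{line:pred-check}, whose value is phrased in terms of $\mathrm{pred}^*(\mathcal{M}, m)$ and therefore appears to depend on the replica-local set $\mathcal{M}$. I would argue that this dependence is only apparent: the predecessor closure of a fixed message $m$ is in fact fully determined by $m$'s own hash references once the closure is complete. This check runs on delivering $m$, at which point $m$ has already been added to $\mathcal{M}$; by Lemma~\ref{lemma:no-dangling}, every predecessor of a message in a correct replica's $\mathcal{M}$ is also present in $\mathcal{M}$, so $\mathrm{pred}^*(\mathcal{M}, m)$ is the \emph{complete} set of causal ancestors of $m$ rather than a truncated prefix. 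Starting from the hashes in $m$ and following predecessor references transitively, Lemma~\ref{lemma:no-collision} guarantees that each hash resolves to the same unique message on any two correct replicas that both hold $m$. Consequently the set $\{H(m') \mid m' \in \mathrm{pred}^*(\mathcal{M}, m)\}$ is identical across all correct replicas, so for each deletion $(h, r, t) \in \mathit{del}$ the membership test $h \in \{H(m') \mid m' \in \mathrm{pred}^*(\mathcal{M}, m)\}$ yields the same result everywhere, and hence so does the universally-quantified condition.

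Finally I would combine the two parts: since each conjunct evaluates to the same truth value on all correct replicas, so does their conjunction, which is exactly the claim. The main obstacle I anticipate is the predecessor-check argument—specifically, making precise that $\mathrm{pred}^*(\mathcal{M}, m)$ is replica-independent despite $\mathcal{M}$ itself differing between replicas—and the delicate point there is invoking Lemma~\ref{lemma:no-dangling} to rule out the closure being cut short by a missing ancestor, together with Lemma~\ref{lemma:no-collision} to ensure the hash references are resolved consistently.
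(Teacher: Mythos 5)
Your proposal is correct and follows essentially the same route as the paper's proof: the safety check is a function of only the updates in $m$ and the common invariants, and the predecessor check is replica-independent because $\mathrm{pred}^*(\mathcal{M}, m)$ is unambiguously determined by $m$'s hash references, with completeness of the closure and unique hash resolution on correct replicas. Your version merely makes explicit the supporting facts (Lemma~\ref{lemma:no-dangling} and Lemma~\ref{lemma:no-collision}) that the paper's terser argument invokes implicitly, which is a faithful elaboration rather than a different approach.
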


\begin{proof}
The safety check on line~\ref{line:safety-check} depends only on the updates in the message and the invariants, and not on the replica state.
We assume that all correct replicas use the same invariants, so they will agree on whether the updates are safe.

The predecessor check on line~\ref{line:pred-check} depends only on $m$ and the set of its predecessors; this set is the same on all correct replicas, since it is unambiguously determined by $m$'s causal dependency hashes, and a replica ensures that if a message is in $\mathcal{M}$, its predecessors are also in $\mathcal{M}$.
\end{proof}

\begin{lemma}\label{lemma:commutative}
In Algorithm~\ref{fig:algorithm3}, for all replica states $S$, and for all messages $m_1$, $m_2$ such that neither message is a predecessor of the other, delivering $m_1$ first and then $m_2$ has the same effect as delivering $m_2$ first and then $m_1$.
\end{lemma}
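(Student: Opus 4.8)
The plan is to treat the delivery of a message $m$ as a conditional state transformer $f_m$ and reduce the claim to a short set-algebra computation, once the guard conditions have been shown to be order-independent. Writing $m_1 = ((\mathit{ins}_1, \mathit{del}_1), \mathit{hs}_1, \mathit{sig}_1)$ and $m_2 = ((\mathit{ins}_2, \mathit{del}_2), \mathit{hs}_2, \mathit{sig}_2)$, and setting $I_k = \{(H(m_k), \mathit{rel}, \mathit{tuple}) \mid (\mathit{rel}, \mathit{tuple}) \in \mathit{ins}_k\}$ for $k \in \{1,2\}$, the effect of delivering $m_k$ when its guards pass is $f_{m_k}(S) = (S \setminus \mathit{del}_k) \cup I_k$ (line~\ref{line:state-update}).

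First I would argue that the guards on lines~\ref{line:safety-check} and~\ref{line:pred-check} yield the same boolean for $m_1$ regardless of whether $m_2$ has already been delivered, and symmetrically for $m_2$. The safety guard depends only on the updates and invariants, not on order (this is Lemma~\ref{lemma:safety-check}). The predecessor guard for $m_1$ depends only on $\mathrm{pred}^*(\mathcal{M}, m_1)$; since $m_1$ and $m_2$ are concurrent, $m_2$ is not a transitive predecessor of $m_1$, so adding $m_2$ to $\mathcal{M}$ leaves $\mathrm{pred}^*(\mathcal{M}, m_1)$ unchanged. Hence each guard is order-independent, and the two orders agree on which of $m_1, m_2$ actually mutate $S$. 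If either message fails its guard, its delivery is the identity on $S$ and commutativity is immediate; the only substantive case is when both guards pass.

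The crux of the proof is two disjointness facts: $\mathit{del}_1 \cap I_2 = \emptyset$ and $\mathit{del}_2 \cap I_1 = \emptyset$. Consider $\mathit{del}_1 \cap I_2$: every triple in $I_2$ carries the hash component $H(m_2)$, whereas the predecessor guard on line~\ref{line:pred-check} forces every $(h, r, t) \in \mathit{del}_1$ to satisfy $h = H(m')$ for some $m' \in \mathrm{pred}^*(\mathcal{M}, m_1)$. Because $m_2$ is not a predecessor of $m_1$, and because $H$ is collision-resistant so no predecessor of $m_1$ shares $m_2$'s hash, no element of $\mathit{del}_1$ has hash component $H(m_2)$; thus $\mathit{del}_1 \cap I_2 = \emptyset$. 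The symmetric argument gives $\mathit{del}_2 \cap I_1 = \emptyset$. This step — showing that a deletion in one concurrent message can never target a tuple inserted by the other — is where the predecessor check and collision resistance do the real work, and it is the main obstacle.

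Given these facts, a direct computation closes the argument. Using $(X \cup Y) \setminus Z = (X \setminus Z) \cup (Y \setminus Z)$ and $(X \setminus Y) \setminus Z = X \setminus (Y \cup Z)$, and noting $I_1 \setminus \mathit{del}_2 = I_1$ and $I_2 \setminus \mathit{del}_1 = I_2$ from the disjointness, I would show
\[ f_{m_2}(f_{m_1}(S)) = \big(S \setminus (\mathit{del}_1 \cup \mathit{del}_2)\big) \cup I_1 \cup I_2 = f_{m_1}(f_{m_2}(S)), \]
since both compositions reduce to the same symmetric expression. This establishes that delivering $m_1$ then $m_2$ has the same effect on $S$ as delivering $m_2$ then $m_1$.
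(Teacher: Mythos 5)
Your proposal is correct and takes essentially the same route as the paper's own proof: order-independence of the two guards via Lemma~\ref{lemma:safety-check}, the disjointness facts $\mathit{del}_1 \cap I_2 = \emptyset$ and $\mathit{del}_2 \cap I_1 = \emptyset$ derived from the predecessor check on line~\ref{line:pred-check} together with the assumption that neither message precedes the other, and the same closing set-algebra computation. The only differences are cosmetic refinements on your part—making the appeal to collision resistance of $H$ explicit, noting that delivering $m_2$ leaves $\mathrm{pred}^*(\mathcal{M}, m_1)$ unchanged, and exhibiting the symmetric normal form $\bigl(S \setminus (\mathit{del}_1 \cup \mathit{del}_2)\bigr) \cup I_1 \cup I_2$—all of which the paper leaves implicit in its chain of equalities.
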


\begin{proof}
Consider first the conditions on lines~\ref{line:safety-check} and~\ref{line:pred-check} of Algorithm~\ref{fig:algorithm3}; by Lemma~\ref{lemma:safety-check}, these conditions always evaluate to the same value for a given message, regardless of the replica state and the order in which the messages are delivered.
If they evaluate to $\mathsf{false}$ for both $m_1$ and $m_2$, the state $S$ is not modified, so these updates commute.
If the conditions are $\mathsf{false}$ for one message and $\mathsf{true}$ for the other, only one of the message deliveries updates the replica state, so these updates commute.

Now consider the case when the conditions are $\mathsf{true}$ for both messages.
Let the following:
\begin{align*}
    m_1 &= ((\mathit{ins}_1, \mathit{del}_1), \mathit{hs}_1, \mathit{sig}_1) \\
    m_2 &= ((\mathit{ins}_2, \mathit{del}_2), \mathit{hs}_2, \mathit{sig}_2) \\
    \mathit{ins}_1' &= \{(H(m_1), \mathit{rel}, \mathit{tuple}) \mid (\mathit{rel}, \mathit{tuple}) \in \mathit{ins}_1\} \\
    \mathit{ins}_2' &= \{(H(m_2), \mathit{rel}, \mathit{tuple}) \mid (\mathit{rel}, \mathit{tuple}) \in \mathit{ins}_2\}
\end{align*}
Since the condition on line~\ref{line:pred-check} is true, we have
\[ \forall (h,r,t) \in \mathit{del}_1.\; h \in \{H(m') \mid m' \in \mathrm{pred}^*(\mathcal{M}, m_1)\}. \]
Recall that $m_2 \notin \mathrm{pred}^*(\mathcal{M}, m_1)$ by assumption, from which we can deduce that $\forall (h,r,t) \in \mathit{del}_1.\; h \ne H(m_2).$
On the other hand, every element in $\mathit{ins}_2'$ contains $H(m_2)$, so $\mathit{ins}_2'$ cannot have any elements in common with $\mathit{del}_1$, that is, $\mathit{ins}_2' \cap \mathit{del}_1 = \{\}$.
By similar argument, $\mathit{ins}_1' \cap \mathit{del}_2 = \{\}$.

A replica that first delivers $m_1$ and then $m_2$ is in the final state
\begin{align*}
S' &= (((S \setminus \mathit{del}_1) \cup \mathit{ins}_1') \setminus \mathit{del}_2) \cup \mathit{ins}_2' \\
&= (((S \setminus \mathit{del}_1) \setminus \mathit{del}_2) \cup \mathit{ins}_1') \cup \mathit{ins}_2' \quad\text{since } \mathit{ins}_1' \cap \mathit{del}_2 = \{\} \\
&= (((S \setminus \mathit{del}_2) \setminus \mathit{del}_1) \cup \mathit{ins}_2') \cup \mathit{ins}_1' \quad\text{by commutativity} \\
&= (((S \setminus \mathit{del}_2) \cup \mathit{ins}_2') \setminus \mathit{del}_1) \cup \mathit{ins}_1' \quad\text{since } \mathit{ins}_2' \cap \mathit{del}_1 = \{\}
\end{align*}
which equals the final state of a replica that first delivers $m_2$ and then $m_1$.
Therefore, the updates by $m_1$ and $m_2$ commute.
\end{proof}

Note that these proofs do not make any assumptions about the sender of a message, and therefore they hold for any messages, even those sent by a malicious replica.

\begin{theorem}
Assume that the safety check on line~\ref{line:safety-check} of Algorithm~\ref{fig:algorithm3} only allows updates from transactions that are \I-confluent with regard to all of the application's declared invariants.
Then Algorithm~\ref{fig:algorithm3} ensures Byzantine Eventual Consistency, as defined in \S~\ref{sec:bec-definition}, in the Byzantine system model of \S~\ref{sec:system-model}.
\end{theorem}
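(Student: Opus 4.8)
The plan is to verify each of the seven BEC properties in turn, exploiting the fact that Algorithm~\ref{fig:algorithm3} is layered directly on the Byzantine causal broadcast of Appendix~\ref{sec:proof} and that each committed transaction is encoded as exactly one broadcast message. Five of the properties reduce almost immediately to the corresponding guarantees of causal broadcast, combined with the two lemmas already proved. \emph{Authenticity} follows from the \emph{authenticity} of causal broadcast together with unforgeable signatures. \emph{Atomicity} holds because a transaction's entire update set $(\mathit{ins}, \mathit{del})$ travels in a single message and is applied in the one set-update on line~\ref{line:state-update}. \emph{Self-update} follows from \emph{self-delivery} once one checks that a correct replica's own update passes both guards: it is safe by hypothesis, and its deletions reference tuples inserted by causally prior messages, so the predecessor test on line~\ref{line:pred-check} succeeds. \emph{Eventual update} follows from \emph{eventual delivery} and Lemma~\ref{lemma:safety-check}: if a correct replica applied an update $u$ then its message $m$ was delivered and passed the guards, so every correct replica eventually delivers $m$ and, by determinism of the guards, reaches the same verdict and applies $u$. \emph{Causal consistency} follows from \emph{causal order}: because broadcasting sets a message's predecessors to the current heads (line~\ref{line:broadcast-heads}), any update generated or applied before a later update $u_2$ is broadcast becomes a causal predecessor of $m_2$, so all correct replicas deliver it first, and Lemma~\ref{lemma:safety-check} ensures they agree on whether to apply it.

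\emph{Convergence} is the next step, and here the work is done by Lemma~\ref{lemma:commutative}: causally related messages are delivered in a consistent relative order on all correct replicas (\emph{causal order}), while concurrent messages commute, so the state reached is a deterministic function of the \emph{set} of applied updates rather than of the delivery order. I would spell this out by starting from any two delivery orders consistent with the causal order and transforming one into the other by adjacent transpositions of concurrent messages, invoking Lemma~\ref{lemma:commutative} at each swap; two correct replicas with the same applied set therefore reach the same state.

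The main obstacle is \emph{invariant preservation}, and this is the only place where the \I-confluence hypothesis is used. I would prove, by induction on the causal DAG of applied (i.e.\ guard-passing) updates, that the state $S_U$ obtained by applying the updates in any causally closed set $U$ in causal order to the initial empty state satisfies every invariant. The crucial observation is that when a correct replica generates $u$, line~\ref{line:broadcast-heads} makes its entire current message set the predecessors of $m$, so $u$'s generation state is exactly the state of its causal predecessors; since a correct replica commits a transaction only when the result satisfies its invariants, we have $I(S_0)$ and $I(\mathrm{apply}(S_0, u))$ for that predecessor state $S_0$. To conclude $I(\mathrm{apply}(S', u))$ on another replica, whose state $S'$ also includes updates concurrent with $u$, I would peel those concurrent updates off one at a time: each is \I-confluent with $u$ by hypothesis and commutes with it by Lemma~\ref{lemma:commutative}, so a single application of the \I-confluence implication advances the invariant past one concurrent update while preserving both $I(\cdot)$ and $I(\mathrm{apply}(\cdot, u))$.

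The delicate part is that this peeling is itself mutually recursive with the outer induction---establishing $I$ after adding a concurrent update $v$ requires the same argument applied to $v$---so I would organize it as a single strong induction on $|U|$ whose hypothesis simultaneously asserts (a) $I(S_U)$ for every causally closed guard-passing set $U$, and (b) $I(\mathrm{apply}(S_U, w))$ for every guard-passing update $w$ whose predecessors lie in $U$. Faulty replicas cannot interfere: an unsafe update fails the guard on line~\ref{line:safety-check} and is discarded, and by Lemma~\ref{lemma:safety-check} all correct replicas agree on exactly which updates to apply. Lifting the pairwise \I-confluence definition to the whole DAG through this nested induction is the technical heart of the proof; everything else is bookkeeping against the causal-broadcast guarantees.
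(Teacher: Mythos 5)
Your proposal is correct and follows the paper's proof in all essentials: the same layering of the seven BEC properties over the causal broadcast guarantees, the same use of Lemma~\ref{lemma:safety-check} for agreement on guard outcomes, and the same convergence argument by adjacent transpositions of concurrent messages justified by Lemma~\ref{lemma:commutative}. The one place you diverge is the organization of invariant preservation. The paper inducts over the delivery sequence $m_1, \dots, m_n$ at a single correct replica: it splits off the longest suffix of messages concurrent with $m_{n+1}$, uses Lemma~\ref{lemma:commutative} to move $m_{n+1}$ to the earliest causally permitted position, and then swaps it back to the end one message at a time, invoking the pairwise \I-confluence implication at each swap. Your strong induction on causally closed sets $U$ with the two-part hypothesis~(a)/(b) is a reorganization of the same peeling argument, but it is a genuine tightening: the paper's step quietly needs $I(\mathrm{apply}(S, m_{n+1}))$ at the insertion point (and, at each subsequent swap, the corresponding fact for the concurrent message being passed), and your clause~(b) makes this mutual recursion explicit rather than leaving it to the reader, grounding it in the generation state of correct senders and in the state-independent safety check for messages from faulty senders. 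Neither approach buys more generality than the other~--- they prove the same statement from the same two lemmas~--- but your formulation would survive formalization with less repair work, at the cost of a heavier induction hypothesis.
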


\begin{proof}
The \emph{self-update}, \emph{authenticity}, and \emph{causal consistency} properties follow directly from the \emph{self-delivery}, \emph{authenticity}, and \emph{causal order} properties of causal broadcast (\S~\ref{sec:broadcast-properties}), respectively.

The \emph{atomicity} property of BEC follows from the fact that all updates generated by a transaction are encoded into a single message, and the delivery of that message occurs within an \emph{atomic} block on every correct replica (Algorithm~\ref{fig:algorithm}, line~\ref{line:deliver}).

To prove the \emph{eventual update} update property of BEC, we rely upon the \emph{eventual delivery} property of causal broadcast, which ensures that any message delivered by one correct replica will eventually be delivered by all correct replicas.
Moreover, by Lemma~\ref{lemma:safety-check}, one correct replica applies the updates in a message if and only if another correct replica applies it.

To prove the \emph{convergence} property of BEC, assume two correct replicas $p$ and $q$ have delivered the same set of messages.
We must then show that both replicas are in the same state.
Due to the \emph{non-duplication} property of causal broadcast, no message is delivered more than once.
Therefore, the sequence of message deliveries at $p$ is a permutation of the sequence of deliveries at $q$.
Moreover, due to the \emph{causal order} property, if both replicas have delivered $m_1$ and $m_2$, and if $m_1$ is a predecessor of $m_2$, then $m_1$ appears before $m_2$ in both replicas' delivery sequences.

It is therefore possible to permute $p$'s delivery sequence into $q$'s delivery sequence by repeatedly swapping adjacent messages in that sequence, such that neither message is a predecessor of the other~\cite{Gomes:2017gy}.
By Lemma~\ref{lemma:commutative}, swapping two such adjacent message deliveries does not change the replica state after delivering those messages.
Thus, each such swap of adjacent messages leaves the final state of the replica unchanged, and therefore $p$ and $q$ must be in the same state.

Finally, to prove the \emph{invariant preservation} property of BEC we must show that correct replicas are always in a state that satisfies all of the application's declared invariants.
We use proof by induction over the sequence of messages delivered by a correct replica.
The base case is a replica with an empty database, where no messages have been delivered, and where all invariants are satisfied by assumption.
For the inductive step, assume a replica that has delivered messages $m_1, \dots, m_n$ in that order, and that is in a state in which all invariants are satisfied.
We must then show that after delivering one more message, $m_{n+1}$, the invariants continue to be satisfied.

Split the sequence $m_1, \dots, m_n$ into a prefix $m_1, \dots, m_{i-1}$ and a suffix $m_i, \dots, m_n$ such that the suffix is the longest possible suffix in which all messages are concurrent with $m_{n+1}$ (that is, either the prefix is empty, or $m_{i-1}$ is a predecessor of $m_{n+1}$).
By Lemma~\ref{lemma:commutative}, the replica state after delivering $m_1, \dots, m_n, m_{n+1}$ is the same as the state after delivering $m_1, \dots, m_{n+1}, m_n$ in that order.
By repeated pairwise swaps of concurrent messages, that state is the same as after delivering $m_1, \dots, m_{i-1}, m_{n+1}, m_i, \dots, m_n$.
Immediately before $m_i$ is the earliest possible point at which $m_{n+1}$ can be delivered by causal broadcast.

Let $S$ be the state of the replica after delivering $m_1, \dots, m_{i-1}$, and let $I$ be one of the application's declared invariants.
We have $I(S)$ by the inductive hypothesis.
By assumption, the safety check on line~\ref{line:safety-check} of Algorithm~\ref{fig:algorithm3} does not allow updates that would cause the immediate violation of an invariant.
Moreover, by the definition of \I-confluence (\S~\ref{sec:confluence}), if $I(\mathrm{apply}(S, m_i))$ and $I(\mathrm{apply}(S, m_{n+1}))$ for concurrent $m_i$ and $m_{n+1}$, then the invariant is also satisfied after both $m_i$ and $m_{n+1}$ have been delivered: $I(\mathrm{apply}(\mathrm{apply}(S, m_i), m_{n+1}))$.
By repeated swapping of adjacent messages in this order we move $m_{n+1}$ back to the end of the sequence, at each step ensuring that $I$ continues to be satisfied due to \I-confluence.
This shows that $I$ must still be satisfied after delivering $m_1, \dots, m_n, m_{n+1}$, as required:
\[ I(\mathrm{apply}(\mathrm{apply}(\dots\mathrm{apply}(\{\}, m_1), \dots, m_n), m_{n+1})). \]
\end{proof}

\end{document}